\newcommand{\ifAnon}[2]{#2}
\renewcommand{\cref}{\Cref}
\pgfplotsset{compat=newest}
\providecommand{\Description}[1]{}
\newcommand{\qedhere}{\hfill\ensuremath{\square}}
\providecommand{\leftsquigarrow}{%
  \mathrel{\mathpalette\reflect@squig\relax}%
}
\newcommand{\reflect@squig}[2]{%
  \reflectbox{$\m@th#1\rightsquigarrow$}%
}
\let\c@lemma=\c@theorem
\let\c@definition=\c@theorem
\let\c@example=\c@theorem
\newcommand{\morespace}[1]{~{}#1{}~}
\newcommand{\qmorespace}[1]{\quad{}#1{}\quad}
\newcommand{\qqmorespace}[1]{\qquad{}#1{}\qquad}
\newcommand{\eeq}{\morespace{=}}
\newcommand{\lleq}{\morespace{\leq}}
\newcommand{\eexpleq}{\morespace{\expleq}}
\newcommand{\qiff}{\qmorespace{\textnormal{iff}}}
\newcommand{\qimplies}{\qmorespace{\textnormal{implies}}}
\newcommand{\qqand}{\qqmorespace{\textnormal{and}}}
\definecolor{hintgray}{RGB}{92,92,92}
\definecolor{col1}{RGB}{254,195,10}
\definecolor{col2}{RGB}{251,0,6}
\definecolor{col3}{RGB}{252,11,135}
\definecolor{col4}{RGB}{17,139,2}
\definecolor{col5}{RGB}{14,131,254}
\definecolor{col6}{RGB}{82,0,135}
\definecolor{brLightGreen}{HTML}{addd8e}
\definecolor{brLightGray}{HTML}{f0f0f0}
\definecolor{orange}{RGB}{230,159,0}
\definecolor{skyblue}{RGB}{86,180,233}
\definecolor{brBlue}{RGB}{0,114,178}
\definecolor{bluishgreen}{RGB}{0,158,115}
\definecolor{vermillion}{RGB}{213,94,0}
\definecolor{reddishpurple}{RGB}{204,121,167}
\colorlet{heyvlColor}{vermillion!60!black}
\definecolor{prepostColor}{RGB}{0,69,107}
\colorlet{stmtColor}{bluishgreen!50!black}
\newcommand{\gutter}[1]{\text{\color{gray}\footnotesize{}{#1}}\quad}
\newcommand{\lineNumber}[1]{\text{\color{gray}\footnotesize{}{(l.~#1)}}}
\newcommand{\highlightCode}[2]{\hspace{-\fboxsep}\colorbox{#1}{#2}}
\newcommand{\Sup}{\reflectbox{\textnormal{\textsf{\fontfamily{phv}\selectfont S}}}\hspace{.2ex}}
\newcommand{\Inf}{\raisebox{.6\depth}{\rotatebox{-30}{\textnormal{\textsf{\fontfamily{phv}\selectfont \reflectbox{J}}}}\hspace{-.1ex}}}
\newcommand{\quant}[2]{#1.~#2}
\newcommand{\squant}[2]{\ensuremath{\Sup \quant{#1}{#2}}}
\newcommand{\iquant}[2]{\ensuremath{\Inf \quant{#1}{#2}}}
\newcommand{\true}{\mathsf{true}}
\newcommand{\false}{\mathsf{false}}
\newcommand{\impl}{\rightarrow}
\newcommand{\coimpl}{\leftsquigarrow}
\newcommand{\colheylo}[1]{{\color{prepostColor}#1}}
\newcommand{\expa}{\colheylo{\ensuremath{X}}}
\newcommand{\expb}{\colheylo{\ensuremath{Y}}}
\newcommand{\expc}{\colheylo{\ensuremath{Z}}}
\newcommand{\expaP}{\colheylo{\ensuremath{X'}}}
\newcommand{\expbP}{\colheylo{\ensuremath{Y'}}}
\newcommand{\expleq}{\ensuremath{\preceq}}
\newcommand{\expgeq}{\ensuremath{\succeq}}
\newcommand{\expvalidate}[1]{\ensuremath{\triangle\!\left( #1 \right)}}
\newcommand{\expcovalidate}[1]{\ensuremath{\triangledown\!\left( #1 \right)}}
\newcommand{\substBy}[2]{[{#1} \mapsto {#2}]}
\newcommand{\iverson}[1]{\left[{#1}\right]}
\newcommand{\ite}[3]{\mathrm{ite}({#1}, {#2}, {#3})}
\newcommand{\lam}[2]{\lambda {#1}.~{#2}}
\newcommand{\ifThenElse}[3]{
	\begin{cases}{#2}, & \text{if } {#1}\\{#3}, & \text{otherwise}
	\end{cases}
}
\newcommand{\pexp}[2]{#1 \cdot \langle #2 \rangle }
\newcommand{\pexpand}{+}
\newcommand{\sfsymbol}[1]{\textsf{\upshape {#1}}}
\newcommand{\Vars}{\sfsymbol{Vars}\xspace}
\newcommand{\typeof}[2]{\ensuremath{#1 \colon {#2}}}
\newcommand{\pGCL}{\sfsymbol{pGCL}\xspace}
\newcommand{\HeyVL}{\sfsymbol{HeyVL}\xspace}
\newcommand{\HeyLo}{\sfsymbol{HeyLo}\xspace}
\newcommand{\typevar}{\ensuremath{\tau}}
\newcommand{\aexpr}{\ensuremath{a}} %
\newcommand{\bexpr}{\ensuremath{b}} %
\newcommand{\procname}{P} %
\newcommand{\termvar}{\ensuremath{t}}
\newcommand{\Bools}{\mathbb{B}}
\newcommand{\Nats}{\mathbb{N}}
\newcommand{\Ints}{\mathbb{Z}}
\newcommand{\PosReals}{\mathbb{R}_{\geq 0}}
\newcommand{\PosRealsInf}{\mathbb{R}_{\geq 0}^{\infty}}
\newcommand{\States}{\mathsf{States}}
\newcommand{\State}{\sigma}
\newcommand{\symStmt}[1]{\ensuremath{{\color{stmtColor}{#1}}}}
\newcommand{\symSkip}{\symStmt{\texttt{skip}}}
\newcommand{\symAssign}{\coloneqq}
\newcommand{\symRasgn}{\colonapprox}
\newcommand{\symSemi}{\symStmt{\texttt{;}~}}
\newcommand{\symIf}{\symStmt{\texttt{if}}}
\newcommand{\symElse}{\symStmt{\texttt{else}}}
\newcommand{\symAssert}{\symStmt{\texttt{assert}}}
\newcommand{\symAssume}{\symStmt{\texttt{assume}}}
\newcommand{\symDemonic}{\symStmt{\symIf~(\sqcap)}}
\newcommand{\symAngelic}{\symStmt{\symIf~(\sqcup)}}
\newcommand{\symHavoc}{\symStmt{\texttt{havoc}}}
\newcommand{\symWhile}{\symStmt{\texttt{while}}}
\newcommand{\symObserve}{\symStmt{\texttt{observe}}}
\newcommand{\symTick}{\symStmt{\texttt{reward}}}
\newcommand{\symProc}{\symStmt{\texttt{proc}}}
\newcommand{\symcoProc}{\symStmt{\texttt{coproc}}}
\newcommand{\symPost}{\symStmt{\texttt{post}}}
\newcommand{\symValidate}{\symStmt{\texttt{validate}}}
\newcommand{\symVar}{\symStmt{\texttt{var}}}
\newcommand{\symAnnotate}{\symStmt{\texttt{@}}}
\newcommand{\blockStart}{\ensuremath{\{}}
\newcommand{\blockEnd}{\ensuremath{\}}}
\newcommand{\stmt}{{\color{heyvlColor}\ensuremath{S}}}
\newcommand{\stmtP}{{\color{heyvlColor}\ensuremath{S'}}}
\newcommand{\stmtC}{{\color{heyvlColor}\ensuremath{S_C}}}
\newcommand{\stmtOne}{{\color{heyvlColor}\ensuremath{S_1}}}
\newcommand{\stmtOneP}{{\color{heyvlColor}\ensuremath{S_1'}}}
\newcommand{\stmtTwo}{{\color{heyvlColor}\ensuremath{S_2}}}
\newcommand{\stmtTwoP}{{\color{heyvlColor}\ensuremath{S_2'}}}
\newcommand{\stmtN}[1]{{\color{heyvlColor}\ensuremath{S_{#1}}}}
\newcommand{\stmtI}{{\stmtN{i}}}
\newcommand{\stmtITransformed}{{\color{heyvlColor}\ensuremath{\tilde{\stmtI}}}}
\newcommand{\stmtTransformed}{{\color{heyvlColor}\ensuremath{\tilde{\stmt}}}}
\newcommand{\stmtSkip}{\symSkip}
\newcommand{\stmtAsgn}[2]{\ensuremath{{#1} \symAssign {#2}}}
\newcommand{\stmtDeclInit}[3]{\symVar~\ensuremath{{#1}\colon {#2} \symRasgn {#3}}}
\newcommand{\stmtRasgn}[2]{\ensuremath{{#1} \symRasgn {#2}}}
\newcommand{\stmtSeq}[2]{\ensuremath{{#1}\symSemi {#2}}}
\newcommand{\stmtIf}[3]{\ensuremath{\symIf~({#1})~\{ {#2} \}~\symElse~\{ {#3} \}}}
\newcommand{\stmtIfStart}[1]{\ensuremath{\symIf~({#1})~\{ }}
\newcommand{\stmtProb}[3]{\ensuremath{\{ {#2} \} ~[{#1}]~ \{ {#3} \}}}
\newcommand{\stmtWhile}[2]{\ensuremath{\symWhile~({#1})~\{ {#2} \}}}
\newcommand{\headerWhile}[1]{\ensuremath{\symWhile~({#1})}}
\newcommand{\stmtHavoc}[1]{\ensuremath{\symHavoc~{#1}}}
\newcommand{\stmtAssert}[1]{\ensuremath{\symAssert~{#1}}}
\newcommand{\stmtAssume}[1]{\ensuremath{\symAssume~{#1}}}
\newcommand{\stmtSpec}[3]{{#1} : [{#2},~ {#3}]}
\newcommand{\stmtAnnotate}[3]{\ensuremath{\symAnnotate{}{\symStmt{\texttt{#1}}}({#2})~{#3}}}
\newcommand{\stmtDemonicStart}{\ensuremath{\symDemonic~\{ }}
\newcommand{\stmtElseStart}{\ensuremath{\symElse~\{ }}
\newcommand{\stmtAngelicStart}{\ensuremath{\symAngelic~\{ }}
\newcommand{\stmtDemonic}[2]{\ensuremath{\stmtDemonicStart{#1} \}~\stmtElseStart {#2} \}}}
\newcommand{\stmtAngelic}[2]{\ensuremath{\stmtAngelicStart{#1} \}~\stmtElseStart {#2} \}}}
\newcommand{\stmtObserve}[1]{\ensuremath{\symObserve~{#1}}}
\newcommand{\stmtTick}[1]{\ensuremath{\symTick~{#1}}}
\newcommand{\stmtValidate}{\ensuremath{\symValidate}}
\newcommand{\Havoc}[1]{\ensuremath{\symHavoc~{#1}}}
\newcommand{\Assert}[1]{\ensuremath{\symAssert~{#1}}}
\newcommand{\Assume}[1]{\ensuremath{\symAssume~{#1}}}
\newcommand{\Validate}{\ensuremath{\stmtValidate}}
\newcommand{\symUp}{\symStmt{\texttt{co}}}
\newcommand{\coHavoc}[1]{\ensuremath{\symUp\symHavoc~{#1}}}
\newcommand{\coAssert}[1]{\ensuremath{\symUp\symAssert~{#1}}}
\newcommand{\coAssume}[1]{\ensuremath{\symUp\symAssume~{#1}}}
\newcommand{\coValidate}{\ensuremath{\symUp\stmtValidate}}
\newcommand{\exprTrue}{\ensuremath{\texttt{true}}}
\newcommand{\exprFalse}{\ensuremath{\texttt{false}}}
\newcommand{\exprFlip}[1]{\ensuremath{\symStmt{\texttt{flip}}({#1})}}
\newcommand{\eval}[1]{\llbracket{#1}\rrbracket}
\newcommand{\evalState}[1]{\eval{#1}(\State)}
\newcommand{\evalStateSubstBy}[3]{\eval{#1}(\State\substBy{#2}{#3})} %
\newcommand{\exprIte}[3]{\mathrm{ite}({#1},~{#2},~{#3})}
\newcommand{\symWp}{\sfsymbol{wp}}
\newcommand{\symVc}{\sfsymbol{vp}}
\renewcommand{\wp}[1]{\symWp\llbracket{#1}\rrbracket}
\newcommand{\vc}[1]{\symVc\llbracket{#1}\rrbracket}
\newcommand{\symWlp}{\sfsymbol{wlp}}
\newcommand{\wlp}[1]{\symWlp\llbracket{#1}\rrbracket}
\newcommand{\symCwp}{\sfsymbol{cwp}}
\newcommand{\cwp}[1]{\symCwp\llbracket{#1}\rrbracket}
\newcommand{\Expectations}{\mathbb{E}}
\newcommand{\embed}[1]{\scalebox{0.85}{\textnormal{\textsf{?}}}({#1})}
\newcommand{\lowerTriple}[4][]{\langle {\colheylo{#2}} \rangle_{\expleq}^{#1}~{#3}~\langle {\colheylo{#4}} \rangle}
\newcommand{\upperTriple}[4][]{\langle {\colheylo{#2}} \rangle_{\expgeq}^{#1}~{#3}~\langle {\colheylo{#4}} \rangle}
\newcommand{\intersem}[1]{\color{hintgray}{//~#1}}
\newcommand{\slice}{\ensuremath{\stmtP}}
\newcommand*\circled[1]{\tikz[baseline=(char.base)]{
		\node[shape=circle,draw,inner sep=0.5pt] (char) {#1};}}
\newcommand{\stmtNo}[1]{\textnormal{\circled{#1}}}
\newcommand{\varEnabled}[1]{enabled_{#1}}
\newcommand{\ParkIndCondPre}{\ensuremath{\mathsf{PI}_{pre}}}
\newcommand{\ParkIndCondInd}{\ensuremath{\mathsf{PI}_{ind}}}
\newcommand{\ParkIndCondPost}{\ensuremath{\mathsf{PI}_{post}}}
\newcommand{\sliced}[1]{\text{\sout{\transparent{0.5}{\ensuremath{#1}}}}}
\newcommand{\sliceMethod}[1]{\texttt{#1}\xspace}
\newlength{\scatterplotsize}
\newlength{\numberruntimeplotheight}
\definecolor{barplotcolora}{HTML}{1b9e77}
\definecolor{barplotcolorb}{HTML}{d95f02}
\definecolor{barplotcolorc}{HTML}{7570b3}
\definecolor{barplotcolord}{HTML}{e7298a}
\tikzset{slicingMethod/.code={%
			\ifthenelse{\equal{#1}{first_cex}}{\tikzset{red, only marks, mark=x, mark size=1.5pt, line width=1pt}}{}%
			\ifthenelse{\equal{#1}{optimal_cex}}{\tikzset{blue, only marks, mark=+, mark size=1.5pt, line width=1pt}}{}%

			\ifthenelse{\equal{#1}{core}}{\tikzset{red, only marks, mark=*, mark size=1.5pt, line width=1pt}}{}%
			\ifthenelse{\equal{#1}{mus}}{\tikzset{blue, only marks, mark=+, mark size=1.5pt, line width=1pt}}{}%
			\ifthenelse{\equal{#1}{sus}}{\tikzset{green, only marks, mark=x, mark size=1.5pt, line width=1pt}}{}%
			\ifthenelse{\equal{#1}{exists_forall}}{\tikzset{orange, only marks, mark=o, mark size=1.5pt, line width=1pt}}{}%
		}}
\tikzset{slicingMethodBar/.code={%
			\ifthenelse{\equal{#1}{first_cex}}{\tikzset{red, only marks, mark=x, mark size=1.5pt, line width=1pt}}{}%
			\ifthenelse{\equal{#1}{optimal_cex}}{\tikzset{blue, only marks, mark=+, mark size=1.5pt, line width=1pt}}{}%

			\ifthenelse{\equal{#1}{core}}{\tikzset{barplotcolora,fill=barplotcolora}}{}%
			\ifthenelse{\equal{#1}{mus}}{\tikzset{barplotcolorb,fill=barplotcolorb}}{}%
			\ifthenelse{\equal{#1}{sus}}{\tikzset{barplotcolorc,fill=barplotcolorc}}{}%
			\ifthenelse{\equal{#1}{exists_forall}}{\tikzset{barplotcolord,fill=barplotcolord}}{}%
		}}
\newcommand{\numberruntimeplot}[7]{%
	\begin{tikzpicture}
		\begin{axis}[
				width=\scatterplotsize,
				height=\numberruntimeplotheight,
				ymin=0,
				ymax=#7,
				axis y line=left,
				ytick={0,5,10,15,20},
				extra y ticks = {25},
				extra y tick labels = {OOR},
				xmin=1,
				xmax=80000,
				xtick={10,100,1000},
				axis x line=bottom,
				xmode=log,
				extra x ticks = {1, 30000},
				extra x tick labels = {$\leq 1$, OOR},
				ymode=normal,
				ylabel= #4,
				xlabel= #5,
				xlabel style={font=\scriptsize},
				ylabel style={yshift=-6pt, font=\scriptsize},
				xticklabel style={font=\scriptsize},
				yticklabel style={font=\scriptsize},
				legend columns=\legendcols,
				legend style={\legendstyle, font=\scriptsize,
						nodes={scale=1, transform shape},inner sep=2pt},
				legend cell align={left},
				table/col sep=comma,
			]
			\foreach \slicingMethod in {#2}{%
					\edef\loopbody{
						\noexpand\addplot[slicingMethod=\slicingMethod ] table [
								x expr=\noexpand\thisrow{\slicingMethod_slice_time} > 0 ? \noexpand\thisrow{\slicingMethod_slice_time} : 1,
								y expr=\noexpand\thisrow{\slicingMethod_slice_size} > 10000 ? 25 : \noexpand\thisrow{\slicingMethod_slice_size},
								col sep=comma] {#1};
					}
					\loopbody
				}
			\draw[densely dotted] (axis cs: 10,0) -- (axis cs: 10,#7);
			\draw[densely dotted] (axis cs: 100,0) -- (axis cs: 100,#7);
			\draw[densely dotted] (axis cs: 1000,0) -- (axis cs: 1000,#7);
			\draw[densely dotted] (axis cs: 30000,0) -- (axis cs: 30000,#7);
			\draw[densely dotted] (axis cs: 0,25) -- (axis cs: 30000,25);
			\legend{#3}
		\end{axis}
	\end{tikzpicture}%
}
\newcommand{\relativenumberruntimeplot}[7]{%
	\begin{tikzpicture}
		\begin{axis}[
				width=\scatterplotsize,
				height=\numberruntimeplotheight,
				ymin=0,
				ymax=1.3,
				axis y line=left,
				ytick={0,0.2,...,1.0},
				yticklabel={\pgfmathparse{\tick*100}\pgfmathprintnumber{\pgfmathresult}\%},
				extra y ticks = {1.2},
				extra y tick labels = {OOR},
				xmin=1,
				xmax=80000,
				xtick={10,100,1000},
				axis x line=bottom,
				xmode=log,
				extra x ticks = {1, 30000},
				extra x tick labels = {$\leq 1$, OOR},
				ymode=normal,
				ylabel= #4,
				xlabel= #5,
				xlabel style={font=\scriptsize},
				ylabel style={yshift=-6pt, font=\scriptsize},
				xticklabel style={font=\scriptsize},
				yticklabel style={font=\scriptsize},
				legend columns=\legendcols,
				legend style={\legendstyle, font=\scriptsize,
						nodes={scale=1, transform shape},inner sep=2pt},
				legend cell align={left},
				table/col sep=comma,
			]
			\foreach \slicingMethod in {#2}{%
					\edef\loopbody{
						\noexpand\addplot[slicingMethod=\slicingMethod ] table [
								x expr=\noexpand\thisrow{\slicingMethod_slice_time} > 0 ? \noexpand\thisrow{\slicingMethod_slice_time} : 1,
								y expr=\noexpand\thisrow{\slicingMethod_slice_size} > 10000 ? 1.2 : (\noexpand\thisrow{\slicingMethod_slice_size}/\noexpand\thisrow{original_size}),
								col sep=comma] {#1};
					}
					\loopbody
				}
			\draw[densely dotted] (axis cs: 10,0) -- (axis cs: 10,#7);
			\draw[densely dotted] (axis cs: 100,0) -- (axis cs: 100,#7);
			\draw[densely dotted] (axis cs: 1000,0) -- (axis cs: 1000,#7);
			\draw[densely dotted] (axis cs: 30000,0) -- (axis cs: 30000,#7);
			\draw[densely dotted] (axis cs: 0,1.2) -- (axis cs: 30000,1.2);
			\legend{#3}
		\end{axis}
	\end{tikzpicture}%
}
\newcommand{\relativeslicesizeruntimeplot}[7]{%

	\begin{tikzpicture}
		\begin{groupplot}[group style = {
						group size = 1 by 2,
						vertical sep = 15},
				width = \linewidth,
				tick label style={font=\scriptsize},
				typeset ticklabels with strut,
				enlarge y limits=false,
				legend columns=\legendcols,
				legend style={\legendstyle, font=\scriptsize,
						nodes={scale=1, transform shape},inner sep=2pt},
				legend cell align={left},
			]
			\nextgroupplot[
				xmin=0, xmax=21,
				ymin=0, ymax=80000,
				height = 0.31\linewidth,
				ymode = log,
				axis y line* = left,
				grid=both,
				ybar = .04cm,
				bar width = 1.5pt,
				enlarge x limits = {value = 1},
				xticklabel=\empty,
				ytick={0.1,10,100, 1000, 10000, 30000},
				yticklabels={0,10, $10^2$, $10^3$, $10^4$, TO},
				xtick distance=1,
			]

			\def\slicingMethod{core}
			\edef\loopbody{
				\noexpand\addplot[slicingMethodBar=\slicingMethod ] table [
						x expr=\noexpand\thisrow{id},
						y expr=\noexpand\thisrow{\slicingMethod_slice_time} > 0 ? \noexpand\thisrow{\slicingMethod_slice_time} : 1,
						col sep=comma] {#1};
			}
			\loopbody

			\def\slicingMethod{mus}
			\edef\loopbody{
				\noexpand\addplot[slicingMethodBar=\slicingMethod ] table [
						x expr=\noexpand\thisrow{id},
						y expr=\noexpand\thisrow{\slicingMethod_slice_time} > 0 ? \noexpand\thisrow{\slicingMethod_slice_time} : 1,
						col sep=comma] {#1};
			}
			\loopbody

			\def\slicingMethod{sus}
			\edef\loopbody{
				\noexpand\addplot[slicingMethodBar=\slicingMethod ] table [
						x expr=\noexpand\thisrow{id},
						y expr=\noexpand\thisrow{\slicingMethod_slice_time} > 0 ? \noexpand\thisrow{\slicingMethod_slice_time} : 1,
						col sep=comma] {#1};
			}
			\loopbody

			\def\slicingMethod{exists_forall}
			\edef\loopbody{
				\noexpand\addplot[slicingMethodBar=\slicingMethod ] table [
						x expr=\noexpand\thisrow{id},
						y expr=\noexpand\thisrow{\slicingMethod_slice_time} > 0 ? \noexpand\thisrow{\slicingMethod_slice_time} : 1,
						col sep=comma] {#1};
			}
			\loopbody

			\nextgroupplot[
				xmin=0, xmax=21,
				ymin=-1.3, ymax=0,
				width = \linewidth, height = 0.25\linewidth,
				axis y line* = left,
				grid=both,
				ybar = .04cm,
				bar width = 1.5pt,
				enlarge x limits = {value = 1},
				ytick={0, -0.5, -1, -1.2},
				yticklabels={0\%, 50\%, 100\%, TO},
				xticklabel style={rotate=-45, anchor=west, font=\scriptsize\ttfamily},
				xtick distance=1,
				xticklabels={0,,navarro20\_page31,
						gehr18\_1,
						navarro20\_ex4\_3,
						navarro20\_ex4\_5,
						navarro20\_ex5\_8,
						navarro20\_figure7,
						navarro20\_figure8,
						2drwalk,
						bayesian\_network,
						prspeed,
						rdspeed,
						rdwalk,
						sprdwalk,
						unif\_gen4\_wlp,
						burglar\_alarm\_wp,
						park\_1,
						park\_5,
						park\_6,
						algorithm\_r,
						bn\_passified,}
			]

			\def\slicingMethod{core}
			\edef\loopbody{
				\noexpand\addplot[slicingMethodBar=\slicingMethod ] table [
						x expr=\noexpand\thisrow{id},
						y expr=( \noexpand\thisrow{\slicingMethod_slice_size} > 10000 ? -1.2 : ((-(\noexpand\thisrow{\slicingMethod_slice_size})/(\noexpand\thisrow{original_size})))),
						col sep=comma] {#1};
			}
			\loopbody

			\def\slicingMethod{mus}
			\edef\loopbody{
				\noexpand\addplot[slicingMethodBar=\slicingMethod ] table [
						x expr=\noexpand\thisrow{id},
						y expr=( \noexpand\thisrow{\slicingMethod_slice_size} > 10000 ? -1.2 : ((-(\noexpand\thisrow{\slicingMethod_slice_size})/(\noexpand\thisrow{original_size})))),
						col sep=comma] {#1};
			}
			\loopbody
			\def\slicingMethod{sus}
			\edef\loopbody{
				\noexpand\addplot[slicingMethodBar=\slicingMethod ] table [
						x expr=\noexpand\thisrow{id},
						y expr=( \noexpand\thisrow{\slicingMethod_slice_size} > 10000 ? -1.2 : ((-(\noexpand\thisrow{\slicingMethod_slice_size})/(\noexpand\thisrow{original_size})))),
						col sep=comma] {#1};
			}
			\loopbody
			\def\slicingMethod{exists_forall}
			\edef\loopbody{
				\noexpand\addplot[slicingMethodBar=\slicingMethod ] table [
						x expr=\noexpand\thisrow{id},
						y expr=( \noexpand\thisrow{\slicingMethod_slice_size} > 10000 ? -1.2 : ((-(\noexpand\thisrow{\slicingMethod_slice_size})/(\noexpand\thisrow{original_size})))),
						col sep=comma] {#1};
			}
			\loopbody

		\end{groupplot}
	\end{tikzpicture}
}
\newcommand{\relativeslicesizeruntimeplotscatter}[7]{%
	\begin{tikzpicture}
		\begin{axis}[
				width=\scatterplotsize,
				height=\numberruntimeplotheight,
				ymin=0,
				ymax=1.3,
				axis y line=left,
				ytick={0,0.2,...,1.0},
				yticklabel={\pgfmathparse{\tick*100}\pgfmathprintnumber{\pgfmathresult}\%},
				extra y ticks = {1.2},
				extra y tick labels = {OOR},
				xmin=1,
				xmax=80000,
				xtick={10,100,1000},
				axis x line=bottom,
				xmode=log,
				extra x ticks = {1, 30000},
				extra x tick labels = {$\leq 1$, OOR},
				ymode=normal,
				ylabel= #4,
				xlabel= #5,
				xlabel style={font=\scriptsize},
				ylabel style={yshift=-6pt, font=\scriptsize},
				xticklabel style={font=\scriptsize},
				yticklabel style={font=\scriptsize},
				legend columns=\legendcols,
				legend style={\legendstyle, font=\scriptsize,
						nodes={scale=1, transform shape},inner sep=2pt},
				legend cell align={left},
				table/col sep=comma,
			]
			\foreach \slicingMethod in {#2}{%
					\edef\loopbody{
						\noexpand\addplot[slicingMethod=\slicingMethod ] table [
								x expr=\noexpand\thisrow{\slicingMethod_slice_time} > 0 ? \noexpand\thisrow{\slicingMethod_slice_time} : 1,
								y expr=\noexpand\thisrow{sus_slice_size} > 10000 ? -10 : ( \noexpand\thisrow{\slicingMethod_slice_size} > 10000 ? 1.2 : ((\noexpand\thisrow{original_size}-\noexpand\thisrow{sus_slice_size} < 1) ? 0 : ((\noexpand\thisrow{original_size}-\noexpand\thisrow{\slicingMethod_slice_size})/(\noexpand\thisrow{original_size}-\noexpand\thisrow{sus_slice_size})))),
								col sep=comma] {#1};
					}
					\loopbody
				}
			\draw[densely dotted] (axis cs: 10,0) -- (axis cs: 10,#7);
			\draw[densely dotted] (axis cs: 100,0) -- (axis cs: 100,#7);
			\draw[densely dotted] (axis cs: 1000,0) -- (axis cs: 1000,#7);
			\draw[densely dotted] (axis cs: 30000,0) -- (axis cs: 30000,#7);
			\draw[densely dotted] (axis cs: 0,1.2) -- (axis cs: 30000,1.2);
			\legend{#3}
		\end{axis}
	\end{tikzpicture}%
}
\newcommand{\tikzxmark}{%
	\tikz[scale=0.23] {
		\draw[line width=0.7,line cap=round] (0,0) to [bend left=6] (1,1);
		\draw[line width=0.7,line cap=round] (0.2,0.95) to [bend right=3] (0.8,0.05);
	}}
\newcommand{\tikzcmark}{%
	\tikz[scale=0.23] {
		\draw[line width=0.7,line cap=round] (0.25,0) to [bend left=10] (1,1);
		\draw[line width=0.8,line cap=round] (0,0.35) to [bend right=1] (0.23,0);
	}}
\tiny\color{darkgray},
\lstdefinelanguage{HeyVL}{
	morekeywords={var,while,if,else,@invariant,@slice_verify,@slice_error,Int,Bool,EUReal,UInt,proc,coproc,pre,post,flip,coassume,assume,assert,coassert,\cup,cohavoc,havoc},
	morecomment=[l]{//},
	sensitive=true
}
\definecolor{darkred}{rgb}{0.8, 0.0, 0.0}
\newcommand{\marknewtext}[1]{#1}
\begin{document}

\title{Error Localization, Certificates, and Hints for Probabilistic Program Verification via Slicing (Extended Version)}
\titlerunning{Diagnostics for Probabilistic Program Verification}

\ifAnon{
	\author{Anonymous\\Authors}%
	\authorrunning{Anonymous}
	\institute{\phantom{ }\\\phantom{ }}
}{
	\author{Philipp Schröer\inst{1}\orcidID{0000-0002-4329-530X} \and
		Darion Haase\inst{1}\orcidID{0000-0001-5664-6773} \and
		Joost-Pieter Katoen\inst{1}\orcidID{0000-0002-6143-1926}
	}%
	\authorrunning{P. Schröer et al.}
	\institute{RWTH Aachen University, Germany \\
		\email{\{phisch, darion.haase, katoen\}@cs.rwth-aachen.de}}
}

\maketitle

\begin{abstract}
    This paper focuses on effective user diagnostics generated during
    the deductive verification of probabilistic programs. Our key principle
    is based on providing slices for (1) error reporting,
    (2) proof simplification, and (3) preserving successful verification results.
    By formally defining these different notions on HeyVL, an existing quantitative intermediate
    verification language (IVL), our concepts (and implementation) can be
    used to obtain diagnostics for a range of probabilistic
    programming languages. Slicing for error reporting is a novel notion of error localization for quantitative assertions.
    We demonstrate slicing-based diagnostics on a variety of
    proof rules such as quantitative versions of the specification statement
    and invariant-based loop rules, and formally prove the correctness of
    specialized error messages and verification hints.

    We implemented our user diagnostics into the deductive verifier Caesar.
    Our novel implementation --- called \emph{Brutus} --- can search for slices which do or do not verify, corresponding to each of the three diagnostic notions.
    For error reporting (1), it exploits a binary search-based algorithm that minimizes error-witnessing slices.
    To solve for slices that verify (2 and 3), we empirically compare different algorithms based on unsatisfiable cores, minimal unsatisfiable subset enumeration, and a direct SMT encoding of the slicing problem.
    Our empirical evaluation of Brutus on existing and new benchmarks shows that we can find slices that
    are both small and informative.
\end{abstract}

\section{Introduction}

\subsection{Verification of Probabilistic Programs}

\paragraph{Probabilistic Programs.}
Probabilistic programs are programs that contain random assignments and may contain conditioning.
They have been used to reason about randomized algorithms or models of real-world systems that deal with uncertainty.
Recently, probabilistic programs have gained popularity in the context of machine learning as a way to formally reason about the behavior of AI models~\cite{DBLP:journals/jmlr/BinghamCJOPKSSH19,DBLP:conf/nips/TranHMSVR18,JSSv076i01}.

\paragraph{Deductive Verification.}
Typically, probabilistic programs are analyzed using advanced simulation techniques such as Markov Chain Monte Carlo~\cite{DBLP:journals/ml/AndrieuFDJ03}.
This provides only statistical guarantees and has certain limitations, e.g., possible non-termination of the simulation method for programs with an infinite expected runtime.
To obtain hard guarantees, formal verification of probabilistic programs has received quite some attention~\cite{DBLP:journals/pacmpl/MoosbruggerSBK22,DBLP:journals/pacmpl/MajumdarS25, DBLP:journals/pacmpl/ChatterjeeGMZ24, DBLP:conf/cav/ChakarovS13, DBLP:journals/jacm/KaminskiKMO18}.
This paper considers deductive verification of discrete probabilistic programs based on weakest precondition reasoning à la Dijkstra.
The program is associated with a specification, consisting of (quantitative) pre- and postconditions, with the verification task to establish that the program fulfills the specification.

\begin{wrapfigure}[8]{r}{.46\textwidth}
    \vspace*{-2.9em}
    \begin{spreadlines}{0ex}
        $
            \begin{aligned}
                \gutter{1}    & \texttt{let}~s0choice = \texttt{flip}~0.5~\texttt{in}          \\
                \gutter{2}    & \texttt{if}~s0choice~\texttt{then}~\texttt{true}~\texttt{else} \\
                \gutter{3}    & \texttt{let}~drop = \texttt{flip}~0.0005~\texttt{in}           \\
                \gutter{4}    & \quad \neg drop                                                \\
                \gutter{Goal} & \Pr(\mathtt{result} = \texttt{true}) \geq 99\%
            \end{aligned}
        $
    \end{spreadlines}
    \caption{Dice program $\procname$ modelling the packet delivery reliability problem.}
    \label{fig:heyvl-encoding-task}
\end{wrapfigure}

As an example, \Cref{fig:heyvl-encoding-task} shows a Dice~\cite{DBLP:journals/pacmpl/HoltzenBM20} program $\procname$ representing a network packet delivery problem, adapted from~\cite[Fig.~12]{DBLP:conf/pldi/GehrMTVWV18}.
The goal is to analyze the delivery reliability of a packet sent from host $s_0$ to $s_3$ in a network with diamond topology.
Host $s_0$ uses probabilistic load balancing to forward the packet to $s_1$ or $s_2$ with equal probability, modeled through a fair coin flip for $s0choice$ in $\procname$.
Host $s_1$ has a perfectly reliable channel to $s_3$, while the channel from $s_2$ to $s_3$ has a probability of $0.05\%$ to drop the message.
In $\procname$, a biased coin is flipped to determine whether the message gets dropped, if it is transmitted over $s_2$.
Finally, the program returns whether the transmission to the destination $s_3$ is successful.
The objective for verification: checking that the probability of a successful transmission is at least $99\%$.

\paragraph{Automation.}
Reasoning about probabilistic programs is challenging because of possible undesirable behaviors that only occur with low (or even zero) probabilities.
Classical program verification would falsely reject probability zero failures, while many analyses of probabilistic programs often allow a certain leeway, e.g.\ by allowing errors with probability zero.
In addition to verifying that a program has the correct output \emph{in expectation}, properties such as expected runtimes, expected energy consumption, or expected privacy leakage can be of interest too.
A shift to quantitative verification approaches can account for these properties.
Instead of using Boolean-valued conditions, the specification is expressed through quantitative expectations.

To formally reason about probabilistic programs, we use a quantitative intermediate verification language (IVL) named \HeyVL{}, the basis of the deductive verifier Caesar~\cite{DBLP:journals/pacmpl/SchroerBKKM23}.
An IVL allows to decouple the verification process from different possible input languages, and is a common approach in classical deductive program verification, as done by e.g.\ Dafny with Boogie~\cite{DBLP:conf/lpar/Leino10,leino2008boogie}, or Prusti with Viper~\cite{DBLP:conf/nfm/AstrauskasBFGMM22,DBLP:conf/vmcai/0001SS16}.
\HeyVL{} is expressive enough to encode a wide range of probabilistic programs and their properties.
It features quantitative verification statements that generalize classical ones, i.e. $\symAssume$ and $\symAssert$ statements, to represent quantitative properties.
In addition, an extendable catalogue of proof rules enables different ways to reason about loops.

\HeyVL{}'s semantics is based on weakest pre-expectation transformers, i.e.\ basically non-negative real (or $\infty$)-valued random variables~\cite{DBLP:series/mcs/McIverM05,DBLP:phd/dnb/Kaminski19}.
The intuition is that $\symAssume$ statements introduce proof obligations on the expected value of expressions provided in $\symAssert$ statements.
Verification queries about lower and upper bounds result in a verification condition, typically an inequality between expectations, that needs to be proven.

\begin{wrapfigure}[17]{r}{.35\textwidth}
    \vspace*{-2.4em}
    \begin{spreadlines}{0ex}
        $
            \begin{aligned}
                \gutter{1}  & \highlightCode{col2!20}{\stmtAssume{0.99}}                          \\
                \gutter{2}  & \stmtRasgn{s0choice}{\exprFlip{0.5}}                                \\
                \gutter{3}  & \symDemonic~\{                                                      \\
                \gutter{4}  & \quad \highlightCode{brBlue!20}{\stmtAssume{\embed{s0choice}}}      \\
                \gutter{5}  & \quad \stmtAsgn{delivered}{\exprTrue}                               \\
                \gutter{6}  & \}~\symElse~\{                                                      \\
                \gutter{7}  & \quad \highlightCode{brBlue!20}{\stmtAssume{\embed{\neg s0choice}}} \\
                \gutter{8}  & \quad \stmtRasgn{drop}{\exprFlip{0.0005}}                           \\
                \gutter{9}  & \quad \stmtAsgn{delivered}{\neg drop}                               \\
                \gutter{10} & \}                                                                  \\
                \gutter{11} & \highlightCode{col2!20}{\stmtAssert{\iverson{delivered}}}
            \end{aligned}
        $
    \end{spreadlines}
    \caption{\HeyVL{} encoding of the verification query for $\procname$ from~\Cref{fig:heyvl-encoding-task}.}
    \label{fig:heyvl-encoding-encoding}
\end{wrapfigure}

For verification with Caesar, the program is first encoded into \HeyVL.
The identified pre- and post-expectation are added with initial \symAssume{}, resp.\ final \symAssert{} statements.
The program is further annotated with information on the proof rules used to reason about, e.g., invariants for loops.

The encoding of the network delivery problem from~\Cref{fig:heyvl-encoding-task} is given in~\Cref{fig:heyvl-encoding-encoding}.
The auxiliary variable $delivered$ represents the output value of $\procname$.
The specification is encoded as a verification condition, highlighted in red.
It states that the probability of the final value of $delivered$ meets the lower bound of $99\%$.
Iverson brackets $\iverson{\cdots}$ are used to map a Boolean result to the real values $0$ or $1$.
The \texttt{if}-\texttt{then}-\texttt{else} statement is encoded using a demonic choice $\symDemonic$ and two assumptions, highlighted in blue.
Boolean values are embedded with $\embed{\cdots}$ to $0$ or $\infty$, with the effect that the $\symAssume$-statements effectively discard any proof obligations on the expected value of the branch if the condition is $\exprFalse$, and keeping it unaffected otherwise.

\subsection{Focus of this Paper}

Deductive program verification is an iterative process where the user writes and refines (1) the specification, (2) the proof and proof rules, or (3) even the program under consideration.
Understanding the behavior of probabilistic programs often requires intricate reasoning about the probabilities of different events and even simple programs can require mathematical reasoning about series, exponentials, or limits.
Ideally, the verification process is guided by feedback from the verifier to assist with this complexity.

While existing tools provide automated ways to confirm that a proof is correct, there is a lack of assistance during the verification phase.
For instance, Caesar translates the generated verification condition into an SMT-formula that is discharged to an SMT solver.
In case the query fails, the solver only provides limited information by a counterexample, which corresponds to an initial state.
In the presence of quantitative control flow, it is unclear how this information can result in insightful diagnostic feedback to the user.

This paper is focused on aiding users in the verification of probabilistic programs through appropriate diagnostic messages.
More specifically, we tackle the following three questions:
\begin{enumerate}
    \item \emph{Error Localization:} If a program's quantitative verification fails, how can we identify and localize errors to provide meaningful error messages?
    \item \emph{Verification Certificates:} If a program's verification succeeds, how can we reduce and simplify proofs to convince the user through understandable, small certificates?
    \item \emph{Hints:} If a program's verification succeeds, can we distill the program's core that makes the specification true?
\end{enumerate}

Our central contribution is to
\begin{enumerate*}[label={(\alph*)}]
    \item formally define appropriate concepts to capture these issues,
    \item develop algorithms to compute these diagnostics, and
    \item implement and experimentally validate these.
\end{enumerate*}
The development of our framework in a quantitative \emph{intermediate} verification language enables us to apply our techniques to a range of probabilistic programming languages, such as Dice~\cite{DBLP:journals/pacmpl/HoltzenBM20} (with loops~\cite{DBLP:conf/fscd/Torres-RuizP0Z24}) or \pGCL~(\cite{DBLP:series/mcs/McIverM05}).

\subsection{Approach}

\paragraph{Slicing.}
Our approach is based on program slicing~\cite{DBLP:journals/tse/Weiser84,DBLP:journals/sigsoft/XuQZWC05}.
Slicing refers to methods that, through removal of statements, extract a subprogram from a program while preserving certain properties of interest.
We introduce three kinds of slices: \emph{error-witnessing}, \emph{verification-witnessing}, and \emph{verification-preserving slices}.
Depending on the presence or absence of statements in the slice, appropriate diagnostics can be deduced and presented to the user.
Let us explain the different types of slices and how they are used to produce useful diagnostics through a series of examples.

\begin{wrapfigure}[6]{r}{.38\textwidth}
    \vspace*{-4em}
    \begin{spreadlines}{0ex}
        $
            \begin{aligned}
                \gutter{1}    & \stmtProb{\nicefrac{1}{2}}{\stmtAsgn{b0}{0}}{\stmtAsgn{b0}{1}}          \\
                \gutter{2}    & \stmtProb{\nicefrac{1}{2}}{\sliced{\stmtAsgn{b1}{0}}}{\stmtAsgn{b1}{1}} \\
                \gutter{3}    & \stmtAsgn{r}{b0 + 2 * b1}                                               \\
                \gutter{Goal} & \Pr(r \geq 2) \not\leq \nicefrac{1}{3}
            \end{aligned}
        $
    \end{spreadlines}
    \caption{Uniform 2-bit integer sampling using two coins.}
    \label{fig:intro-error-preserving}
\end{wrapfigure}

\paragraph{Error Localization.}
\Cref{fig:intro-error-preserving}, including the striked-out code, shows a program that samples a uniform two-bit integer $r$ by flipping two fair coins $b0$ and $b1$.
Our goal is to show an upper bound of $\nicefrac{1}{3}$ on the probability of $r \geq 2$.
Obviously, this bound is too tight.
Which parts of the program cause the verification to fail?
We observe that the two traces sampling $b0 = 0, b1 = 1$ and $b0 = 1, b1 = 1$ result in $r \geq 2$ and together have probability $\frac{1}{2} \cdot \frac{1}{2} + \frac{1}{2} \cdot \frac{1}{2} = \frac{1}{2} > \frac{1}{3}$.
Importantly, this counterexample makes no use of the assignment $\stmtAsgn{b1}{0}$.

Leaving out this assignment (strike-out) results in an \emph{error-witnessing slice}: any error in the slice is also present in the original program.
This allows to localize the verification failure to a subprogram by isolating the statements that introduce responsible proof obligations.

\paragraph{Verification Certificates.}
Reconsider the network packet delivery problem from \Cref{fig:heyvl-encoding-task,fig:heyvl-encoding-encoding}, for which we want to successfully verify that the probability to correctly receive a message is at least $99\%$.
While we can consider the combined distribution of the (two) paths to the target host via $s_1$ and $s_2$, notice that we can also check the probability bound independently for each path.
\begin{wrapfigure}[9]{r}{.4\textwidth}
    \vspace*{-0.2em}
    \begin{spreadlines}{0ex}
        $
            \begin{aligned}
                 & \texttt{fun}~send(s0choice:\colon~\texttt{bool})~\{                  \\
                 & \quad \texttt{if}~s0choice~\texttt{then}~\texttt{true}~\texttt{else} \\
                 & \quad \texttt{let}~drop = \texttt{flip}~0.0005~\texttt{in}           \\
                 & \quad \quad \neg drop                                                \\
                 & \}
            \end{aligned}
        $
    \end{spreadlines}
    \caption{A Dice program representing the verification-witnessing slice of \cref{fig:heyvl-encoding-encoding}.}
    \label{fig:heyvl-encoding-task-slice}
\end{wrapfigure}

If we have a lower bound for every path, the least of those bounds gives a lower bound on the combined probability, thus completing the proof.
This corresponds to replacing the probabilistic choice between $s_1$ and $s_2$ by a non-deterministic choice.
In the \HeyVL-encoding (\Cref{fig:heyvl-encoding-encoding}) this corresponds to the removal of the blue $\symAssume$-statements (and the probabilistic assignment of $s0choice$).

The resulting program is a \emph{verification-witnessing} slice: if the slice verifies, then the original program verifies as well.
These slices contain all the necessary information to conclude that the original program can be verified.
They can be employed to simplify proofs and as verification certificates.
For the present example, the slice can be represented by making $s0choice$ an input parameter~(\Cref{fig:heyvl-encoding-task-slice}), as Dice~\cite{DBLP:journals/pacmpl/HoltzenBM20} does not include nondeterminism.

\begin{wrapfigure}[10]{r}{.44\textwidth}
    \vspace*{-2.3em}
    \begin{spreadlines}{0ex}
        $
            \begin{aligned}
                \gutter{1}    & \stmtAsgn{c}{0}                                               \\
                \gutter{2}    & \stmtAsgn{i}{1}                                               \\
                \gutter{3}    & \sliced{\symWhile}~\symIf~i \leq n~\{                         \\
                \gutter{4}    & \quad \stmtProb{\nicefrac{1}{n}}{\stmtAsgn{c}{i}}{\stmtSkip}~ \\
                \gutter{5}    & \quad \stmtAsgn{i}{i + 1}                                     \\
                \gutter{6}    & \}                                                            \\
                \gutter{Goal} & \Pr(c = X) \leq \nicefrac{1}{n} \text{ for } X \in \Nats
            \end{aligned}
        $
    \end{spreadlines}
    \caption{\marknewtext{Faulty reservoir sampling.}}
    \label{fig:intro-verification-preserving}
\end{wrapfigure}

\paragraph{Hints.}
\Cref{fig:intro-verification-preserving} is a buggy version of \emph{Algorithm R}~\cite{DBLP:journals/toms/Vitter85} to uniformly sample an element $c$ from a stream of size $n$ via reservoir sampling.
In the correct algorithm, the probabilistic choice in \lineNumber{4} is weighted with $\nicefrac{1}{i}$ instead of $\nicefrac{1}{n}$.
Nevertheless, the (correct) upper bound of $\nicefrac{1}{n}$ for the probability of terminating with $c = X$ can be verified for all $X \in \Nats$.
In fact, the loop itself is not required to establish the specification.
It can be replaced by a single loop iteration.

The resulting program is an (amorphous) \emph{verification-preserving slice}: if the original program verifies, then the slice verifies as well.
In general, we can use such slices to give hints, such as suggesting a new program that still satisfies the specification but is easier to grasp, more efficient, or alerting the programmer of the fact that the specification does not adequately capture the intended behavior of the program.

\paragraph{Implementation.}
\begin{figure}[t]
    \centering
    \begin{tikzpicture}[node distance=0.8cm, label distance=20mm, auto, font=\footnotesize,
            process/.style={rectangle, text centered, align=center, draw=black, fill=brBlue!30},
            io/.style={trapezium, trapezium left angle=70, trapezium right angle=110, text centered, align=center, draw=black, fill=orange!30},
            decision/.style={diamond, aspect=3, text centered, align=center, draw=black, fill=bluishgreen!30},
            arrow/.style={thick,->,>=stealth}
        ]
        \node (inputProgram) [io] {Probabilistic Program $P$ in Dice, pGCL, \dots};
        \node (inputProperty) [io, right=of inputProgram] {Property/Specification $\varphi$};
        \node (input) [process, below=0.4cm of inputProgram, xshift=1.3cm] {HeyVL Program $S = \stmtAssume{\cdots} \symSemi \textrm{encode}(P) \symSemi \stmtAssert{\cdots}$};
        \node (firstVerifies) [decision, below=0.3cm of input] {Does $S$ verify?};

        \node (selectDiscordant) [process, below left=0.5cm and 0.2cm of firstVerifies] {Select assert-like stmts.};
        \node (sliceErrors) [process, below of=selectDiscordant] {Slice};
        \node (errorSlice) [io, below = 0.3cm of sliceErrors,align=center] {Error-witnessing slice};

        \node (selectConcordant) [process, right=0.6cm of selectDiscordant] {Select assume-like stmts.};
        \node (sliceWitness) [process, below of=selectConcordant] {Slice};
        \node (witnessSlice) [io, below = 0.3cm of sliceWitness,align=center] {Ver.-witnessing slice};

        \node (selectAny) [process, right=0.6cm of selectConcordant] {Select any stmts.};
        \node (slicePreserving) [process, below of=selectAny] {Slice};
        \node (preservingSlice) [io, below = 0.3cm of slicePreserving,align=center] {Ver.-preserving slice};

        \draw [arrow] (inputProgram) -- (input);
        \draw [arrow] (inputProperty) -- (input);
        \draw [arrow] (input) -- (firstVerifies);
        \draw [arrow] (firstVerifies) -- node[anchor=east,yshift=2pt] {no} (selectDiscordant);
        \draw [arrow] (selectDiscordant) -- (sliceErrors);
        \draw [arrow] (sliceErrors) -- (errorSlice);

        \draw [arrow] (firstVerifies) -- node[anchor=west, xshift=5pt, yshift=2pt] {yes} (selectConcordant);
        \draw [arrow] (selectConcordant) -- (sliceWitness);
        \draw [arrow] (sliceWitness) -- (witnessSlice);

        \draw [arrow] (firstVerifies) edge[bend left=15] node[anchor=west, xshift=12pt, yshift=2pt] {yes} (selectAny);
        \draw [arrow] (selectAny) -- (slicePreserving);
        \draw [arrow] (slicePreserving) -- (preservingSlice);
    \end{tikzpicture}
    \Description{Flowchart of the verification process showing the steps to obtain error-witnessing, verification-witnessing, and verification-preserving slices.}
    \caption{Flowchart of the verification process leading to the various kinds of slices.}
    \label{fig:slicing-flowchart}
\end{figure}

\Cref{fig:slicing-flowchart} shows a flowchart of how and when we obtain the different kinds of slices.
After encoding a probabilistic program and a specification into \HeyVL, the resulting program is verified and sliced.
If the program does not verify, we slice to obtain an error-witnessing slice.
We do this by selecting assert-like statements to slice.
If the program verifies, we can either aim for verification-witnessing or verification-preserving slices.
For verification-witnessing slices, we select assume-like statements and search for a slice that still verifies.
For verification-preserving slices, we can select any statements and search for a slice that still verifies.
These steps ensure that we obtain meaningful slices that do not vacuously satisfy the definitions, such as a verification-witnessing slice of a program that does not verify in the first place.

\subsection{Contributions and Outline}
In~\cref{sec:background-heyvl}, we explain the necessary background for the syntax and semantics of \HeyVL{}, which supports reasoning about lower and upper bounds of expected values of non-deterministic, probabilistic programs that include recursion and loops.
To summarize, the main contributions of this work are:
\begin{itemize}
    \item \Cref{sec:slicing-error-reporting}: Formal notion of \emph{error-witnessing slices} for error localization and simple sufficient conditions for soundness (removal of \emph{assert-like statements}).
    \item \Cref{sec:slicing-proof-simplification}: Formal notion of \emph{verification-witnessing slices} for verification certificates and simple sufficient conditions for their soundness (removal of \emph{assume-like statements}).
    \item \Cref{sec:slicing-verification-preserving}: Formal notion of \emph{verification-preserving slices} for hints and program optimization.
    \item \Cref{sec:slicing_on_encodings}: Theorems of correctness of error localization and hints based on slicing for specific proof rule encodings as case studies.
    \item \Cref{sec:implementation}: The first implementation of specification-based slicing for probabilistic programs which supports all of the above notions of slicing, integrated into the \emph{Caesar} verifier. All outputs of the slicing methods are integrated into its Visual Studio Code plugin and are visualized as error messages or warnings.
    \item \Cref{sec:implementation}: We evaluate the implementation on a set of representative benchmarks and show that we can quickly obtain user diagnostics for interactive feedback during the verification process.
\end{itemize}
Finally, \Cref{sec:related-work} surveys related work, putting existing notions into the context of our kinds of slices.
We refute claims in the literature that finding optimal slices requires both forward and backward reasoning and show that our approach can handle these cases.
Proofs of all results can be found in~\Cref{appendix:proofs}.

\section{HeyVL -- A Quantitative Intermediate Verification Language}\label{sec:background-heyvl}

In this section, we briefly explain the syntax and semantics of \HeyVL{}, an intermediate language to encode quantitative verification problems.
\HeyVL{} is based on \HeyLo{}, a syntax for expectation-based reasoning.
\HeyLo{} is used as \HeyVL{}'s assertion language, and is used to define \HeyVL{}'s formal semantics.
A more detailed exposition of both can be found in~\cite{DBLP:journals/pacmpl/SchroerBKKM23}.

\subsection{Expectation-Based Reasoning}

Reasoning about probabilistic programs generalizes Boolean to \emph{expectation-based} reasoning~\cite{DBLP:series/mcs/McIverM05}, i.e.\ about quantities such as expected values assigned to every program state.
In the following, we introduce the basic notions of logical reasoning about expectations.

\paragraph{Variables and Program States.}
Given a countably infinite set of typed variables $\Vars = \{x,y,\ldots\}$, we denote by $\typeof{x}{\typevar}$ that $x$ has type $\typevar$.
Typical types are Booleans $\Bools = \{\true,\false\}$, integers $\Ints$, or extended unsigned reals $\PosRealsInf = \PosReals \cup \{\infty\}$.
A \emph{program state} $\State$ maps every variable $\typeof{x}{\typevar}$ to a value $\State(x) \in \typevar$.

\paragraph{Expectations.}
Generalizing logical predicates, \emph{expectations} map program states to values in $\PosRealsInf$.
Expectations form a complete lattice $(\Expectations,\,\expleq)$ with $\expa \in \Expectations$ for all $\expa \colon \States \to \PosRealsInf$ and where $\expa \eexpleq \expb$ iff for all $\State\in\States$: $\expa(\State) \lleq \expb(\State)$.
\Cref{fig:heylo-semantics} shows the syntax for expectations $\expc$ and their semantics~\cite[Section 2]{DBLP:journals/pacmpl/SchroerBKKM23}.

\begin{table}[t]
	\Description{\HeyLo{} semantics.}
	\caption{Operators on expectations. $\inf$ and $\sup$ are taken on $\PosRealsInf$. $\expa\substBy{x}{v}$ denotes the expectation with the value for $x$ replaced by $v$, i.e. $\expa\substBy{x}{v} = \lam{\State}{\expa(\State\substBy{x}{v})}$, with $\sigma\substBy{x}{v}(y) = v$ if $x=y$ and $\sigma\substBy{x}{v}(y) = \sigma(y)$ else.}
	\label{fig:heylo-semantics}
	\renewcommand{\arraystretch}{1.5}%
	\newcommand{\hint}[1]{\footnotesize{}#1}%
	\renewcommand{\eval}[1]{#1}%
	\begin{minipage}{1\textwidth}
		\begin{center}
			\adjustbox{width=\textwidth}{%
				\begin{tabular}
					{@{\qquad}l@{\quad}l@{\quad}|@{\quad}l@{\quad}l@{\quad}}
					\toprule
					$\expc$                                & $\evalState{\expc}$                                                                & $\expc$                                & $\evalState{\expc}$                                                           \\
					\midrule
					$\aexpr$                               & $\evalState{a}\vphantom{\ifThenElse{\evalState{\bexpr} = \true}{\infty}{0}}$       & $\embed{\bexpr}$                       & $\ifThenElse{\evalState{\bexpr} = \true}{\infty}{0}$                          \\
					$\expa \sqcap \expb$                   & $\min \Set{\eval{\expa}(\sigma),~ \eval{\expb}(\sigma) }$                          & $\expa \sqcup \expb$                   & $\max \Set{ \eval{\expa}(\sigma),~ \eval{\expb}(\sigma) }$                    \\
					$\iquant{\typeof{x}{\typevar}}{\expa}$ & $\inf \Set{ \evalStateSubstBy{\expa}{x}{v} | v \in \typevar }$                     & $\squant{\typeof{x}{\typevar}}{\expa}$ & $\sup \Set{ \evalStateSubstBy{\expa}{x}{v} | v \in \typevar }$                \\
					$\expa \impl \expb$                    & $\ifThenElse{\evalState{\expa} \leq \evalState{\expb}}{\infty}{\evalState{\expb}}$ & $\expa \coimpl \expb$                  & $\ifThenElse{\evalState{\expa} \geq \evalState{\expb}}{0}{\evalState{\expb}}$ \\[1em]
					\bottomrule
				\end{tabular}
			}
		\end{center}
	\end{minipage}%
\end{table}

\paragraph{Basic expectations.}
We use $r$ to represent the expectation $\lam{\State}{r}$ for some $r \in \PosRealsInf$.
For arithmetic expressions $a$ of type $\PosRealsInf$, we simply write $a$ to mean $\lam{\State}{a(\State)}$, e.g. $2 \cdot x + 5$ represents the expectation $\lam{\State}{2 \cdot \State(x) + 5}$.
The \emph{embedding operator} $\embed{\cdot}$ embeds a Boolean expression $b$ into expectations: $\embed{b}(\State)$ maps to $\infty$ if $b$ evaluates to $\exprTrue$ in $\State$ and to $0$ otherwise.
The minimum and maximum between two expectations is denoted by the symbols $\sqcap$ and $\sqcup$, respectively.

\paragraph{Quantifiers.}
The \emph{infimum quantifier} $\Inf $ and the \emph{supremum quantifier} $\Sup$ are the quantitative analogues of the universal $\forall$ and the existential $\exists$ quantifier from predicate logic.
Intuitively, the $\Inf$ quantifier minimizes a quantity, just like the $\forall$ quantifier minimizes a predicate's truth value.
Further, we can embed $\forall$ as follows: $(\iquant{\typeof{x}{\typevar}}{\embed{\bexpr}})(\sigma) = \infty$ if and only if $\sigma \models \forall \quant{\typeof{x}{\typevar}}{\bexpr}$.
Dually, $\Sup$ maximizes a quantity, and we have $(\squant{\typeof{x}{\typevar}}{\embed{\bexpr}})(\sigma) = \infty$ if and only if $\sigma \models \exists \quant{\typeof{x}{\typevar}}{\bexpr}$.

\paragraph{Implications and Coimplications.}
The \emph{implication} $\impl$ generalizes the Boolean implication to expectations.
For a state $\State$, the implication $\expa \impl \expb$ evaluates to $\infty$ if $\expa(\State) \leq \expb(\State)$, and to $\expb(\State)$ otherwise.
Its dual \emph{coimplication} $\coimpl$ generalizes the converse nonimplication, defined for propositions $P$ and $Q$ as $\neg(P \leftarrow Q)$.
On expectations, $\expa \coimpl \expb$ evaluates to $0$ if $\expa(\State) \geq \expb(\State)$, and to $\expb(\State)$ otherwise.

The implications have two important applications.
The implication $\expa \impl \expb$ can be used to encode a comparison, as $\expa \impl \expb = \infty$ iff $\expa \expleq \expb$.
They can also simplify proof obligations: The inequality $\expa \expleq (\expb \impl \expc)$ can be transformed into the equivalent $(\expa \sqcap \expb) \expleq \expc$ by the \emph{adjointness property}.\footnote{With this implication, expectations form a \emph{Heyting algebra}. The name ``\HeyLo{}'' stands for a logic over Heyting algebras.}
This tells us that introducing assumptions can only make the proof obligations simpler.
The coimplication satisfies the dual properties and is used for upper bounds reasoning.

\paragraph{Validations.}
In the verification of probabilistic programs, one often needs to \say{cast} quantities $\expa$ into a quality.
For every state $\State$, the \emph{validation} $\expvalidate{\expa}$ evaluates to $\expvalidate{\expa}(\State) = \infty$ if and only if $\expa(\State) = \infty$.
Otherwise, $\expvalidate{\expa}(\State) = 0$.
Thus, it \say{pulls down} every value different from the maximal value $\infty$.
The \emph{covalidation} $\expcovalidate{\expa}$ is dual and \say{pulls up} every value that differs from the minimal value $0$.
Turning quantitative implications into qualitative comparisons is now $\expvalidate{\expa \impl \expb}$, and we have $\expvalidate{\expa \impl \expb}(\State) = \infty$ iff $\expa(\State) \leq \expb(\State)$.
The dual is $\expcovalidate{\expa \coimpl \expb}$, which evaluates to $0$ iff $\expa(\State) \geq \expb(\State)$.

\subsection{The Intermediate Verification Language HeyVL}

\HeyLo expressions denote quantitative properties about probabilistic programs.
The intermediate verification language \HeyVL{} encodes quantitative verification problems for probabilistic programs.
It extends standard (probabilistic) constructs such as assignments, probabilistic sampling, and sequencing with quantitative \emph{verification statements} that are used to transform and approximate the expected values.
With these verification statements, \HeyVL{} generalizes classical (Boolean) IVLs such as Boogie~\cite{leino2008boogie}.

\paragraph{Syntax of Statements.}
The syntax of \HeyVL{} statements $\stmt$ is given in the first and third columns in \cref{fig:heyvl-semantics}.
Here, $x \in \Vars$ is a variable of type $\typevar$, $a$ is an arithmetic expression, $\expa$ and $\expb$ are expectation expressions, and $\mu$ is a \emph{distribution expression} of type $\tau$ representing a finite-support probability distribution.
Let $\mu \eeq \pexp{p_1}{\termvar_1} \pexpand \ldots \pexpand \pexp{p_n}{\termvar_n}$ denote a probability distribution with probability expressions $p_1,\ldots,p_n$ for each value $\termvar_1,\ldots,\termvar_n$, respectively.
We use $\exprFlip{p}$ to denote a Bernoulli distribution with probability expression $p$, i.e. $\exprFlip{p} \eeq \pexp{p}{\true} \pexpand \pexp{(1-p)}{\false}$.
Let us briefly explain each statement.
The \emph{probabilistic assignment} $\stmtDeclInit{x}{\typevar}{\mu}$ assigns to the variable $x$ of type $\typevar$ a value sampled from the distribution $\mu$.
When $x$ is already declared, we just write $\stmtRasgn{x}{\mu}$ for probabilistic assignments and $\stmtAsgn{x}{v}$ for deterministic assignments (with the Dirac distribution $\mu = \delta_v = \pexp{1}{v}$).
The statement $\stmtTick{a}$ adds a reward of $a$ to the expected value of $\expa$.
It can be used to model time progression or a more general resource consumption of the program.
The statement $\stmtSeq{\stmtOne}{\stmtTwo}$ represents a sequence of programs $\stmtOne$ and $\stmtTwo$.
The statement $\stmtDemonic{\stmtOne}{\stmtTwo}$ is a demonic \emph{nondeterministic choice} between $\stmtOne$ and $\stmtTwo$.
Nondeterminism is resolved by minimizing the expected value.
The statement $\stmtAssert{\expb}$ quantitatively generalizes assertions from classical IVLs.
Similarly, $\stmtAssume{\expb}$ generalizes assumptions.
The $\stmtHavoc{x}$ statement forgets the current value of $x$ and introduces a minimizing nondeterministic branching for every possible value of $x$.
The $\stmtValidate$ statement introduces a quantitative validation.

A distinguishing feature of \HeyVL{} compared to IVLs for non-probabilistic programs is that its verification-related statements have dual versions.
There is a dual maximizing (aka: angelic) nondeterministic choice $\symAngelic$, as well as $\symUp\symAssert$, $\symUp\symAssume$, $\symUp\symHavoc$, and $\symUp\symValidate$ statements.
Whereas the non-$\symUp$ statements are used to reason about lower bounds (\say{the expected value is \emph{at least} some value}), the $\symUp$ statements allow reasoning about upper bounds (\say{the expected value is \emph{at most} some value}).

\begin{table}[t]
	\caption{Syntax and semantics of \HeyVL{} statements. Here $\mu = \pexp{p_1}{\termvar_1} \pexpand \ldots \pexpand \pexp{p_n}{\termvar_n}$ and $\expa\substBy{x}{\termvar_i}$ is the formula obtained from substituting every occurrence of $x$ in $\expa$ by $\termvar_i$ in a capture-avoiding manner.}
	\label{fig:heyvl-semantics}
	\begin{minipage}{1\textwidth}
		\begin{center}
			\adjustbox{width=\textwidth}{%
				\renewcommand*{\arraystretch}{1.5}%
				\begin{tabular}
					{l@{\quad}l@{\quad}|@{\quad}l@{\quad}l}
					\toprule
					$\stmt$                                            & $\vc{\stmt}(\expa)$                                            & $\stmt$                          & $\vc{\stmt}(\expa)$                                            \\
					\midrule
					\multirow{2}{*}{$\stmtDeclInit{x}{\typevar}{\mu}$} & $p_1 \cdot \expa\substBy{x}{\termvar_1}$                       & \stmtTick{\aexpr}                & $\expa + \aexpr$                                               \\
					                                                   & \quad$+ \ldots + p_n \cdot \expa\substBy{x}{\termvar_n}$       & \stmtSeq{\stmtOne}{\stmtTwo}     & $\vc{\stmtOne}\bigl(\vc{\stmtTwo}(\expa)\bigr)$                \\
					\multirow{2}{*}{\shortstack{$\symDemonic~\blockStart~\stmtOne~\blockEnd$ \\ $\symElse~\blockStart~\stmtTwo~\blockEnd$}}                   & \multirow{2}{*}{$\vc{\stmtOne}(\expa) \sqcap \vc{\stmtTwo}(\expa)$} & \multirow{2}{*}{\shortstack{$\symAngelic~\blockStart~\stmtOne~\blockEnd$ \\ $\symElse~\blockStart~\stmtTwo~\blockEnd$}}  & \multirow{2}{*}{$\vc{\stmtOne}(\expa) \sqcup \vc{\stmtTwo}(\expa)$} \\[1.2em]
					\Assert{\expb}                                     & $\expb \sqcap \expa$                                           & \coAssert{\expb}                 & $\expb \sqcup \expa$                                           \\
					\Assume{\expb}                                     & $\expb \rightarrow \expa$                                      & \coAssume{\expb}                 & $\expb \coimpl \expa$                                          \\
					\Havoc{x}                                          & $\iquant{x}{\expa}$                                            & \coHavoc{x}                      & $\squant{x}{\expa}$                                            \\
					\Validate                                          & $\expvalidate{\expa}$                                          & \coValidate                      & $\expcovalidate{\expa}$                                        \\
					\bottomrule
				\end{tabular}
			}
		\end{center}
	\end{minipage}
	\Description{\HeyVL{} semantics.}
\end{table}

\paragraph{Semantics of Statements.}
The semantics of \HeyVL{} is based on weakest pre-expectation-style reasoning~\cite{DBLP:series/mcs/McIverM05,DBLP:phd/dnb/Kaminski19}.
The \emph{verification pre-expectation transformer} $\vc{\stmt} \colon \Expectations \to \Expectations$ transforms an expectation $\expa$ by going backwards through the statement $\stmt$.
We define $\symVc$ by induction on $\stmt$ in \cref{fig:heyvl-semantics}.
Given a state $\State$, the result $\vc{\stmt}(\expa)(\State)$ can be intuitively understood as the expected value of $\expa$ on termination of $\stmt$ when starting in $\State$.

Let $\expa$ be an expectation.
We call it the \emph{post(-expectation)}.
The expectation $\vc{\stmt}(\expa)$ is the verification pre-expectation with respect to post $\expa$.
For a random assignment $\stmt = \stmtDeclInit{x}{\typevar}{\mu}$, $\vc{\stmt}(\expa)$ is the weighted sum $p_1 \cdot \expa\substBy{x}{\termvar_1} + \ldots + p_n \cdot \expa\substBy{x}{\termvar_n}$, corresponding to the expected value of $\expa$ after assigning $x$ sampled from $\mu$.
A statement $\stmtTick{a}$ increases the post $\expa$ by $a$, therefore $\vc{\stmtTick{a}}(\expa) = \expa + a$.
For $\stmt = \stmtDemonic{\stmtOne}{\stmtTwo}$, $\vc{\stmt}(\expa)$ is the minimum of $\vc{\stmtOne}(\expa)$ and $\vc{\stmtTwo}(\expb)$.
For $\symAngelic$, we take the maximum.

The semantics of $\stmt = \stmtAssert{\expb}$ generalizes the classical definition with a Boolean conjunction.
In our quantitative setting, $\vc{\stmt}(\expa) = \expb \sqcap \expa$.
This corresponds to the \emph{least truth value} between the post $\expa$ and the assertion $\expb$.
The statement $\stmt = \stmtAssume{\expb}$ introduces a quantitative implication: $\vc{\stmt}(\expa) = \expb \impl \expa$.
Following the definition of $\impl$, introducing an $\symAssume$ statement lowers the threshold at which an expected value is considered absolutely true.
In a state $\State$ where $\expb(\State) \leq \expa(\State)$, we have $\vc{\stmtAssume{\expb}}(\expa)(\State) = \infty$.
Otherwise, the semantics is a no-op.
The semantics of the statement $\stmtHavoc{x}$ generalizes the classical $\forall$ semantics with an infimum: $\vc{\stmtHavoc{x}}(\expa) = \iquant{x}{\expa}$.
The validation statement $\stmtValidate$ has semantics that translate the post into either $0$ or $\infty$ using the $\expvalidate{\cdot}$ operator, such that $\vc{\stmtValidate}(\expa)(\State) = \infty$ iff $\expa(\State) = \infty$.

The dual statements have a dual semantics.
The $\symUp\symAssert$ statement generates the maximum of the post and the assertion instead of a minimum.
For $\symUp\symAssume$, we use the coimplication $\coimpl$ so that $\vc{\coAssume{\expb}}(\expa)(\State) = 0$ if and only if $\expb(\State) \geq \expa(\State)$ holds.
Otherwise, the semantics is a no-op.
For the $\symUp\symHavoc$ statement, we get the supremum.
Finally, $\coValidate$ has semantics such that $\vc{\coValidate}(\expa)(\State) = 0$ if and only if $\expa(\State) = 0$, and $\infty$ otherwise.

\paragraph{Specifications.}\label{par:heyvl-specifications}
We define a shorthand notation resembling Hoare triples for quantitative specifications.
Because we reason both about lower and upper bounds, we define two kinds of triples:
\(
\models \lowerTriple{\expa}{\stmt}{\expb} \text{ if and only if } \expa \expleq \vc{\stmt}(\expb),
\)
and dually
\(
\models \upperTriple{\expa}{\stmt}{\expb} \text{ if and only if } \expa \expgeq \vc{\stmt}(\expb).
\)
We say a statement $\stmt$ \emph{verifies} if $\models \lowerTriple{\infty}{\stmt}{\infty}$.
Dually, we say $\stmt$ \emph{co-verifies} if $\models \upperTriple{0}{\stmt}{0}$.

\paragraph{Conservativity.}
\HeyVL{} is a conservative extension of Boolean IVLs.
This means that we can embed Boolean reasoning in \HeyVL{} using the embedding operator $\embed{\cdot}$ and obtain the same results as in the Boolean setting.
For example, the classical encoding of the conditional choice statement $\stmtIf{\bexpr}{\stmtOne}{\stmtTwo}$ can be generalized: $\stmtDemonic{\stmtAssume{\embed{\bexpr}}\symSemi \stmtOne}{\stmtAssume{\embed{\neg \bexpr}}\symSemi \stmtTwo}$.
We will use the former as shorthand for the encoding in the rest of this paper.

\paragraph{Monotonicity.}
The $\symVc$ transformer is monotonic, which means that if the post $\expa$ is replaced by $\expaP$ such that $\expa \expleq \expaP$, then we retain the same inequality on the $\symVc$ semantics.
This property is crucial for some of our soundness theorems.

\begin{theorem}[Monotonicity of $\symVc$, {{\cite[Thm.~3.2]{DBLP:journals/pacmpl/SchroerBKKM23}}}]
	\label{thm:monotonicity} For all \HeyVL{} statements $\stmt$ and $\expa,\expaP \in \Expectations$, $\expa \expleq \expaP$ implies $\vc{\stmt}(\expa) \expleq \vc{\stmt}(\expaP)$.
\end{theorem}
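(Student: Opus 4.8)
The plan is to prove the statement by structural induction on the \HeyVL{} statement $\stmt$, following the inductive definition of $\symVc$ in \cref{fig:heyvl-semantics}. The induction hypothesis is the claim of the theorem itself, assumed for all proper substatements of $\stmt$. The guiding principle is that in every defining equation for $\vc{\stmt}(\cdot)$ the post-expectation $\expa$ occurs only inside operators that are monotone in each such occurrence; hence the whole proof reduces to a handful of elementary monotonicity facts about the \HeyLo{} operators of \cref{fig:heylo-semantics}, which I would establish first and then assemble.

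For the base cases: for $\stmtDeclInit{x}{\typevar}{\mu}$, $\expa \expleq \expaP$ gives $\expa\substBy{x}{\termvar_i} \expleq \expaP\substBy{x}{\termvar_i}$ for each $i$ (substitution merely renames the evaluation point), and multiplication by the nonnegative $p_i$ together with finite summation on $\PosRealsInf$ preserves $\expleq$; for $\stmtTick{\aexpr}$, adding a fixed expectation is monotone; for $\stmtAssert{\expb}$ and $\coAssert{\expb}$, the operations $\sqcap$ and $\sqcup$ are monotone in their second argument; for $\stmtHavoc{x}$ and $\coHavoc{x}$, taking a pointwise infimum resp.\ supremum of a pointwise-dominated family preserves domination, so $\iquant{x}{\cdot}$ and $\squant{x}{\cdot}$ are monotone; and for $\stmtValidate$, $\coValidate$, note $\expvalidate{\expa}(\State) = \infty \iff \expa(\State) = \infty$, so $\expa \expleq \expaP$ yields $\expvalidate{\expa} \expleq \expvalidate{\expaP}$, and dually for $\expcovalidate{\cdot}$.

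The cases that need genuine work are $\stmtAssume{\expb}$ and $\coAssume{\expb}$, because $\impl$ is antitone in its \emph{left} argument while we need monotonicity in its \emph{right} argument. For a fixed state $\State$ I would argue directly from the definition of $\impl$: if $\expb(\State) \le \expa(\State)$ then $(\expb \impl \expa)(\State) = \infty$ and, since $\expa(\State) \le \expaP(\State)$, also $(\expb \impl \expaP)(\State) = \infty$; otherwise $(\expb \impl \expa)(\State) = \expa(\State)$ while $(\expb \impl \expaP)(\State) \in \{\infty, \expaP(\State)\}$, both at least $\expa(\State)$. The coimplication case is the order-theoretic dual (alternatively one may appeal to the adjointness property mentioned in the text). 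These two cases --- and, if the language also contains a \symWhile{} loop whose $\symVc$-semantics is the least fixed point of a characteristic functional $\Phi_{\expa}$ parametrised by the post $\expa$, the observation that $\expa \expleq \expaP$ implies $\Phi_{\expa}(\expc) \expleq \Phi_{\expaP}(\expc)$ for all $\expc$ (using the already-handled loop body and guard) together with monotonicity of the least-fixed-point operator in parameters over a complete lattice --- are the only steps that are more than a one-line invocation of a monotone operator, and thus the main obstacle.

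Finally, the compositional cases close the induction: for $\stmtSeq{\stmtOne}{\stmtTwo}$, apply the induction hypothesis for $\stmtTwo$ to get $\vc{\stmtTwo}(\expa) \expleq \vc{\stmtTwo}(\expaP)$ and then the induction hypothesis for $\stmtOne$ to this pair, yielding $\vc{\stmtOne}\bigl(\vc{\stmtTwo}(\expa)\bigr) \expleq \vc{\stmtOne}\bigl(\vc{\stmtTwo}(\expaP)\bigr)$; for the demonic and angelic choice statements, apply the induction hypothesis to both branches and use monotonicity of $\sqcap$ resp.\ $\sqcup$ in both arguments. This exhausts the grammar and completes the plan.
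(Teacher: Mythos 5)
Your proposal is correct: the structural induction over the grammar of \cref{fig:heyvl-semantics} is sound, all thirteen cases are covered, and the only genuinely non-trivial steps --- monotonicity of $\impl$ and $\coimpl$ in their right argument, which you handle by direct case analysis on whether $\expb(\State) \leq \expa(\State)$ --- are argued correctly (your hedge about \symWhile{} is moot here, since loops are not part of the core syntax but are handled via proof-rule encodings). Note that this paper does not prove \cref{thm:monotonicity} itself but imports it by citation from the original \HeyVL{} paper, where the argument is precisely this kind of structural induction, so your proof matches the intended one.
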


\section{Error Localization, Certificates, and Hints}\label{sec:slicing-heyvl}

We will now introduce the different kinds of slices of \HeyVL{} programs that enable the identification of error locations when verification fails (\cref{sec:slicing-error-reporting}), the construction of witnesses for the validity of a verifying program (\cref{sec:slicing-proof-simplification}), and the tailoring of probabilistic programs to fulfil a given specification (\cref{sec:slicing-verification-preserving}).
For simplicity, we will focus this exposition on \HeyVL{} programs encoding lower bound verification tasks.
Similar results can be stated for the dual case of upper bound verification.
\marknewtext{Further, the slice notions defined for \HeyVL programs can be lifted to any (probabilistic) language encodable into \HeyVL.}

\subsection{Error Localization}\label{sec:slicing-error-reporting}

In nonprobabilistic systems, safety properties can be refuted by a single path in the system.
Providing counterexamples for specifications for probabilistic systems is inherently more complex.
In the area of probabilistic model checking, the extraction of counterexamples for Markov models has been extensively studied, e.g.\ by~\cite{DBLP:phd/dnb/Jansen15,DBLP:conf/sfm/AbrahamBDJKW14,DBLP:conf/fmcad/JantschHFB20}.
There, counterexamples are sub-Markov chains which violate the specification at hand.
For probabilistic programs, a natural definition arises from the negation of the verification task, i.e.\ a counterexample is a state where the specification is violated.

\begin{definition}
	A \emph{counterexample} of \HeyVL{} program $\stmt$ w.r.t.\ $\expa, \expb \in \Expectations$ is a state $\State \in \States$ with $\expa(\State) \not\leq \vc{\stmt}(\expb)(\State)$.
	We write $\State \not\models \lowerTriple{\expa}{\stmt}{\expb}$.
\end{definition}

Inspired by~\cite{DBLP:journals/corr/abs-1305-5055,DBLP:conf/atva/DehnertJWAK14}, we consider \emph{sub-programs} as the counterpart of sub-Markov chains at the program level.
Here, we formalize this as a slicing problem.
Slicing aims to find a subprogram $\slice$ that preserves certain semantic properties, in this case, that $\slice$ fails to satisfy the specification \emph{just like the original program $\stmt$} did.
$\slice$ witnesses the error.

\begin{definition}[Error-witnessing slice]\label{def:slice:error_witnessing}
	A subprogram $\slice$ of program $\stmt$ is an \emph{error-witnessing slice w.r.t.\ $\expa, \expb \in \Expectations$} if
	\begin{enumerate}
		\item\label{def:slice:error_witnessing:counterexample_exists}

		      $\not\models \lowerTriple{X}{\slice}{Y}$, and
		\item\label{def:slice:error_witnessing:counterexamples_transfer_to_original}
		      $\forall \sigma'.~ (\sigma' \not\models \lowerTriple{\expa}{\slice}{\expb} \implies \sigma' \not\models \lowerTriple{\expa}{\stmt}{\expb})$.

	\end{enumerate}
\end{definition}

The first requirement ensures that the slice $\slice$ has at least one counterexample state.
The second requirement states that all counterexamples of $\slice$ are counterexamples of $\stmt$ too.

\begin{figure}
	\begin{minipage}[t]{0.3\textwidth}
		\centering
		\begin{spreadlines}{0ex}
			\begin{align*}
				\gutter{1} & \stmtAssume{0.75}               \\
				\gutter{2} & \stmtRasgn{x}{\exprFlip{0.5}}   \\
				\gutter{3} & \stmtAssert{\iverson{x \geq 0}} \\
				\gutter{4} & \stmtAssert{\iverson{x = 1}}
			\end{align*}
		\end{spreadlines}
		\vspace*{-0.52em} %
		\caption{\HeyVL{} program which does not verify. Removing \lineNumber{3} results in an error-witnessing slice. However, removing \lineNumber{4} would result in a verifying program.}
		\label{fig:example:introductory_counterexample_slice}
	\end{minipage}
	\hfill
	\begin{minipage}[t]{0.3\textwidth}
		\centering
		\begin{spreadlines}{0ex}
			\begin{align*}
				\gutter{1} & \stmtAssume{1}                \\
				\gutter{2} & \stmtAsgn{x}{\exprTrue}       \\
				\gutter{3} & \stmtRasgn{x}{\exprFlip{0.5}} \\
				\gutter{4} & \stmtAssert{\iverson{x}}
			\end{align*}
		\end{spreadlines}
		\vspace*{-0.52em} %
		\caption{\HeyVL{} program which does not verify. Removing \lineNumber{2} results in an error-witnessing slice, even though the assignment is not reductive.}
		\label{fig:example:non_reductive_counterexample_slice}
	\end{minipage}
	\hfill
	\begin{minipage}[t]{0.3\textwidth}
		\centering
		\begin{spreadlines}{0ex}
			\begin{align*}
				\gutter{1} & \stmtAssume{\nicefrac{1}{2}}     \\
				\gutter{2} & \stmtAssume{\iverson{x \geq 1}}  \\
				\gutter{3} & \stmtAssume{\iverson{x \leq 10}} \\
				\gutter{4} & \stmtRasgn{y}{\exprFlip{0.5}}    \\
				\gutter{5} & \stmtAsgn{y}{y + x}              \\
				\gutter{6} & \stmtAssert{\iverson{y \geq 2}}
			\end{align*}
		\end{spreadlines}
		\caption{\HeyVL{} program which verifies. Removing \lineNumber{3} results in a verification-witnessing slice.}
		\label{fig:example:introductory_verification_slice}
	\end{minipage}
\end{figure}

\begin{example}

	Consider the \HeyVL{} program $\stmt$ in~\cref{fig:example:introductory_counterexample_slice} which flips a fair coin and tries to establish that the probability of the result being non-negative and exactly $1$ is at least $75\%$.
	We have $\vc{\stmt}(\infty) = 0.75 \rightarrow (0.5 \cdot (1 \sqcap 0 \sqcap \infty) + 0.5 \cdot (1 \sqcap 1 \sqcap \infty)) = 0.75 \rightarrow 0.5 = 0.5 \neq \infty$.
	Therefore, the program fails to verify.
	In fact, the set of counterexamples consists of all states.
	The subprogram $\stmtOneP$ obtained by erasing \lineNumber{3}~($\stmtAssert{\iverson{x \geq 0}}$) is an error-witnessing slice of $\stmt$ w.r.t.\ $(\infty, \infty)$, because any state is a valid counterexample: $\vc{\stmtOneP}(\infty) = 0.75 \rightarrow 0.5 = 0.5 \neq \infty$.
	The subprogram $\stmtTwoP$ obtained by erasing \lineNumber{4}~($\stmtAssert{\iverson{x = 1}}$) verifies, as \(\vc{\stmtTwoP}(\infty) = 0.75 \rightarrow 1 = \infty\).
	Hence, it is not an error-witnessing slice.
	We conclude that the requirement that $x = 1$ holds at the end of program $\stmt$ with at least $75\%$ probability is violated.
\end{example}
The example showed that removing assertions can help to turn a non-verifying program into a verifying one.
Conversely, adding $\symAssert$-statements to a program that fails to verify will not yield fewer counterexamples.
Statements that \emph{always} strengthen proof obligations are called \emph{reductive}.\footnote{This terminology is based on nomenclature for closure operator definitions~\cite[Section~11.7]{cousot_aa}.}
\begin{definition}\label{def:reductive}
	A statement $\stmt$ is \emph{reductive} if $\vc{\stmt}(\expa) \expleq \expa$ for all $\expa \in \Expectations$.
\end{definition}
Equivalently, one may use \emph{upper} bound Hoare triples, saying that $\models \upperTriple{\expa}{\stmt}{\expa}$ holds for all $\expa \in \Expectations$.
Various verification statements in \HeyVL{} are reductive, most notably \symAssert-statements.
We will also call such statements \emph{\symAssert-like}.

\begin{restatable}{lemma}{lemmaReductive}
	\label{lemma:reductive}
	Statements $\symAssert$, $\symUp\symAssume$, $\symHavoc$, and $\symValidate$ are reductive.\footnote{These are all atomic \HeyVL{} statements that are reductive in their general form. Specific instances of other statements, e.g.\ $\Assume{\infty}$, are also reductive. }
\end{restatable}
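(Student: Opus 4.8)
The claim is that each of the four atomic statements $\symAssert$, $\symUp\symAssume$, $\symHavoc$, and $\symValidate$ is reductive, i.e.\ satisfies $\vc{\stmt}(\expa) \expleq \expa$ for all $\expa \in \Expectations$. The plan is to treat each statement separately and simply unfold its $\symVc$-semantics from \cref{fig:heyvl-semantics}, then verify the pointwise inequality $\vc{\stmt}(\expa)(\State) \leq \expa(\State)$ for an arbitrary state $\State \in \States$, using the corresponding semantic clause from \cref{fig:heylo-semantics}.

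\textbf{The four cases.} For $\stmt = \stmtAssert{\expb}$, we have $\vc{\stmt}(\expa) = \expb \sqcap \expa$, and the infimum of two values is at most $\expa(\State)$; hence reductivity is immediate. For $\stmt = \coAssume{\expb}$, we have $\vc{\stmt}(\expa) = \expb \coimpl \expa$, which by definition equals $0$ if $\expb(\State) \geq \expa(\State)$ and equals $\expa(\State)$ otherwise; in both cases the value is $\leq \expa(\State)$, so the statement is reductive. For $\stmt = \stmtHavoc{x}$, we have $\vc{\stmt}(\expa) = \iquant{x}{\expa}$, whose value at $\State$ is $\inf\Set{\expa(\State\substBy{x}{v}) \mid v \in \typevar}$; taking $v = \State(x)$ shows this infimum is at most $\expa(\State)$. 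For $\stmt = \stmtValidate$, we have $\vc{\stmt}(\expa) = \expvalidate{\expa}$, which is $\infty$ if $\expa(\State) = \infty$ and $0$ otherwise; since the only case where it takes the value $\infty$ is precisely when $\expa(\State) = \infty$ as well, we again get $\vc{\stmt}(\expa)(\State) \leq \expa(\State)$.

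\textbf{Main obstacle.} There is essentially no obstacle: the proof is a routine case distinction discharged directly from the definitions, and the only mild subtlety is getting the direction of the inequalities right for the duals (e.g.\ that $\symUp\symAssume$, the \emph{upper-bound} analogue of $\symAssume$, is the one that is reductive, not $\symAssume$ itself — indeed $\vc{\stmtAssume{\expb}}(\expa) = \expb \impl \expa$ can jump \emph{up} to $\infty$ and is therefore not reductive). One should also note, as the footnote already indicates, that the statement is about the \emph{general} forms: specific instances such as $\Assume{\infty}$ happen to be reductive too, but that is not claimed here. Consequently the write-up can simply be: ``Fix $\expa \in \Expectations$ and $\State \in \States$; we check $\vc{\stmt}(\expa)(\State) \leq \expa(\State)$ in each case,'' followed by the four one-line arguments above.
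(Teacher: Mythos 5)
Your proof is correct and follows essentially the same route as the paper's: unfold the $\symVc$-semantics of each statement and check the pointwise inequality directly. In fact, your write-up is slightly more complete, since you also discharge the $\stmtHavoc{x}$ case (via the witness $v = \State(x)$ for the infimum), which the paper's own proof omits even though the lemma statement includes it.
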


From \cref{lemma:reductive} and the monotonicity of $\symVc$, it follows that erasing reductive statements in a program $\stmtP$ can never decrease the $\symVc$ of program $\stmt$ in which $\stmtP$ is embedded.
This suggests:

\begin{restatable}{theorem}{theoremReductiveSlice}\label{theorem:reductive-slice}
	Let $\stmtP$ be a subprogram obtained from program $\stmt$ by only erasing reductive statements.
	If $\not\models \lowerTriple{\expa}{\stmtP}{\expb}$, then $\stmtP$ is an error-witnessing slice of $\stmt$ w.r.t.\ $(\expa, \expb)$ for all $\expa, \expb \in \Expectations$.
	In particular, we have $\vc{\stmt}(\expa) \expleq \vc{\stmtP}(\expa)$ for all $\expa \in \Expectations$.
\end{restatable}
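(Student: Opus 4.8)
The plan is to prove the ``in particular'' statement first, the \emph{key inequality} $\vc{\stmt}(\expa) \expleq \vc{\stmtP}(\expa)$ for all $\expa \in \Expectations$, which is purely structural and does not use the counterexample hypothesis, and then to read off the two conditions of \cref{def:slice:error_witnessing} from it.

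For the key inequality I would first fix a precise reading of ``subprogram obtained by erasing reductive statements'': a finite iteration of single-statement erasures, where erasing a statement means replacing that occurrence by $\symSkip$ (note $\vc{\symSkip} = \exptransId$, so in a sequence this coincides with literal removal). The core ingredient is a \emph{context-monotonicity lemma}: for every one-hole \HeyVL{} statement context $C[\cdot]$ and statements $\stmtOne, \stmtTwo$ with $\vc{\stmtOne}(\expc) \expleq \vc{\stmtTwo}(\expc)$ for all $\expc$, also $\vc{C[\stmtOne]}(\expc) \expleq \vc{C[\stmtTwo]}(\expc)$ for all $\expc$. This follows by induction on $C$: the empty context is trivial; for a trailing-sequence hole $\stmtSeq{C'}{\stmtC}$ apply the induction hypothesis at post $\vc{\stmtC}(\expc)$; for a leading-sequence hole $\stmtSeq{\stmtC}{C'}$ apply the induction hypothesis and then \cref{thm:monotonicity} (monotonicity of $\vc{\stmtC}$); for the demonic and angelic branch contexts combine the induction hypothesis with monotonicity of $\sqcap$ and $\sqcup$; loop and recursion contexts use monotonicity of least (resp.\ greatest) fixpoints of monotone functionals over $(\Expectations, \expleq)$. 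Since a reductive statement $\stmtN{r}$ satisfies $\vc{\stmtN{r}}(\expc) \expleq \expc = \vc{\symSkip}(\expc)$ (\cref{def:reductive}, with reductivity of the concrete atomic statements supplied by \cref{lemma:reductive}), instantiating the lemma with $\stmtOne = \stmtN{r}$ and $\stmtTwo = \symSkip$ shows that one erasure step can only increase $\vc{\cdot}$ pointwise; chaining the finitely many steps that turn $\stmt$ into $\stmtP$ gives $\vc{\stmt}(\expa) \expleq \vc{\stmtP}(\expa)$ for all $\expa$.

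Now fix arbitrary $\expa, \expb \in \Expectations$ satisfying the hypothesis $\not\models \lowerTriple{\expa}{\stmtP}{\expb}$. The first condition of \cref{def:slice:error_witnessing} is exactly this hypothesis. For the second, let $\sigma'$ be any state with $\sigma' \not\models \lowerTriple{\expa}{\stmtP}{\expb}$, i.e.\ $\expa(\sigma') \not\leq \vc{\stmtP}(\expb)(\sigma')$. Evaluating the key inequality at post $\expb$ and state $\sigma'$ yields $\vc{\stmt}(\expb)(\sigma') \leq \vc{\stmtP}(\expb)(\sigma')$, hence $\expa(\sigma') \not\leq \vc{\stmt}(\expb)(\sigma')$, i.e.\ $\sigma' \not\models \lowerTriple{\expa}{\stmt}{\expb}$. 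Thus $\stmtP$ is an error-witnessing slice of $\stmt$ w.r.t.\ $(\expa, \expb)$.

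I expect the main obstacle to be the context-monotonicity lemma — specifically, making the erasure/subprogram relation formal enough to induct on and covering every \HeyVL{} statement former, most delicately the fixpoint-defined loop and recursion constructs, which require monotonicity of $\mathrm{lfp}$/$\mathrm{gfp}$ in monotone functionals. Everything after the key inequality is a one-line identification (first condition) and a short pointwise monotonicity argument (second condition).
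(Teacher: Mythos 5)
Your proposal is correct and follows essentially the same route as the paper: a structural/contextual induction showing that replacing a reductive statement by $\symSkip$ can only increase $\vc{\cdot}$ pointwise (yielding $\vc{\stmt}(\expa) \expleq \vc{\stmtP}(\expa)$), followed by reading off both conditions of \cref{def:slice:error_witnessing} from that refinement, which is exactly what the paper's \cref{lemma:refinement-error-witnessing} packages. The only cosmetic difference is your one-hole-context formulation versus the paper's direct induction on the subprogram relation; the fixpoint cases you flag as the main obstacle do not actually arise, since loops are not primitive statements in the \HeyVL{} grammar of \cref{fig:heyvl-semantics} but are handled via encoded proof rules.
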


The next example shows that error-witnessing slices can also result from erasing non-reductive statements.

\begin{example}

	Consider the \HeyVL{} program $\stmt$ in~\cref{fig:example:non_reductive_counterexample_slice}.
	It does not verify, as the final assertion is required to always hold but is only fulfilled in $50\%$ of the runs.
	Let $\stmtP$ be obtained by removing \lineNumber{2} ($\stmtAsgn{x}{\exprTrue}$).
	Even though $\stmtAsgn{x}{\exprTrue}$ is not reductive, e.g.\ $\vc{\stmtAsgn{x}{\exprTrue}}(\embed{x}) = \embed{\exprTrue} = \infty \not\expleq \embed{x}$, it holds that $\vc{\stmt}(\infty) \expleq \vc{\stmtP}(\infty)$.
	As $\stmtP$ also fails to verify, $\stmtP$ is an error-witnessing slice of $\stmt$ w.r.t.\ $(\infty, \infty)$.
\end{example}

Removing \symAssert-like statements allows pinpointing the error location: after removal of the irrelevant \symAssert-like statements, only the ones necessary for the error remain.
We do want to trace down a path leading to the error; e.g., if a certain branch of an \symIf-\symElse-statement results in an error, this can help to identify the program's problem.
Observe that a demonic choice $\stmtDemonic{\stmtOne}{\stmtTwo}$ exhibits a property similar to the reductivity of \symAssert-like statements: replacing $\stmtOne$ (or $\stmtTwo$) by the demonic choice between $\stmtOne$ and $\stmtTwo$ introduces more constraints.

\begin{restatable}{lemma}{lemmaSlicingDemonic}\label{lemma:slicing-demonic}
	For all $\stmtOne$ and $\stmtTwo$, the demonic choice $\stmt = \stmtDemonic{\stmtOne}{\stmtTwo}$ fulfills: $\vc{\stmt}(\expa) \expleq \vc{\stmtOne}(\expa)$ and $\vc{\stmt}(\expa) \expleq \vc{\stmtTwo}(\expa)$, for all $\expa \in \Expectations$.
\end{restatable}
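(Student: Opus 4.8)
The plan is to unfold the semantics of the demonic choice and then appeal to the fact that $\sqcap$ is the pointwise minimum on expectations. By the clause for $\symDemonic$ in \cref{fig:heyvl-semantics} we have $\vc{\stmt}(\expa) = \vc{\stmtOne}(\expa) \sqcap \vc{\stmtTwo}(\expa)$, and by \cref{fig:heylo-semantics}, for any expectations $\expb,\expc$ and any state $\State$, $(\expb \sqcap \expc)(\State) = \min\{\expb(\State),\expc(\State)\}$. The claim is then an immediate consequence of the elementary fact that a minimum of two values in $\PosRealsInf$ is below each of them, together with the statewise definition of $\expleq$.

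Concretely, first I would fix an arbitrary $\expa \in \Expectations$ and an arbitrary state $\State \in \States$. Then $\vc{\stmt}(\expa)(\State) = \min\{\vc{\stmtOne}(\expa)(\State),\, \vc{\stmtTwo}(\expa)(\State)\} \leq \vc{\stmtOne}(\expa)(\State)$, and symmetrically $\vc{\stmt}(\expa)(\State) \leq \vc{\stmtTwo}(\expa)(\State)$. Since $\State$ was arbitrary, the definition of $\expleq$ (which quantifies over all states) yields $\vc{\stmt}(\expa) \expleq \vc{\stmtOne}(\expa)$ and $\vc{\stmt}(\expa) \expleq \vc{\stmtTwo}(\expa)$; and since $\expa$ was arbitrary, this holds for all $\expa \in \Expectations$, as required.

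I expect no genuine obstacle here: the statement is just the observation that the binary meet $\sqcap$ is a lower bound of its arguments in the complete lattice $(\Expectations,\expleq)$. One could equally phrase the argument lattice-theoretically, bypassing states entirely, by invoking that $\sqcap$ is the greatest-lower-bound operator for $\expleq$. The only point that merits a careful sentence is that the order on $\Expectations$ is defined statewise, so the pointwise inequalities do assemble into the claimed $\expleq$-inequalities. This lemma then plays, for branching, the role that \cref{lemma:reductive} plays for \symAssert-like statements, enabling the localization of errors to a single branch of a demonic choice.
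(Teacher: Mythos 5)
Your proof is correct and follows exactly the paper's own argument: unfold $\vc{\stmtDemonic{\stmtOne}{\stmtTwo}}(\expa)$ to $\vc{\stmtOne}(\expa) \sqcap \vc{\stmtTwo}(\expa)$ and observe that the (pointwise) minimum is below each of its arguments. Your version is merely a bit more explicit about fixing a state and assembling the pointwise inequalities into $\expleq$, which the paper leaves implicit in the phrase ``by definition of $\sqcap$.''
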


\begin{example}
	Consider~\cref{fig:intro-error-preserving}, which
		{\makeatletter
			\let\par\@@par
			\par\parshape0
			\everypar{}
			\begin{wrapfigure}[17]{r}{.5\textwidth}
    \vspace*{-4em}
    \begin{spreadlines}{0ex}
        \begin{align*}
            \gutter{1} & \coAssume{\nicefrac{1}{3}}                                                       \\
            \gutter{2} & \stmtRasgn{c}{\exprFlip{0.5}}                                                    \\
            \gutter{3} & \stmtAngelicStart{\coAssume{\embed{\neg c}} \symSemi \stmtAsgn{b0}{0}}\}         \\
            \gutter{4} & \stmtElseStart{\coAssume{\embed{c}} \symSemi \stmtAsgn{b0}{1}} \}                \\
            \gutter{5} & \stmtRasgn{c}{\exprFlip{0.5}}                                                    \\
            \gutter{6} & \sliced{\stmtAngelicStart~\coAssume{\embed{\neg c}} \symSemi \stmtAsgn{b1}{0}\}} \\
            \gutter{7} & \sliced{\symElse}~\{\coAssume{\embed{c}} \symSemi \stmtAsgn{b1}{1}\}             \\
            \gutter{8} & \stmtAsgn{r}{b0 + 2 * b1}                                                        \\
            \gutter{9} & \coAssert{\iverson{r \geq 2}}
        \end{align*}
    \end{spreadlines}
    \caption{Error-witnessing slice of the uniform 2-bit integer sampling using two coins.}
    \label{fig:example:error-trace-slice}
\end{wrapfigure}

			\noindent
			uses upper bounds.
			We desugar the conditional choices into explicit angelic choices.\footnote{As is standard in \HeyVL{}, the Boolean embeddings inside the $\symUp$-verification statements are negated (c.f.~\cite[Example~2.3]{DBLP:journals/pacmpl/SchroerBKKM23}).}
			The result and error-witnessing slice are shown in \cref{fig:example:error-trace-slice}, applying the dual version of \cref{lemma:slicing-demonic}.
			In the error-witnessing slice, the angelic choice in \lineNumber{5} is replaced by the second branch.
			Thus, the resulting counterexample slice consists of both branches in \lineNumber{3} and the second branch of \lineNumber{5}.
			These correspond to the two traces that are needed to show the violation of the probability bound.
			\par}%
\end{example}

\subsection{Certificates}\label{sec:slicing-proof-simplification}

Dual to finding a subprogram that witnesses an error, we consider the problem of finding a subprogram that witnesses the validity of a verifying program.
Instead of identifying the source of an error, we are now interested in identifying the parts of the program that are essential for the verification to succeed.

Again, we phrase this as a slicing problem: We want to find a subprogram $\slice$ whose verification guarantees that the program $\stmt$ verifies.
In that sense, $\slice$ contains the necessary parts to \emph{witness the verification} of $\stmt$.
Any parts that are non-essential to the verification are erased.

\begin{definition}[Verification-witnessing slice]\label{def:slice:verification_witnessing}
	A subprogram $\slice$ of a program $\stmt$ is called a \emph{verification-witnessing slice w.r.t.\ $\expa, \expb \in \Expectations$} if
	\[
		\models \lowerTriple{\expa}{\slice}{\expb} \qimplies \models \lowerTriple{\expa}{\stmt}{\expb}~.
	\]
\end{definition}
Subprograms that fail to verify trivially become verification-witnessing slices.
As these slices are not very helpful, we exclude them in the following.

\begin{example}
	Consider the verifying \HeyVL{} program $\stmt$ in~\cref{fig:example:introductory_verification_slice} and the subprogram $\stmtP$ obtained by removing \lineNumber{3} ($\stmtAssume{\iverson{x \leq 10}}$).
	Due to the assumption $\iverson{x \geq 1}$, the final assertion holds in $\geq 50\%$ of the states.
	Thus, $\stmtP$ is a verification-witnessing slice of $\stmt$, showing that $\iverson{x \leq 10}$ is unnecessary for $\infty \expleq \vc{\stmt}(\infty)$.
	\end{example}

As in the previous example, proof assumptions are commonly expressed through \symAssume-statements.
Indeed, introducing additional assumptions can only simplify the proof.
Formally, these statements have a dual property to reductiveness, called \emph{extensiveness}:
\begin{definition}\label{def:extensive}
	A statement $\stmt$ is \emph{extensive} if $\vc{\stmt}(\expa) \expgeq \expa$ for all $\expa \in \Expectations$.
\end{definition}

In terms of \emph{lower} bound Hoare triples, one may require $\models \lowerTriple{\expa}{\stmt}{\expa}$ for all $\expa \in \Expectations$.
$\symAssume$ statements are extensive, and we will call extensive verification statements \emph{$\symAssume$-like}.
\begin{restatable}{lemma}{lemmaExtensive}\label{lemma:extensive}
	$\symAssume$, $\symUp\symAssert$, $\symUp\symHavoc$, $\symUp\symValidate$, $\symTick$ are extensive.
\end{restatable}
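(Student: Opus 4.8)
The plan is to prove extensiveness, i.e. $\vc{\stmt}(\expa) \expgeq \expa$ for all $\expa \in \Expectations$, directly by a case distinction over the five statement forms. In each case I would unfold the definition of $\vc{\cdot}$ from \cref{fig:heyvl-semantics} and check the inequality pointwise at an arbitrary state $\State \in \States$. This mirrors the (dual) argument for \cref{lemma:reductive}, and each case reduces to a one-line consequence of the corresponding semantic clause in \cref{fig:heylo-semantics}.

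Concretely: for $\stmt = \Assume{\expb}$ we have $\vc{\stmt}(\expa) = \expb \impl \expa$, and by the semantics of $\impl$, $(\expb \impl \expa)(\State)$ equals $\infty$ when $\expb(\State) \leq \expa(\State)$ and equals $\expa(\State)$ otherwise, so it is at least $\expa(\State)$ in both cases. For $\stmt = \coAssert{\expb}$ we have $\vc{\stmt}(\expa) = \expb \sqcup \expa = \max\{\expb,\expa\} \expgeq \expa$ immediately. For $\stmt = \coHavoc{x}$ with $\typeof{x}{\typevar}$, $\vc{\stmt}(\expa)(\State) = \sup\{\,\expa(\State\substBy{x}{v}) \mid v \in \typevar\,\}$, and since $\State(x) \in \typevar$ the value $\expa(\State\substBy{x}{\State(x)}) = \expa(\State)$ is among those over which the supremum is taken, hence the supremum is $\geq \expa(\State)$. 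For $\stmt = \coValidate$, $\vc{\stmt}(\expa)(\State) = \expcovalidate{\expa}(\State)$ is $0$ exactly when $\expa(\State) = 0$ and $\infty$ otherwise, so it is always $\geq \expa(\State)$. Finally, for $\stmt = \stmtTick{\aexpr}$, $\vc{\stmt}(\expa) = \expa + \aexpr \expgeq \expa$, using that $\aexpr$ has type $\PosRealsInf$ and is therefore non-negative. Since $\State$ was arbitrary, each case yields $\vc{\stmt}(\expa) \expgeq \expa$, which is extensiveness.

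There is essentially no obstacle here; the only points worth stating explicitly are (i) that $\State(x)$ is itself a legal value of $x$, which is what makes the $\coHavoc$ supremum at least the ``do nothing'' value $\expa(\State)$, and (ii) that the reward $\aexpr$ in $\stmtTick{\aexpr}$ is non-negative by typing. If one prefers, the lemma can instead be obtained from \cref{lemma:reductive} via the systematic lower/upper-bound duality of \HeyVL{}, but the direct pointwise argument above is the shortest route.
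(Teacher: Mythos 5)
Your proof is correct and follows essentially the same route as the paper's: unfold $\vc{\cdot}$ for each statement and verify $\vc{\stmt}(\expa) \expgeq \expa$ pointwise from the semantics of $\impl$, $\sqcup$, $\Sup$, $\expcovalidate{\cdot}$, and $+$. In fact your version is slightly more complete, since you spell out the $\symUp\symHavoc$ case (the supremum ranges over a set containing $\expa(\State\substBy{x}{\State(x)}) = \expa(\State)$), which the paper's proof omits.
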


We obtain the dual of \cref{theorem:reductive-slice}:

\begin{restatable}{theorem}{theoremExtensiveSlice}\label{theorem:extensive-slice}
	Let $\stmtP$ be a subprogram obtained from $\stmt$ by only removing extensive statements.
	Then, $\stmtP$ is a verification-witnessing slice of $\stmt$ w.r.t.\ $(\expa, \expb)$ for all $\expa,\expb \in \Expectations$.
	In particular, we have $\vc{\stmt}(\expa) \expgeq \vc{\stmtP}(\expa)$ for all $\expa \in \Expectations$.
\end{restatable}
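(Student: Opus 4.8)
Let $\stmtP$ be a subprogram obtained from $\stmt$ by only removing extensive statements. Then $\stmtP$ is a verification-witnessing slice of $\stmt$ w.r.t. $(\expa, \expb)$ for all $\expa, \expb \in \Expectations$. In particular, $\vc{\stmt}(\expa) \expgeq \vc{\stmtP}(\expa)$ for all $\expa \in \Expectations$.

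=== PROOF PROPOSAL ===

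The plan is to mirror the proof of \cref{theorem:reductive-slice} in dual form, with the key semantic inequality going in the opposite direction: here we want $\vc{\stmt}(\expa) \expgeq \vc{\stmtP}(\expa)$ rather than $\expleq$. Once this inequality is established, the verification-witnessing property follows immediately: if $\models \lowerTriple{\expa}{\stmtP}{\expb}$, i.e.\ $\expa \expleq \vc{\stmtP}(\expb)$, then by the inequality $\expa \expleq \vc{\stmtP}(\expb) \expleq \vc{\stmt}(\expb)$, hence $\models \lowerTriple{\expa}{\stmt}{\expb}$, which is exactly \cref{def:slice:verification_witnessing}. So the whole content of the proof is the displayed inequality, and I would present the implication as a short corollary of it.

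To prove $\vc{\stmt}(\expa) \expgeq \vc{\stmtP}(\expa)$, I would proceed by structural induction on $\stmt$ (the program with the extensive statements still present), tracking how $\stmtP$ is obtained from it. The base cases are: either the current atomic statement is \emph{not} removed, in which case $\stmtP$ agrees with $\stmt$ here and the inequality is the reflexive $\vc{\stmt}(\expa) \expgeq \vc{\stmt}(\expa)$; or the current atomic statement \emph{is} removed, so $\stmtP = \stmtSkip$ (equivalently, the empty program with $\vc{\cdot}$ acting as the identity on $\expa$), and then we need $\vc{\stmt}(\expa) \expgeq \expa$, which is precisely the definition of $\stmt$ being extensive (\cref{def:extensive}) — and by \cref{lemma:extensive}, all the removable statements ($\symAssume$, $\symUp\symAssert$, $\symUp\symHavoc$, $\symUp\symValidate$, $\symTick$) are indeed extensive. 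For the compositional step, the only interesting case is sequential composition $\stmt = \stmtSeq{\stmtOne}{\stmtTwo}$, where $\stmtP = \stmtSeq{\stmtOneP}{\stmtTwoP}$ with $\stmtOneP, \stmtTwoP$ the correspondingly sliced subprograms. By the induction hypothesis on $\stmtTwo$ we have $\vc{\stmtTwo}(\expa) \expgeq \vc{\stmtTwoP}(\expa)$; applying \cref{thm:monotonicity} (monotonicity of $\symVc$) to $\stmtOne$ gives $\vc{\stmtOne}\bigl(\vc{\stmtTwo}(\expa)\bigr) \expgeq \vc{\stmtOne}\bigl(\vc{\stmtTwoP}(\expa)\bigr)$, and then the induction hypothesis on $\stmtOne$ (instantiated at post $\vc{\stmtTwoP}(\expa)$) gives $\vc{\stmtOne}\bigl(\vc{\stmtTwoP}(\expa)\bigr) \expgeq \vc{\stmtOneP}\bigl(\vc{\stmtTwoP}(\expa)\bigr) = \vc{\stmtP}(\expa)$. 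Chaining these yields $\vc{\stmt}(\expa) \expgeq \vc{\stmtP}(\expa)$. The remaining compositional cases — demonic/angelic choice, conditional choice (desugared to demonic choice with assumptions), and the like — are handled uniformly: each combining operator ($\sqcap$, $\sqcup$) is monotone in both arguments, so applying the induction hypotheses to the branches and then the monotone operator preserves $\expgeq$.

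The main obstacle is bookkeeping rather than mathematical depth: one must set up the induction so that "slicing" is a well-defined relation between $\stmt$ and $\stmtP$ that commutes with the syntactic constructors, and be careful that the induction hypothesis for $\stmtOne$ must be stated for an \emph{arbitrary} post-expectation (not just $\expa$), since in the sequential case it gets applied at the post $\vc{\stmtTwoP}(\expa)$. This is why the statement quantifies over all $\expa, \expb \in \Expectations$. A subtle point worth flagging explicitly is that extensiveness is used only in the "statement removed" base case and nowhere else — the compositional cases need only monotonicity — so the result holds for slicing out \emph{any} set of extensive statements, regardless of context, exactly as claimed. (One should also note the asymmetry with \cref{lemma:slicing-demonic}: removing an extensive statement that happens to sit inside one branch of a demonic choice is still fine, because the branch's $\symVc$ only increases and $\sqcap$ is monotone.) I would close by restating that $\vc{\stmt}(\expa) \expgeq \vc{\stmtP}(\expa)$ together with the definition of $\models \lowerTriple{\cdot}{\cdot}{\cdot}$ gives the verification-witnessing property, completing the proof.
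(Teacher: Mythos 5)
Your proof is correct and follows essentially the same route as the paper: the paper likewise reduces everything to the refinement inequality $\vc{\stmt}(\expa) \expgeq \vc{\stmtP}(\expa)$, established by a structural induction dual to that of \cref{theorem:reductive-slice} (extensiveness in the removal base case, monotonicity elsewhere), and then concludes via a small refinement-to-slice lemma whose content you simply inline.
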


\subsection{Hints}\label{sec:slicing-verification-preserving}

The verification-witnessing slices are based on the idea to obtain a slice which relies on fewer assumptions.
However, often the complications in a proof do not lie in the too specific assumptions, but rather in the sequence of steps taken to reach the conclusion.
These steps usually correspond to intermediate statements like assignments.

This view motivates a third type of slicing, aimed at shrinking a program while still being able to verify it.
Intuitively, this can be seen as tailoring the program to the specification.
The result is a slice that only contains statements required to \emph{preserve verification}.

\begin{definition}[Verification-preserving slice]\label{def:slice:verification_preserving}
	A subprogram $\slice$ of a program $\stmt$ is called a \emph{verification-preserving slice w.r.t.\ $\expa, \expb \in \Expectations$} if
	\[
		\models \lowerTriple{\expa}{\stmt}{\expb} \qimplies \models \lowerTriple{\expa}{\slice}{\expb}~.
	\]
\end{definition}

Such slices are known as \emph{specification-based} slices.
They were introduced by~\cite{DBLP:journals/fac/BarrosCHP12} and subsequently extended to probabilistic programs by~\cite{DBLP:journals/scp/NavarroO22}.
The contraposition of being a verification-preserving slices reads: $\not\models \lowerTriple{\expa}{\slice}{\expb}$ implies $\not\models \lowerTriple{\expa}{\stmt}{\expb}$.
This resembles our definition of error-witnessing slice as expressed by the following lemma.

\begin{restatable}{lemma}{lemmaErrorWitnessing}\label{lemma:error-witnessing}
	For subprogram $\slice$ of program $\stmt$:
	If $\slice$ is an error-witnessing slice of $\stmt$ w.r.t.\ $(\expa, \expb)$, then $\slice$ is a verification-preserving slice of $\stmt$ w.r.t.\ $(\expa, \expb)$.
\end{restatable}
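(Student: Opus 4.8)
The plan is to prove the contrapositive of the verification-preserving implication and then reduce it directly to requirement~(\ref{def:slice:error_witnessing:counterexamples_transfer_to_original}) of \cref{def:slice:error_witnessing}, using that validity of a triple is a \emph{pointwise} property. Recall that $\models \lowerTriple{\expa}{\slice}{\expb}$ unfolds to $\expa \expleq \vc{\slice}(\expb)$, and that the order on $\Expectations$ is defined statewise ($X' \expleq Z$ iff $X'(\State) \leq Z(\State)$ for all $\State \in \States$). The key observation I would record first is therefore the equivalence
\[
	\not\models \lowerTriple{\expa}{\slice}{\expb} \qqiff \exists\,\State'.~ \State' \not\models \lowerTriple{\expa}{\slice}{\expb}~,
\]
and likewise with $\stmt$ in place of $\slice$. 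This is immediate from the definition of a counterexample ($\State' \not\models \lowerTriple{\expa}{\slice}{\expb}$ means $\expa(\State') \not\leq \vc{\slice}(\expb)(\State')$) together with the statewise definition of $\expleq$.

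With this equivalence in hand, the argument is short. Suppose $\slice$ is an error-witnessing slice of $\stmt$ w.r.t.\ $(\expa,\expb)$, and assume $\not\models \lowerTriple{\expa}{\slice}{\expb}$; it suffices to show $\not\models \lowerTriple{\expa}{\stmt}{\expb}$. By the equivalence above, fix a state $\State'$ with $\State' \not\models \lowerTriple{\expa}{\slice}{\expb}$. Requirement~(\ref{def:slice:error_witnessing:counterexamples_transfer_to_original}) of \cref{def:slice:error_witnessing} then gives $\State' \not\models \lowerTriple{\expa}{\stmt}{\expb}$, and applying the equivalence once more (now for $\stmt$) yields $\not\models \lowerTriple{\expa}{\stmt}{\expb}$. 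This establishes the contraposition $\not\models \lowerTriple{\expa}{\slice}{\expb} \implies \not\models \lowerTriple{\expa}{\stmt}{\expb}$, which is exactly $\models \lowerTriple{\expa}{\stmt}{\expb} \implies \models \lowerTriple{\expa}{\slice}{\expb}$, i.e.\ $\slice$ is a verification-preserving slice w.r.t.\ $(\expa,\expb)$. I would also note in passing that requirement~(\ref{def:slice:error_witnessing:counterexample_exists}) is not used here; it only serves to rule out the vacuous slice and plays no role in the implication.

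There is no real obstacle: the entire proof is the bookkeeping that moves between the ``global'' non-validity $\not\models$ of a triple over a whole statement and the ``local'' state-indexed non-validity $\State' \not\models$, plus a single invocation of requirement~(\ref{def:slice:error_witnessing:counterexamples_transfer_to_original}). The only thing to be careful about is to make that quantifier passage explicit once, since everything else is definitional.
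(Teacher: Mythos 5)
Your proof is correct and takes essentially the same route as the paper: both arguments are just a contraposition of requirement~(2) of \cref{def:slice:error_witnessing}, with requirement~(1) unused. The only (cosmetic) difference is that you contrapose the global implication and extract an explicit counterexample state, whereas the paper contraposes pointwise under the $\forall \sigma'$ and then reads off the global implication; your version makes the quantifier passage that the paper leaves implicit fully explicit.
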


\marknewtext{In particular, it follows from~\cref{theorem:reductive-slice} that slicing reductive statements results in a verification-preserving slice. But as shown in the following example, we can also slice sampling statements to obtain a verification-preserving slice.}

\begin{example}
	\begin{figure}[t]
		\begin{subfigure}[c]{.45\textwidth}
			\begin{spreadlines}{0ex}
				\begin{align*}
					\gutter{1}    & \sliced{\stmtProb{0.5}{\stmtAsgn{b0}{0}}{\stmtAsgn{b0}{1}}} \\
					              &                                                             \\
					\gutter{2}    & \stmtProb{0.5}{\sliced{\stmtAsgn{b1}{0}}}{\stmtAsgn{b1}{1}} \\
					              &                                                             \\
					\gutter{3}    & \stmtAsgn{r}{b0 + 2 * b1}                                   \\
					\gutter{Goal} & \Pr(r \geq 2) \geq \nicefrac{1}{2}
				\end{align*}
			\end{spreadlines}
			\caption{\pGCL{} program.}
			\label{fig:example:specification_based_slice:pgcl}
		\end{subfigure}
		\hfill
		\begin{subfigure}[c]{.45\textwidth}
			\begin{spreadlines}{0ex}
				\begin{align*}
					\gutter{1}    & \sliced{\stmtRasgn{c}{\exprFlip{0.5}}}                           \\
					\gutter{2}    & \stmtIf{c}{\sliced{\stmtAsgn{b0}{0}}}{\sliced{\stmtAsgn{b0}{1}}} \\
					\gutter{3}    & \stmtRasgn{c}{\exprFlip{0.5}}                                    \\
					\gutter{4}    & \stmtIf{c}{\sliced{\stmtAsgn{b1}{0}}}{\stmtAsgn{b1}{1}}          \\
					\gutter{5}    & \stmtAsgn{r}{b0 + 2 * b1}                                        \\
					\gutter{Goal} & \Pr(r \geq 2) \geq \nicefrac{1}{2}
				\end{align*}
			\end{spreadlines}
			\caption{\HeyVL{} encoding.}
			\label{fig:example:specification_based_slice:heyvl}
		\end{subfigure}
		\caption{Uniform 2-bit integer sampling using two fair coin flips and its verification-preserving slice w.r.t.\ $(\nicefrac{1}{2}, \iverson{r \geq 2})$.}
		\label{fig:example:specification_based_slice}
	\end{figure}

	\Cref{fig:example:specification_based_slice} shows a program in the \pGCL{} language and its \HeyVL{} encoding $\stmt$, with corresponding slices.
	The original program in~\cref{fig:example:specification_based_slice:pgcl} samples a two-bit integer by performing two coin flips, inspired by~\cite[Example~4.3]{DBLP:journals/scp/NavarroO22}.
	It should generate an integer that is at least 2 with probability at least $50\%$.
	Therefore, we fix pre $\expa = \frac{1}{2}$ and post $\expb = \iverson{r \geq 2}$.
	A verification-preserving slice of $\stmt$ can now be tailored to $(\expa, \expb)$, e.g.\ by removing any sampling of the least significant bit $b0$, yielding the slice in~\cref{fig:example:specification_based_slice:heyvl}.
	Notice that $\stmt$ has no error-witnessing slices w.r.t.\ $(\expa, \expb)$, as such a slice would contradict $\models \lowerTriple{\expa}{\stmt}{\expb}$.
\end{example}

\section{Slicing on Proof Rule Encodings}\label{sec:slicing_on_encodings}

\newcommand{\varInvariant}{\ensuremath{\colheylo{I}}}
\newcommand{\varBody}{\ensuremath{\color{heyvlColor}{C}}}

High-level language constructs are either encoded or reasoned about using proof rules.
For example, loops are typically analyzed using proof rules that provide sufficient conditions for lower or upper bounds of the meaning of the loop.
In this section, we consider the \HeyVL{} encodings to embed specifications, as well as \emph{specification statements}~\cite{DBLP:journals/toplas/Morgan88} that encode ``placeholders'' and can be used to encode procedure calls.

We will apply slicing to the \HeyVL{} encodings of high-level program language features to obtain diagnostic information.
We demonstrate how a \HeyVL{} slice can be interpreted to obtain information about the original program, by applying slicing to different proof rules encoded in \HeyVL{}.
To indicate which part of the proof rule has been violated when verification fails, our implementation includes support for annotations on statements.\footnote{The user can write e.g. $\stmtAnnotate{\texttt{error\_msg}}{msg}{\stmtAssert{\expa}}$ to add a custom error message to an assertion, or write $\stmtAnnotate{\texttt{success\_msg}}{msg}{\stmtAssume{\expa}}$ to add a custom hint to an assumption, which are based on an erroring or verifying slice, respectively.}
We explain the diagnostics for the verification of specifications, specification statements, and the Park induction rule to reason about loops.
nostics.

\subsection{Verifying Specifications}
\label{sec:verifying-specifications}

A standard code verification approach~\cite{Mueller19} is to convert a triple of pre, program, and post into one IVL program.
This program is then checked against a fixed pre and post pair (classically $\exprTrue$ and $\exprTrue$) and is said to \emph{verify} if this check succeeds.
A similar approach can be taken with quantitative pre and post in \HeyVL{}~\cite{DBLP:journals/pacmpl/SchroerBKKM23}.
Let $\stmt$ be a \HeyVL{} program and $\expa, \expb \in \Expectations$. Then:
\begin{align*}
	 & \models \lowerTriple{\expa}{\stmt}{\expb} \qiff \Assume{\expa}\symSemi \stmt\symSemi \Assert{\expb} \text{ verifies.} %
\end{align*}
A dual encoding works for upper bounds using the dual statements $\symUp\symAssume$ and $\symUp\symAssert$.
We can now slice the specification itself by slicing the encoding w.r.t. pre $\infty$ and post $\infty$, using that $\symAssert$ is reductive and $\symAssume$ is extensive:

\begin{restatable}{theorem}{thmVerificationEncodingComponents}\label{thm:verification-encoding-components}
	Let $\stmt = \stmtAssume{\expa}\symSemi \stmtP\symSemi \stmtAssert{\expb}$.
	\begin{enumerate}
		\item \textbf{Removing the $\symAssert$ statement shows unverifiability with any post.} \\ If $\stmtAssume{\expa}\symSemi \stmtP$ is an error-witnessing slice of $\stmt$, then $\not\models \lowerTriple{\expa}{\stmtP}{\expc}$ for $\expc \in \Expectations$.
		\item \textbf{Removing the $\symAssume$ statement shows verifiability with any pre.} \\ If $\stmtP\symSemi \stmtAssert{\expb}$ is a verifying verification-witnessing slice of $\stmt$, then $\models \lowerTriple{\expc}{\stmtP}{\expb}$ for $\expc \in \Expectations$.
	\end{enumerate}
\end{restatable}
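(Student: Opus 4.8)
The plan is to reduce both items to plain inequalities between $\symVc$-expectations by unfolding the encoding and the slice definitions, and then discharge them using the $\symVc$-laws for sequencing, $\symAssume$, and $\symAssert$, together with monotonicity of $\symVc$ (\cref{thm:monotonicity}). Throughout I read the slices in the theorem as being taken w.r.t.\ pre $\infty$ and post $\infty$, as in the paragraph preceding the statement. Under this convention the first requirement of \cref{def:slice:error_witnessing} for $\stmtAssume{\expa}\symSemi\stmtP$ says exactly that $\not\models \lowerTriple{\infty}{\stmtAssume{\expa}\symSemi\stmtP}{\infty}$, i.e.\ that ``$\stmtAssume{\expa}\symSemi\stmtP$ does not verify'', and ``$\stmtP\symSemi\stmtAssert{\expb}$ is verifying'' means $\models \lowerTriple{\infty}{\stmtP\symSemi\stmtAssert{\expb}}{\infty}$. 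Note that the $\expa,\expb$ here denote the assumed/asserted \HeyLo{} expressions of $\stmt$, not the slicing pre/post (which are both $\infty$).

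For the first item I would start from this non-verification: there is a state $\sigma$ with $\vc{\stmtAssume{\expa}\symSemi\stmtP}(\infty)(\sigma) \neq \infty$. Unfolding $\symVc$ for sequencing and $\symAssume$ yields $\vc{\stmtAssume{\expa}\symSemi\stmtP}(\infty) = \expa \impl \vc{\stmtP}(\infty)$, so $(\expa \impl \vc{\stmtP}(\infty))(\sigma) < \infty$. By the semantics of $\impl$ this is only possible if $\expa(\sigma) > \vc{\stmtP}(\infty)(\sigma)$; in particular $\not\models \lowerTriple{\expa}{\stmtP}{\infty}$. To obtain the claim for an arbitrary post $\expc \in \Expectations$, I invoke monotonicity: since $\expc \expleq \infty$ we get $\vc{\stmtP}(\expc) \expleq \vc{\stmtP}(\infty)$, hence $\vc{\stmtP}(\expc)(\sigma) \leq \vc{\stmtP}(\infty)(\sigma) < \expa(\sigma)$, so $\sigma$ still witnesses $\not\models \lowerTriple{\expa}{\stmtP}{\expc}$. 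Only the first requirement of the error-witnessing slice is used here; the second one holds automatically since $\symAssert$ is reductive (cf.\ \cref{lemma:reductive,theorem:reductive-slice}) but is not needed.

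For the second item I would compute $\vc{\stmtP\symSemi\stmtAssert{\expb}}(\infty) = \vc{\stmtP}\bigl(\vc{\stmtAssert{\expb}}(\infty)\bigr) = \vc{\stmtP}(\expb \sqcap \infty) = \vc{\stmtP}(\expb)$. Hence ``$\stmtP\symSemi\stmtAssert{\expb}$ verifies'' is exactly $\infty \expleq \vc{\stmtP}(\expb)$, i.e.\ $\vc{\stmtP}(\expb)$ is pointwise $\infty$. Then for every pre $\expc \in \Expectations$ we trivially have $\expc \expleq \infty = \vc{\stmtP}(\expb)$, which is $\models \lowerTriple{\expc}{\stmtP}{\expb}$. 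The ``verification-witnessing'' part of the hypothesis is not strictly required for this inequality; it is what makes the conclusion useful in the pipeline, since by \cref{def:slice:verification_witnessing} (here applicable because $\symAssume$ is extensive, cf.\ \cref{lemma:extensive,theorem:extensive-slice}) it additionally certifies that $\stmt$ itself verifies.

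The whole argument is a chain of routine unfoldings, so there is no real obstacle. The two points needing care are: (i) fixing the convention that the slices are taken w.r.t.\ $(\infty,\infty)$, so that ``error-witnessing slice'' collapses to ``does not verify'' and ``verifying slice'' to the $\symVc$-equation above; and (ii) reading off from the $\impl$-semantics that a finite value of $\expa \impl \vc{\stmtP}(\infty)$ at $\sigma$ forces the \emph{strict} inequality $\expa(\sigma) > \vc{\stmtP}(\infty)(\sigma)$ — it is precisely this strictness that lets monotonicity transport the counterexample from post $\infty$ to an arbitrary post $\expc$.
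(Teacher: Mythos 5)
Your proof is correct and follows essentially the same route as the paper: both fix the slicing pre/post to $(\infty,\infty)$, unfold the $\symVc$ of the encoding, use monotonicity of $\symVc$ to transport the failure from post $\infty$ to an arbitrary post in item~(1), and reduce item~(2) to the trivial $\expc \expleq \infty = \vc{\stmtP}(\expb)$. The only cosmetic difference is that you argue pointwise via an explicit counterexample state and the strictness of $\impl$, whereas the paper argues at the level of the expectation ordering via adjointness and transitivity; both are sound.
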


Both results indicate possible issues within the program or specification.
An always failing verification hints at contradicting requirements, while an always successful verification might indicate that the program is ineffective in establishing the post, which is possibly too weak.

\subsection{Specification Statements}\label{sec:specification-statements}

\emph{Specification statements}~\cite{DBLP:journals/toplas/Morgan88} are useful during incremental development.
They are used as placeholders in a program, and explicitly state the requirements and guarantees required from the parts that are omitted (program \say{holes}).
Additionally, one can specify a collection of variables $\vec{v}$ which is allowed to change in the \say{hole}.
All other variables are assumed to remain unchanged.

The specification statement $\stmtSpec{\vec{v}}{\expa}{\expb}$ is a placeholder for all statements which have pre of at least $\expa(\State)$ when executed with post $\expb(\State)$ in state $\State$, i.e., it satisfies $\models \lowerTriple{\expa}{(\stmtSpec{\vec{v}}{\expa}{\expb})}{\expb}$.
The specification statement $\stmtSpec{\vec{v}}{\expa}{\expb}$ is encoded in \HeyVL{} as~\cite[Section 3.5]{DBLP:journals/pacmpl/SchroerBKKM23}:
\[
	\stmtSpec{\vec{v}}{\expa}{\expb} \eeq \stmtAssert{\expa}\symSemi \stmtHavoc{\vec{v}}\symSemi \stmtValidate\symSemi \stmtAssume{\expb}~.
\]
The encoding closely mirrors the intuition of the classical specification statement: we ensure (\emph{assert}) that the pre $\expa$ holds.
Then, we abstract from the modifying variables (with the $\symHavoc$ statement) and assume the post $\expb$ holds.
In \HeyVL{}, we are not limited to predicates and can reason about expectations.
The only difference is that an additional $\stmtValidate$ statement is necessary for correctness.

\begin{restatable}{theorem}{thmSpecStatementMessages}\label{theorem:spec-statement-messages}
	Let $C = \stmtSpec{\vec{v}}{\expa}{\expb}$ and its \HeyVL{} encoding be $\stmtC$.

	\begin{enumerate}
		\item \textbf{Pre does not entail specification's pre.} If $\stmtP$ is a minimal error-witnessing slice of $\stmtC$ w.r.t.\ $(\expaP, \expbP)$ which contains $\Assert{\expa}$, then $\expaP(\State') \not\leq \expa(\State')$ for some $\State'$.
		\item \textbf{Specification statement can be removed.} If $\stmtP$ is a verification-witnessing slice of $\stmtC$ w.r.t.\ $(\expaP, \expbP)$ which verifies and does not contain $\Assume{\expb}$, then $\expaP \expleq \expbP$ holds.
	\end{enumerate}
\end{restatable}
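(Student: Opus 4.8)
The statement concerns the \HeyVL{} encoding $\stmtC = \stmtAssert{\expa}\symSemi \stmtHavoc{\vec{v}}\symSemi \stmtValidate\symSemi \stmtAssume{\expb}$ of a specification statement, and claims two diagnostic consequences from the existence of certain slices. The plan is to unfold the semantics of $\stmtC$ via the compositional rules in \cref{fig:heyvl-semantics}, identify what each slice looks like, and then extract the asserted inequalities. First I would compute $\vc{\stmtC}(\expc)$ for an arbitrary post $\expc$: reading right to left, $\vc{\stmtAssume{\expb}}(\expc) = \expb \impl \expc$, then $\vc{\stmtValidate}(\expb \impl \expc) = \expvalidate{\expb \impl \expc}$, then $\vc{\stmtHavoc{\vec{v}}}(\cdot) = \iquant{\vec{v}}{\expvalidate{\expb \impl \expc}}$, and finally the outer assert gives $\vc{\stmtC}(\expc) = \expa \sqcap \iquant{\vec{v}}{\expvalidate{\expb \impl \expc}}$. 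Having this closed form is the backbone for both parts.

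For part~(1): since $\stmtP$ is an error-witnessing slice containing $\Assert{\expa}$, it has the form $\stmtAssert{\expa}\symSemi \stmtP'$ where $\stmtP'$ is obtained by erasing some suffix of $\{\stmtHavoc{\vec{v}}, \stmtValidate, \stmtAssume{\expb}\}$. Here minimality matters: I would argue that if $\stmtAssert{\expa}$ is in a \emph{minimal} error-witnessing slice, then removing it must destroy the error-witnessing property, i.e.\ without the assert the program would verify (with pre $\expaP$, post $\expbP$) — hence the error is caused precisely by $\expa$. Concretely, $\vc{\stmtAssert{\expa}\symSemi \stmtP'}(\expbP) = \expa \sqcap \vc{\stmtP'}(\expbP)$, and the counterexample witnessing $\not\models \lowerTriple{\expaP}{\stmtP}{\expbP}$ gives a state $\State'$ with $\expaP(\State') \not\leq \expa(\State') \sqcap \vc{\stmtP'}(\expbP)(\State')$; but since removing the assert yields a verifying (or at least no-longer-witnessing) program, the $\vc{\stmtP'}(\expbP)(\State')$ conjunct alone cannot be the obstruction, forcing $\expaP(\State') \not\leq \expa(\State')$. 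This is the step I expect to be the main obstacle — pinning down exactly how minimality is used and making the "the other conjunct is not to blame" argument airtight, possibly invoking \cref{lemma:reductive}/\cref{theorem:reductive-slice} to control the erased reductive statements ($\stmtHavoc$, $\stmtValidate$) and the extensiveness of the erased $\stmtAssume$ via \cref{lemma:extensive}.

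For part~(2): $\stmtP$ is a verifying verification-witnessing slice not containing $\Assume{\expb}$, so $\stmtP$ is a prefix of $\stmtAssert{\expa}\symSemi \stmtHavoc{\vec{v}}\symSemi \stmtValidate$ (possibly with some of the reductive statements also erased). The key observation is that $\stmtP$ consists entirely of reductive statements ($\symAssert$, $\symHavoc$, $\symValidate$ are all reductive by \cref{lemma:reductive}), so $\vc{\stmtP}(\expc) \expleq \expc$ for every $\expc$. Instantiating with the verification-witnessing triple $\models \lowerTriple{\expaP}{\stmtP}{\expbP}$ — i.e.\ $\expaP \expleq \vc{\stmtP}(\expbP)$ — and combining with $\vc{\stmtP}(\expbP) \expleq \expbP$ immediately yields $\expaP \expleq \expbP$. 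I would just need to confirm that whatever subset of the three reductive statements survives in $\stmtP$, reductivity is preserved under sequential composition (which follows from monotonicity, \cref{thm:monotonicity}, plus \cref{lemma:reductive}). The dual remarks after the theorem (about diagnostic interpretation) need no proof. Overall, part~(2) is essentially immediate from reductivity; the real work is the minimality bookkeeping in part~(1).
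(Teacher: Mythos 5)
Your proposal is correct and follows essentially the same route as the paper: for part~(1), minimality plus reductivity of $\Assert{\expa}$ forces the slice with the assert removed to verify, whence the obstruction at the counterexample state must lie in the $\expa$-conjunct of $\expa \sqcap \vc{\stmtP'}(\expbP)$; for part~(2), the surviving statements are all reductive, so $\expaP \expleq \vc{\stmtP}(\expbP) \expleq \expbP$. The step you flagged as the main obstacle is exactly the paper's argument and goes through as you sketched (your pointwise phrasing at the counterexample state is a cosmetic variant of the paper's global contrapositive).
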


\subsection{Park Induction for Loop Reasoning}\label{sec:park-induction}

\newcommand{\park}[3]{Park({#1},{#2},{#3})}

\emph{Park induction} is an effective proof rule in probabilistic program verification~\cite{parkFixpointInductionProofs1969,DBLP:phd/dnb/Kaminski19}.
It can be seen as a generalization of the classical Hoare logic proof rule for loops: an \emph{invariant} is \emph{inductive}, if whenever it holds at the start of the loop, it also holds at its end.
For weakest liberal pre semantics $(\symWlp$)\footnote{For an expectation $\expa \in \Expectations$ where $\expa \expleq 1$, the \emph{weakest liberal pre-expectation} $\wlp{\stmt}(\expa)(\State)$ is the expected value of $\expa$ after executing $\stmt$ when starting in state $\State \in \States$, assigning $1$ to those executions which do not terminate. Dually, the \emph{weakest pre-expectation semantics} $\wp{\stmt}(\expa)(\State)$ assigns $0$ to diverging executions, and is defined for all expectations $\expa \in \Expectations$. A detailed exposition is given in \cite{DBLP:phd/dnb/Kaminski19}.}, the quantitative version of Park induction for a loop $\stmtWhile{b}{\varBody}$ and invariant $\varInvariant$ for post $\expb$ reads:
\begin{align*}
	\underbrace{
		\varInvariant \expleq \left( \embed{b} \impl \wlp{\varBody}(\varInvariant) \right)
		\sqcap
		\left(\embed{\neg b} \impl \expb\right)
	}_{\varInvariant\text{ is an inductive invariant}}
	\quad \text{implies}
	\quad
	\underbrace{
		\varInvariant \expleq \wlp{\stmtWhile{b}{\varBody}}(\expb)
	}_{
		\varInvariant\text{ underapproximates the loop's $\symWlp$}
	}
	.
\end{align*}
Thus, it can be used to establish lower bounds on the weakest liberal pre-expectation of a loop.
Dually, for weakest pre-expectation semantics it enables establishing upper bounds.
Again, we focus on lower bounds.
\Cref{fig:heyvl-park-induction} shows the encoding $\park{b}{\varBody}{\varInvariant}$ of the proof rule in \HeyVL{}.
Its correctness was shown in~\cite{DBLP:journals/pacmpl/SchroerBKKM23}.
We use slicing to justify each statement in this \HeyVL{} encoding.

\begin{restatable}{theorem}{theoremParkErrorMessages}\label{theorem:park-error-messages}
	Let $\stmtP$ be a slice of $\stmt = \park{b}{\varBody}{\varInvariant}$ as in~\cref{lemma:park-error-components} and be minimal such that $\State \not\models \lowerTriple{\expa}{\stmtP}{\expb}$ for some state $\State$.
	\begin{enumerate}
		\item \textbf{Pre does not entail invariant.} If $\stmtP$ includes \stmtNo{1}, then $\expa(\State') \not\leq \varInvariant(\State')$ for some state $\State'$.
		\item \textbf{Invariant not inductive.} If $\stmtP$ includes the assertion \stmtNo{2}, then there is a state $\State'$ where the loop guard $b$ is true and $\State' \not\models \lowerTriple{\varInvariant}{\varBody}{\varInvariant}$.
		\item \textbf{Invariant does not entail post.} If $\stmtP$ includes neither assertion \stmtNo{1} nor \stmtNo{2}, and $\varBody$ contains neither $\symAssert$ nor $\symUp\symAssume$ statements, then there is a state $\State'$ for which the loop guard $b$ does not hold and $\varInvariant(\State') \not\leq \expb(\State')$.
	\end{enumerate}
\end{restatable}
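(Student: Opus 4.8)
The plan is to unfold $\symVc$ on the slices permitted by \cref{lemma:park-error-components} and then read off the witness state from the minimality of $\stmtP$. Recall that $\park{b}{\varBody}{\varInvariant}$ (\cref{fig:heyvl-park-induction}) has the form $\stmtNo{1}\symSemi\stmtHavoc{\vec v}\symSemi\stmtValidate\symSemi\stmtAssume{\varInvariant}\symSemi\stmtIf{b}{\varBody\symSemi\stmtNo{2}\symSemi\stmtAssume{\embed{\false}}}{\stmtSkip}$, where $\stmtNo{1}=\stmtNo{2}=\stmtAssert{\varInvariant}$ and $\vec v$ are the variables modified by $\varBody$. Since $\stmtNo{1},\stmtNo{2},\symHavoc,\symValidate$ are all reductive (\cref{lemma:reductive}), \cref{theorem:reductive-slice} already gives that any such slice with a counterexample is error-witnessing and that erasing these statements never lowers $\symVc$; it remains to exhibit the promised state. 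Using $\vc{\stmtAssume{\embed{\false}}}(g)=\embed{\false}\impl g=\infty$, $\vc{\stmtAssert{\varInvariant}}(\infty)=\varInvariant$, and the adjoint law $\vc{\stmtAssume{\varInvariant}}(g)=\varInvariant\impl g$, one obtains, for the slice keeping $\stmtNo{1}$ and $\stmtNo{2}$, that $\vc{\stmtP}(\expb)(\State)=\varInvariant(\State)\sqcap c(\State)$ with $c(\State)=\iquant{\vec v}{\expvalidate{\varInvariant\impl h}}(\State)$ and $h=(\embed{b}\impl\vc{\varBody}(\varInvariant))\sqcap(\embed{\neg b}\impl\expb)$; here $c(\State)\in\{0,\infty\}$ is $\infty$ iff every $\vec v$-variant $\State'$ of $\State$ satisfies $\varInvariant(\State')\expleq h(\State')$, i.e.\ (by adjointness) both $(\varInvariant\sqcap\embed{b})(\State')\leq\vc{\varBody}(\varInvariant)(\State')$ and $(\varInvariant\sqcap\embed{\neg b})(\State')\leq\expb(\State')$. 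Erasing $\stmtNo{2}$ only replaces $h$ by $h''=(\embed{b}\impl\vc{\varBody}(\infty))\sqcap(\embed{\neg b}\impl\expb)\expgeq h$, with resulting value $c''\expgeq c$; erasing $\stmtNo{1}$ only drops the outer $\varInvariant\sqcap$; erasing $\symHavoc$ or $\symValidate$ (if permitted by \cref{lemma:park-error-components}) removes the $\vec v$-quantifier or the $0/\infty$ collapse, in which case the witness $\State'$ below is simply $\State$.

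Any counterexample $\State$ satisfies $\expa(\State)>\vc{\stmtP}(\expb)(\State)\geq 0$, hence $\expa(\State)>0$. \emph{Part 1:} by minimality, the slice $\stmtP_1$ obtained by additionally erasing $\stmtNo{1}$ satisfies $\models\lowerTriple{\expa}{\stmtP_1}{\expb}$; since $\stmtP=\stmtAssert{\varInvariant}\symSemi\stmtP_1$ gives $\vc{\stmtP}(\expb)=\varInvariant\sqcap\vc{\stmtP_1}(\expb)$, combining $\expa(\State)\not\leq\varInvariant(\State)\sqcap\vc{\stmtP_1}(\expb)(\State)$ with $\expa(\State)\leq\vc{\stmtP_1}(\expb)(\State)$ yields $\expa(\State)\not\leq\varInvariant(\State)$, so $\State'=\State$ works. \emph{Part 3:} here $\stmtNo{1}$ and $\stmtNo{2}$ are both absent; since $\varBody$ contains no $\symAssert$ and no $\symUp\symAssume$, an easy induction on $\varBody$ gives $\vc{\varBody}(\infty)=\infty$ (only $\symAssert$ and $\symUp\symAssume$ among atomic statements can lower the post $\infty$), so the then-branch evaluates to $\embed{b}\impl\infty=\infty$ and $\vc{\stmtP}(\expb)(\State)$ becomes the $0/\infty$ value that is $\infty$ iff $(\varInvariant\sqcap\embed{\neg b})(\State')\leq\expb(\State')$ for all $\vec v$-variants $\State'$. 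As $\State$ is a counterexample with $\expa(\State)>0$, this value is $0$, so some $\vec v$-variant $\State'$ has $(\varInvariant\sqcap\embed{\neg b})(\State')\not\leq\expb(\State')$, which forces $b$ false at $\State'$ and $\varInvariant(\State')\not\leq\expb(\State')$.

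\emph{Part 2} is the crux. By minimality, the slice $\stmtP_2$ obtained by additionally erasing $\stmtNo{2}$ satisfies $\models\lowerTriple{\expa}{\stmtP_2}{\expb}$, and by the computation $\vc{\stmtP_2}(\expb)$ agrees with $\vc{\stmtP}(\expb)$ except that $c$ is replaced by $c''\expgeq c$, both $0/\infty$-valued. From $\vc{\stmtP}(\expb)(\State)<\expa(\State)\leq\vc{\stmtP_2}(\expb)(\State)$ the two disagree at $\State$, so $c(\State)\neq c''(\State)$, forcing $c(\State)=0$ and $c''(\State)=\infty$. Hence some $\vec v$-variant $\State'$ of $\State$ has $\varInvariant(\State')\not\leq h(\State')$, while \emph{every} $\vec v$-variant satisfies $\varInvariant(\State')\leq h''(\State')$. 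For the former $\State'$: as $h''\expleq(\embed{\neg b}\impl\expb)$, the bound $\varInvariant(\State')\leq h''(\State')$ excludes $\varInvariant(\State')\not\leq(\embed{\neg b}\impl\expb)(\State')$, so $\varInvariant(\State')\not\leq h(\State')$ must be due to $\varInvariant(\State')\not\leq(\embed{b}\impl\vc{\varBody}(\varInvariant))(\State')$; this in turn forces $b$ true at $\State'$ and $\varInvariant(\State')\not\leq\vc{\varBody}(\varInvariant)(\State')$, i.e.\ $\State'\not\models\lowerTriple{\varInvariant}{\varBody}{\varInvariant}$. The main obstacle is exactly this last disjunct elimination: ruling out the \emph{part 3} conclusion here genuinely needs that the $\stmtNo{2}$-free slice still verifies — which is precisely what minimality supplies — together with some care about how erasing $\symHavoc$/$\symValidate$ degrades the $\vec v$-quantifier; the remaining steps are routine $\symVc$ bookkeeping.
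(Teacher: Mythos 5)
Your proof is correct and follows essentially the same route as the paper's: unfold $\symVc$ into the pre/inductivity/post conjuncts, use minimality of the slice to conclude that the variant with the relevant assertion additionally erased verifies, and then extract the witness state from the violated conjunct via adjointness of $\impl$/$\sqcap$ and the $0/\infty$-valuedness of the validated quantifier. The one step you call ``the crux'' --- ruling out the Part-3 failure mode inside Part 2 --- is dissolved in the paper by \cref{lemma:park-error-components}, which commutes $\impl$ with $\sqcap$ up front so that $\ParkIndCondInd$ and $\ParkIndCondPost$ appear as \emph{separate} conjuncts; minimality then yields $\expa \not\expleq \ParkIndCondInd(\stmtP)$ directly and no disjunct elimination is needed, whereas your merged $c$/$c''$ formulation makes this extra (correct) argument necessary. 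Your hedging about erasing $\symHavoc$/$\symValidate$ is moot: slices ``as in \cref{lemma:park-error-components}'' remove only subsets of \stmtNo{1} and \stmtNo{2}, so that case never arises.
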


\begin{figure}[t]
	\begin{minipage}[t]{0.45\textwidth}
		\centering
		\begin{spreadlines}{0ex}
			\begin{align*}
				                                          &                                                           \\
				{\scriptsize\stmtNo{1}} \quad \gutter{1}  & \stmtAssert{\varInvariant}\symSemi                        \\
				\gutter{2}                                & \stmtHavoc{variables}\symSemi                             \\
				\gutter{3}                                & \stmtValidate\symSemi                                     \\
				{\scriptsize\stmtNo{I}} \quad \gutter{4}  & \stmtAssume{\varInvariant}\symSemi                        \\
				\gutter{5}                                & \stmtIfStart{b}                                           \\
				\gutter{6}                                & \quad \varBody\symSemi                                    \\
				{\scriptsize\stmtNo{2}} \quad \gutter{7}  & \quad \stmtAssert{\varInvariant} \symSemi                 \\
				{\scriptsize\stmtNo{II}} \quad \gutter{8} & \quad \stmtAssume{\embed{\exprFalse}}                     \\
				\gutter{9}                                & \blockEnd~\stmtElseStart~\blockEnd \quad \intersem{\expb} \\
				                                          &
			\end{align*}
		\end{spreadlines}
		\vspace*{-0.7em} %
		\Description{Encoding of Park induction.}
		\caption{Encoding $\park{b}{\varBody}{\varInvariant}$ of Park induction to underapproximate $\wlp{\stmtWhile{b}{\varBody}}(\expb)$.}
		\label{fig:heyvl-park-induction}
	\end{minipage}
	\hfill
	\begin{minipage}[t]{0.45\textwidth}
		\centering
		\begin{spreadlines}{0ex}
			\begin{align*}
				\gutter{1}  & \stmtAsgn{x}{init\_x} \symSemi                          \\
				\gutter{2}  & \stmtAsgn{cont}{\exprTrue} \symSemi                     \\
				\gutter{3}  & \stmtAnnotate{invariant}{\varInvariant}{}               \\
				\gutter{4}  & \headerWhile{cont}~\blockStart                          \\
				\gutter{5}  & \quad \stmtRasgn{prob\_choice}{\exprFlip{0.5}} \symSemi \\
				\gutter{6}  & \quad \stmtIfStart{prob\_choice}                        \\
				\gutter{7}  & \quad \quad \stmtAsgn{cont}{\exprFalse}                 \\
				\gutter{8}  & \quad \blockEnd~\stmtElseStart                          \\
				\gutter{9}  & \quad \quad \stmtAsgn{x}{x + 1}                         \\
				\gutter{10} & \quad \blockEnd                                         \\
				\gutter{11} & \blockEnd
			\end{align*}
		\end{spreadlines}
		\caption{Geometric loop $\stmt$ with different loop invariants $\varInvariant$ to verify $\models \upperTriple{init\_x + 1}{\stmt}{x}$.}
		\label{fig:geometric_loop:invariant_diagnostics}
	\end{minipage}
\end{figure}

These three error messages closely correspond to error messages in qualitative verification.
Dually, assumptions can be analyzed to find out how they influence the final proof obligation.
Here, we obtain diagnostics from \emph{removed} assumptions, as opposed to the \emph{remaining} assertions in the above theorem.

\begin{restatable}{theorem}{theoremParkVerificationMessages}\label{theorem:park-verification-messages}
	Let $\stmt = \park{b}{\varBody}{\varInvariant}$ as in \Cref{fig:heyvl-park-induction} and $\stmtP$ be obtained by removing any subset of the assumptions \stmtNo{I} or \stmtNo{II}, and let $\stmtP$ verify, i.e.\ $\models \lowerTriple{\expa}{\stmtP}{\expb}$.
	Then:
	\begin{enumerate}
		\item\label{item:invariant-not-necessary} \textbf{Assuming the invariant is not necessary.} If $\stmtP$ does not include \stmtNo{I}, then $\stmt$ verifies with invariant $\varInvariant = \infty$, i.e. $\models \lowerTriple{\expa}{\park{b}{\varBody}{\infty}}{\expb}$.
		\item\label{item:while-could-be-if} \textbf{While loop could be an if statement.} If $\stmtP$ does not include the assumption \stmtNo{II}, then $\models \lowerTriple{\expa}{\stmtIf{b}{\varBody}{}}{\expb}$ holds.
	\end{enumerate}
\end{restatable}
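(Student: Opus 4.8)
The plan is to compute $\vc{\stmtP}(\expb)$ explicitly from the semantics in \cref{fig:heyvl-semantics}, read off from the hypothesis $\models \lowerTriple{\expa}{\stmtP}{\expb}$ (i.e.\ $\expa \expleq \vc{\stmtP}(\expb)$) the information it carries, and compare it with the $\symVc$ of the respective target program using monotonicity of $\symVc$ (\cref{thm:monotonicity}) together with monotonicity of $\sqcap$, of $\impl$ in its second argument (adjointness), of the validation $\expvalidate{\cdot}$, and of the $\Inf$ quantifier. Throughout I use $\embed{\exprFalse} = 0$, hence $\vc{\stmtAssume{\embed{\exprFalse}}}(\expb) = \embed{\exprFalse} \impl \expb = \infty$ and $\vc{\stmtAssert{\varInvariant}}(\infty) = \varInvariant \sqcap \infty = \varInvariant$. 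Propagating $\expb$ backwards through the encoding of \cref{fig:heyvl-park-induction} (desugaring the conditional) gives
\[\vc{\stmtP}(\expb) \eeq \varInvariant \sqcap \iquant{\vec{v}}{\expvalidate{\rho \impl W}}~,\]
where $W = (\embed{b} \impl T) \sqcap (\embed{\neg b} \impl \expb)$ is the pre-expectation of the conditional (lines 5--9) w.r.t.\ $\expb$, $T$ is the pre-expectation of its true branch, $\rho = \varInvariant$ if \stmtNo{I} is retained in $\stmtP$ and $\rho = \infty$ if \stmtNo{I} was removed (using $\infty \impl W = W$), and $T = \vc{\varBody}(\varInvariant)$ if \stmtNo{II} is retained (because the true branch's pre-expectation is then $\vc{\varBody}(\vc{\stmtAssert{\varInvariant}}(\vc{\stmtAssume{\embed{\exprFalse}}}(\expb))) = \vc{\varBody}(\varInvariant)$) while $T = \vc{\varBody}(\varInvariant \sqcap \expb)$ if \stmtNo{II} was removed; in both cases $T \expleq \vc{\varBody}(\infty)$ by \cref{thm:monotonicity}. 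Splitting the meet, from $\expa \expleq \vc{\stmtP}(\expb)$ I extract (a) $\expa \expleq \varInvariant$ and (b) $\expa \expleq \iquant{\vec{v}}{\expvalidate{\rho \impl W}}$.

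For the first item, \stmtNo{I} is removed, so $\rho = \infty$ and $\expvalidate{\rho \impl W} = \expvalidate{W}$, and one computes $\vc{\park{b}{\varBody}{\infty}}(\expb) = \iquant{\vec{v}}{\expvalidate{W_\infty}}$ with $W_\infty = (\embed{b} \impl \vc{\varBody}(\infty)) \sqcap (\embed{\neg b} \impl \expb)$ (now $\vc{\stmtAssert{\infty}}(\infty) = \infty$ in the true branch, and the leading $\stmtAssert{\infty}$ and $\stmtAssume{\infty}$ are no-ops). Since $T \expleq \vc{\varBody}(\infty)$, monotonicity of $\impl$ (second argument) and of $\sqcap$ give $W \expleq W_\infty$, hence $\iquant{\vec{v}}{\expvalidate{W}} \expleq \iquant{\vec{v}}{\expvalidate{W_\infty}}$; chaining with (b) yields $\expa \expleq \vc{\park{b}{\varBody}{\infty}}(\expb)$, i.e.\ $\models \lowerTriple{\expa}{\park{b}{\varBody}{\infty}}{\expb}$.

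For the second item, \stmtNo{II} is removed, so $T = \vc{\varBody}(\varInvariant \sqcap \expb) \expleq \vc{\varBody}(\expb)$ and therefore $W \expleq V$, where $V = (\embed{b} \impl \vc{\varBody}(\expb)) \sqcap (\embed{\neg b} \impl \expb) = \vc{\stmtIf{b}{\varBody}{}}(\expb)$. Using that $\stmtHavoc{\vec{v}}$ is reductive (\cref{lemma:reductive}), i.e.\ $\iquant{\vec{v}}{Z} \expleq Z$ for every $Z$, (b) sharpens to $\expa \expleq \expvalidate{\rho \impl W}$. Fix any state $\State$: then either $\expa(\State) = 0 \leq V(\State)$, or $\expvalidate{\rho \impl W}(\State) = \infty$, i.e.\ $\rho(\State) \leq W(\State)$; and since $\expa(\State) \leq \rho(\State)$ (this is (a) at $\State$ when $\rho = \varInvariant$, and trivial when $\rho = \infty$), we obtain $\expa(\State) \leq \rho(\State) \leq W(\State) \leq V(\State)$. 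Hence $\expa \expleq \vc{\stmtIf{b}{\varBody}{}}(\expb)$, i.e.\ $\models \lowerTriple{\expa}{\stmtIf{b}{\varBody}{}}{\expb}$. The $\symVc$ bookkeeping above is routine once the desugarings are fixed; the one genuinely non-mechanical step is the move in the second item that instantiates $\stmtHavoc{\vec{v}}$ at the current state to drop the $\Inf$ over $\vec{v}$ --- exploiting reductivity of havoc, the $\symVc$-decreasing direction, rather than the extensiveness of $\symAssume$ underlying \cref{theorem:extensive-slice} --- together with checking that this pointwise argument is uniform over the case split on which of \stmtNo{I}, \stmtNo{II} were removed.
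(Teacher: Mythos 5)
Your proof is correct and follows essentially the same route as the paper's: both arguments rest on monotonicity of $\symVc$, reductivity of the $\symAssert$/$\symHavoc$/$\symValidate$ statements, extensivity of $\symAssume$, and a modus-ponens step combining $\impl$ with $\sqcap$. The only difference is presentational: the paper discards the reductive statements wholesale and applies the algebraic law $A \sqcap (A \impl B) \expleq B$, whereas you retain $\stmtValidate$ and $\stmtHavoc{\vec{v}}$ and perform the same modus ponens pointwise via the characterization that $\expvalidate{\rho \impl W}(\State) = \infty$ iff $\rho(\State) \leq W(\State)$.
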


In case (\labelcref{item:invariant-not-necessary}), the loop verification can be generalized to not require the invariant so that the loop can be used in a larger context.
In case (\labelcref{item:while-could-be-if}), the specification might be too weak to require a loop, indicating a possible bug in the specification or the loop.
\marknewtext{An example is the buggy reservoir sampling algorithm from \Cref{fig:intro-verification-preserving}, annotated with the invariant $\nicefrac{1}{n} \cdot \iverson{i \leq X \land X \leq n} + \iverson{c = X}$.}

\begin{example}\label{example:park-induction:all-diagnostics}
	Consider the \HeyVL{} program $\stmt$ in~\cref{fig:geometric_loop:invariant_diagnostics}.
	It models a geometric loop counting the number of coin flips until the first success in $x$.
	To upper bound the expected value of $x$ at the end by the initial value of $x$ increased by one, we use pre $init\_x + 1$ and post $x$, and annotate the loop using the Park induction proof rule with invariant $\varInvariant$.
	The choice of $\varInvariant$ gives different diagnostic messages from~\cref{theorem:park-error-messages,theorem:park-verification-messages}, dualized for the upper bound scenario.
	With $\varInvariant = x + 1$ verification fails with 'Invariant not inductive.'
	With $\varInvariant = \exprIte{cont}{x + 1}{x}$ verification succeeds with 'While loop could be an if statement,' i.e., performing a single flip establishes the upper bound, as this increases $x$ by at most one.
\end{example}

\section{Implementation and Evaluation}\label{sec:implementation}

We realized our three types of slices in the quantitative deductive verifier \emph{Caesar}~\cite{DBLP:journals/pacmpl/SchroerBKKM23}.
The slicer is called \emph{Brutus}.\footnote{Caesar with slicing support is available online at \ifAnon{\emph{(link removed for double-blind review)}}{\url{https://www.caesarverifier.org}}.}
The results of \emph{Brutus} are used in \emph{Language Server Protocol}~\cite{lsp_protocol} support of Caesar and are shown as errors or suggestions in the Visual Studio Code editor.

In this section, we will briefly explain how we implemented the search for suitable slices, \marknewtext{compare the approaches on a set of selected benchmarks, and explore how helpful the resulting user diagnostics are during the verification process.}
More extensive information can be found in~\Cref{sec:detailed-benchmarks}.

At a high level, Brutus operates in three steps.
First, it selects suitable statements that can be sliced.
\marknewtext{Following \cref{fig:slicing-flowchart}~on~\cpageref{fig:slicing-flowchart}, if the original program does not verify, Brutus selects reductive statements.
	If the original program verifies, Brutus selects extensive statements to obtain verification-witnessing slices.
	The user can manually mark statements for slicing, which can be used to obtain verification-preserving slices.}
The second step is a \HeyVL{} program transformation that instruments the program with Boolean input variables to enable or disable the selected statements.
Finally, Brutus uses an SMT solver to search for suitable or optimal slices.\footnote{In general, the SMT queries to check whether a \HeyVL{} program verifies are not necessarily decidable. It embeds \HeyLo{}, which contains a language expressive enough to describe the termination probability of arbitrary programs (\cite{DBLP:journals/pacmpl/BatzKKM21}), and termination with probability 1 is undecidable (c.f. \cite{DBLP:conf/mfcs/KaminskiK15}).}
Let us explain the latter two steps in more detail.

\paragraph{Program Transformation for Slicing.}

We first transform the input \HeyVL{} program $\stmt$ with a set of sub-statements $\stmtN{1},\ldots,\stmtN{n}$ that we try to slice.
Intuitively, a Boolean input variable $\varEnabled{\stmtI}$ is inserted for each potentially sliceable statement $\stmtI$ and we replace $\stmtI$ by $\stmtIf{\varEnabled{\stmtI}}{\stmtI}{}$.
To avoid an exponential blow-up of the $\symVc$ size, we use a transformation which ensures that for each statement $\stmtI$, its transformation $\stmtITransformed$ contains the post $\expa$ in its $\vc{\stmtITransformed}(\expa)$ exactly as often as in $\vc{\stmtI}(\expa)$.
For example, $\stmtAsgn{x}{a}$ and $\stmtAssert{\expb}$ are transformed to $\stmtAsgn{x}{\ite{\varEnabled{\stmt}}{a}{x}}$ and $\stmtAssert{\ite{\varEnabled{\stmt}}{\expb}{\infty}}$, respectively.
The full list of transformations is given in~\Cref{fig:heyvl-transformations-extended} in~\Cref{appendix:proofs}.

The generated verification conditions of the \HeyVL{} encoding are similar to the ones in Boogie~\cite{DBLP:journals/scp/LeinoMS05}.
However, our encoding is done on the program level instead of by defining a modified program semantics.
Furthermore, our encoding is not limited to error messages, but is also used in the same way when we slice for verification.

\paragraph{Searching for Erroring Slices.}
After the program transformation, Brutus finds slices that fail verification with a counterexample.
We find error-witnessing slices by selecting reductive statements in a program that does not verify.
The task of finding (minimal) slices involves considering (all) combinations of enabled and disabled statements and checking whether the modified \HeyVL{} program verifies.
We solve this task using SMT solving.
To find an erroring slice, we check:
\begin{align}
	\exists \varEnabled{\stmtN{1}},\ldots,\varEnabled{\stmtN{n}}.~ \left( \exists \State \in \States.~ \vc{\stmt}(\infty)(\State) \neq \infty\right)~. \label{eq:smt-query-error}
\end{align}
As this query does not contain a quantifier alternation, it is well-suited for SMT solving.
For our strategy \sliceMethod{first}, we simply ask the SMT solver for a model.

However, the result may not be minimal.
Trivially, assigning $\exprTrue$ to all $\varEnabled{\stmt}$ variables also yields a slice.
Our implementation uses the Z3 SMT solver~\cite{z3solver} which can sometimes return models where some variables are marked as irrelevant for the counterexample.
We found that with irrelevant variables set to $\exprFalse$, the counterexamples for our benchmarks are almost minimal slices.

To find minimal error-preserving slices, we have the strategy \sliceMethod{opt} which runs a modified binary search in which we repeatedly ask the SMT solver for solutions with at most $n$ enabled statements.
Due to incompleteness of the SMT solver, not all queries have a yes/no answer.
Our algorithm marks these cases unacceptable and checks the other values.
We observed that the search needs a lot fewer steps when irrelevant variables information is used.

\paragraph{Searching for Verifying Slices.}

For both verification-witnessing and verification-preserving slices, we search for slices that verify.
The SMT query is:
\begin{align}
	\exists \varEnabled{\stmtOne},\ldots,\varEnabled{\stmtN{n}} \left( \forall \State \in \States.~ \vc{\stmt}(\infty)(\State) = \infty \right)~. \label{eq:smt-query-verify}
\end{align}
In contrast to \cref{eq:smt-query-error}, this query contains a quantifier alternation.
We implemented four strategies to solve this query.
The \sliceMethod{exists-forall} strategy simply encodes the query directly as a
quantified formula and discharges it to the SMT solver.
The \sliceMethod{core} strategy uses \emph{unsatisfiable cores} as provided by Z3.
In our setting, unsatisfiable cores represent a subset of the $\varEnabled{\stmt}$ variables that are set to $\exprTrue$ such that there is no counterexample (i.e.\ the program verifies).
Our third strategy (\sliceMethod{mus}) is based on enumerating minimal unsatisfiable subsets (MUS) of assertions~\cite{DBLP:conf/cpaior/LiffitonM13}.
Each such MUS corresponds to a set of enabled statements that are necessary for the program to verify.
The resulting slice is minimal, \marknewtext{i.e.\ there is no subprogram of the slice which is also a valid slice}.
The fourth strategy (\sliceMethod{sus}) finds the globally smallest unsatisfiable subset (SUS) of the query by enumerating all MUS and thus finds a verifying slice \marknewtext{with the least number of statements}.

\paragraph{Evaluation and Setup.}\label{sec:evaluation}
Our goal is to provide useful diagnostics to a user during the verification process.
From a theoretical perspective, we have argued in \Cref{sec:slicing-heyvl,sec:slicing_on_encodings} how to obtain useful diagnostics with formal guarantees, by finding (minimal) slices.
To compare the different slicing methods we seek to answer the following evaluation questions (EQ):
\begin{enumerate}[label={EQ \arabic*},ref={EQ~\arabic*}, font=\bfseries, leftmargin=*]
	\item\label{evaluation:question:is_interactive} Is the time needed for slicing acceptable for an interactive setting, i.e. $\leq 1$ second?
	\item\label{evaluation:question:how_good_is_the_result} What is the ratio between what is actually sliced and what
	      is sliced in a minimal slice?
\end{enumerate}
We collected benchmarks for both error-witnessing and verifying slices based on the literature and extended them with new examples.
Our evaluation focuses on benchmarks verifiable with automated deductive probabilistic program verification, as we focus on their diagnostics reporting.
Testing our algorithms implemented inside classical verifiers like Boogie~\cite{leino2008boogie} is out of scope of this paper.
In absence of a benchmark suite for slicing of probabilistic programs, our benchmarks capture a variety of language features.
We have 11 purely Boolean programs and 21 probabilistic programs.
They have explicit nondeterminism (1 program), loops (24 programs), recursion (1 program), conditioning (1 program), and continuous sampling (3 programs).
The experiments were conducted on a 2021 Apple MacBook with an M1 Pro chip and 16 GB of RAM, and a timeout of 30 s.
The detailed results from our benchmarks are included in \cref{sec:detailed-benchmarks}.
In the following, we summarize the results of our evaluation separately for error-witnessing and verifying slices.

\paragraph{Evaluation for Error-Witnessing Slices.}
We considered 16 example programs (\cref{tbl:erroring-results}~in~\Cref{sec:detailed-benchmarks}).
The benchmark set includes a broader set of examples aimed to demonstrate that error localization through slicing effectively identifies the problematic statement in verification tasks, including \Cref{fig:intro-error-preserving} and examples producing every diagnostic for Park induction (\Cref{fig:geometric_loop_template,tbl:geometric_loop_template_instances}~in~\Cref{sec:detailed-benchmarks}).
From the literature on error localization, we include a (non-probabilistic) simple list access error from~\cite{DBLP:journals/scp/LeinoMS05}.
We also include 7 modified probabilistic examples from~\cite{DBLP:journals/pacmpl/SchroerBKKM23} where we added errors, failing verification of expected runtimes of loops.
One example involves reasoning about conditioning, utilizing the decomposition of conditioning into $\symWp$ and $\symWlp$ as described by~\cite{DBLP:journals/entcs/0001KKOGM15}\footnote{For an expectation $\expa \in \Expectations$, the conditional expected value $\cwp{\stmt}(\expa)$ is given by $\cwp{\stmt}(\expa) = \nicefrac{\wp{\stmt}(\expa)}{\wlp{\stmt}(1)}$ (when $\wlp{\stmt}(1) \neq 0$). Proving an upper bound $\cwp{\stmt}(\expa) \expleq \nicefrac{\expb}{\expc}$ is split into two tasks, $\wp{\stmt}(\expa) \expleq \expb$ and $\expc \expleq \wlp{\stmt}(1)$. An $\symObserve$ statement is encoded into $\symAssert$ statements. Details are provided in \cite[Section 4.1]{DBLP:journals/pacmpl/SchroerBKKM23}}.
We did the same for 3 programs modeling sampling from continuous distributions~\cite{continuous25}.
The benchmarks range from 7 to 57 lines of code, and our theory narrows this to at most 5 sliceable statements, 3 on average.

Search for error slices is fast with both \sliceMethod{first} and \sliceMethod{opt}, usually with times below 20 milliseconds (\labelcref{evaluation:question:is_interactive}).
For 10 (out of 16) benchmarks, the first found slice is already optimal (\labelcref{evaluation:question:how_good_is_the_result}).
In general, the time to find a guaranteed optimal slice is not significantly higher ($\leq$ +20 ms) than using the first slice.
As minimal slices are required to accurately report errors (cf.\ e.g.\ \Cref{theorem:park-error-messages}), this makes our implementation \sliceMethod{opt} viable to provide user diagnostics with formal guarantees in an interactive environment.

\marknewtext{
	One limitation of our current implementation is that it requires the SMT solver being able to produce a consistent model (returning \enquote{SAT}), which is not always possible due to incompleteness.
	We found that Boogie intentionally uses potentially inconsistent models (from an \enquote{unknown} result of the SMT solver) to generate error diagnostics\footnote{\ifAnon{\emph{(link removed for double-blind review)}}{\url{https://github.com/boogie-org/boogie/issues/1008}}}.
	As these models do not carry any guarantees, we did not follow this approach.
}

\begin{figure}[t]
	\centering
	\newcommand{\legendcols}{2}
	\newcommand{\legendstyle}{at={(0.45,1.05)},anchor=south}
	\relativeslicesizeruntimeplot{sections/benchmarks/result-verifying.csv}%
	{core,
		mus,
		sus,
		exists_forall}%
	{{\sliceMethod{core}},
		{\sliceMethod{mus}},
		{\sliceMethod{sus}},
		{\sliceMethod{exists-forall}}}%
	{relative removal}%
	{time in milliseconds}%
	{states}%
	{30}%
	\caption{Dual bar chart showing run-time (positive y-axis, log-scaled) and percentage of removed sliceable statements (negative y-axis, linear scale). TO indicates timeout (30s). We show only 20 of our 30 verifying slices for readability. The bars are grouped by benchmark. Methods: \sliceMethod{core} (\textcolor{barplotcolora}{$\blacksquare$}), \sliceMethod{mus} (\textcolor{barplotcolorb}{$\blacksquare$}), \sliceMethod{sus} (\textcolor{barplotcolorc}{$\blacksquare$}), \sliceMethod{exists-forall} (\textcolor{barplotcolord}{$\blacksquare$}).}
	\label{fig:evaluation:verifying_slices}

\end{figure}

\paragraph{Evaluation for Verifying Slices.}
For verifying slices, we have \marknewtext{30} examples (\cref{tbl:verifying-results}~in~\Cref{sec:detailed-benchmarks}).
\marknewtext{Of these, 25} are for verification-preserving slicing and \marknewtext{five are} for verification-witnessing slicing.
Eight are nonprobabilistic examples from~\cite{DBLP:conf/sefm/BarrosCHP10}, the others are probabilistic examples based on the literature~\cite{DBLP:journals/toms/Vitter85,DBLP:conf/pldi/GehrMTVWV18,DBLP:journals/pacmpl/SchroerBKKM23,DBLP:journals/scp/NavarroO22}.
We include all applicable examples from the above sources.
The number of potentially sliceable statements ranges from 3 to 75.

Regarding \labelcref{evaluation:question:is_interactive}, the results \marknewtext{in \Cref{fig:evaluation:verifying_slices}} show that the \sliceMethod{core} strategy is very fast, with \marknewtext{instant results ($\leq$ 5 ms) in most cases and never exceeding 105 ms} for any benchmark.
The runtimes of \sliceMethod{exists-forall} have an average of 16 ms and low variance ($96 \%$ in $\leq$ 25 ms), while \sliceMethod{mus} and \sliceMethod{sus} are slower and exhibit a wide variance in the runtime.
The slowest slice search is the \sliceMethod{sus} strategy, \marknewtext{reaching the time-out of 30 seconds on some benchmarks.}
The search for a slice of a while loop \marknewtext{whose successful termination probability is verified (\cite[Example 5.8]{DBLP:journals/scp/NavarroO22})} uses the SMT theory of \emph{uninterpreted functions} with axioms to specify an exponential function and takes \marknewtext{3.7} seconds.
On average, \sliceMethod{sus} is $20\%$ slower than \sliceMethod{mus}.
We conclude that only the \sliceMethod{core} and \sliceMethod{exists-forall} strategies are usable in interactive scenarios.

For \labelcref{evaluation:question:how_good_is_the_result}, we empirically found that the \sliceMethod{core} slices are often far from optimal, as the unsatisfiable cores are not minimal.
On \marknewtext{13 of the 30} benchmarks, \sliceMethod{core} does not slice any statements.
In general, the other strategies are able to find (much) smaller slices.
For the benchmarks from the literature, our tool is able to find optimal slices.
Notably, on some examples from~\cite{DBLP:journals/scp/NavarroO22,DBLP:journals/fac/BarrosCHP12} we find smaller slices than their methods.
Even though \sliceMethod{mus} only does a local minimization, the resulting slices match the globally optimal slices obtained by \sliceMethod{sus} in size.
\sliceMethod{exists-forall} does not explicitly minimize the slice, yet also finds results as good as the optimal \sliceMethod{sus} strategy in 12 cases.
The figure shows that any minimization often improves the results compared to the simple \sliceMethod{core} strategy.

The results indicate that using the \sliceMethod{core} strategy has almost no performance impact on Caesar.
As minimal slices are not required for diagnostics from verification-witnessing slices (c.f. \Cref{sec:slicing_on_encodings}), it is planned to enable \sliceMethod{core} by default.
For optimal slices, the \sliceMethod{mus} or \sliceMethod{sus} strategies are needed, but require a lot of extra time.
While the \sliceMethod{exists-forall} strategy is often competitive with the \sliceMethod{mus} and \sliceMethod{sus} strategies, it is limited to reasoning about verification tasks without uninterpreted functions.
When applicable, the \sliceMethod{exists-forall} strategy is a good alternative to \sliceMethod{core} to greatly improve the results.

\section{Related Work}\label{sec:related-work}

\paragraph{Syntactic Slicing of Probabilistic Programs.}
A recent overview of slicing techniques for probabilistic programs is~\cite{Olmedo2025,slicing_olmedo_26}.
The techniques are either \emph{syntactic} or \emph{semantic}.
There are two main lines of research for syntactic slicing of probabilistic programs.
\cite{DBLP:conf/pldi/HurNRS14} incorporate stochastic (in-)dependencies inspired by d-separation criteria from Bayesian networks.
\cite{DBLP:conf/fossacs/AmtoftB16,DBLP:journals/toplas/AmtoftB20} generalize similar ideas to probabilistic control flow graphs.

\paragraph{Semantic Slicing of Probabilistic Programs.}
Lifting the concepts from classical programs by~\cite{DBLP:journals/fac/BarrosCHP12}, the only prior work for semantic slicing on probabilistic programs is~\cite{DBLP:journals/scp/NavarroO22}.
They consider specification-based slices, i.e.\ verification-preserving slices in our terminology.
To obtain minimal slices, they describe an algorithm that checks between every pair of statements $(\stmtN{i}, \stmtN{j})$ in a sequence $\stmtN{1}\symSemi \cdots \symSemi \stmtN{n}$ whether one can jump from one to the other and still establish the specification with pre $\expa$ and post $\expb$.
The check can be seen as a \emph{local reductiveness} check for the specific propagated weakest pre: $\vc{\stmtN{i} \symSemi \cdots \symSemi \stmtN{j-1}}(\vc{\stmtN{j} \symSemi \cdots \symSemi \stmtN{n}}(\expb)) \overset{?}{\expleq} \vc{\stmtN{j} \symSemi \cdots \symSemi \stmtN{n}}(\expb)$.
These results are assembled into a slice graph in which a weighted shortest path is searched to appropriately handle nested statements.

We do not require reductiveness to obtain verification-preserving slices.
Consider the program $\stmtAsgn{x}{1}\symSemi \stmtAsgn{x}{x \cdot x}$ w.r.t. pre $1$ and post $x$.
In contrast to ours, their approach cannot remove any statement.
We are able to find the same minimal slices or even smaller slices (c.f. \Cref{sec:evaluation}).
For future work, local reductiveness checks could be explored for finding error-witnessing slices.

As their fairly intricate algorithm has not been implemented, experimental comparison is not feasible.
Scaling in the number of statements, our binary search approaches require only a logarithmic amount of solver calls, while their algorithm requires a quadratic number of calls to obtain the slice graph.
We achieve this by leveraging SMT solvers to explore many options efficiently.

Finally, \cite{DBLP:journals/scp/NavarroO22} only supports the verification of a loop's partial correctness through Park induction, and for its total correctness via a variant-based rule~\cite[Lemma 7.5.1]{DBLP:series/mcs/McIverM05}.
We slice on \HeyVL{}, into which different proof rules can be encoded.
Thus, our method is independent of the proof rules used.

\paragraph{Forwards- and Backwards Reasoning.}
We disagree that \say{specification-based slicing techniques require a combination of both backward propagation of post-conditions and forward propagation of pre-conditions to yield minimal slices}~\cite[Sec.~8]{DBLP:journals/scp/NavarroO22}.
A similar claim is made for classical program slicing in~\cite{DBLP:conf/sefm/BarrosCHP10}.
Our approach \emph{only} uses backwards reasoning and finds minimal slices for their claimed counterexamples (c.f. \cref{sec:benchmarks-verification-preserving}).
As our approach takes a global view of all statements, including \symAssume{} statements encoding the pre, the solver slices using that as well.
However, Caesar also supports forwards reasoning while generating verification conditions as a performance optimization.\footnote{For example, assignments are applied lazily, as in KeY~\cite{DBLP:series/lncs/10001}, c.f. \cite[Section 6.3.4]{DBLP:books/mc/22/Hahnle22}.}
In fact, this optimization is necessary to handle the Bayesian network example, which suffers from exponential blow-up due to branching.

\paragraph{Slicing of Intermediate Languages.}
For classical programs, slicing on IVLs is tackled in~\cite{DBLP:journals/toplas/WardZ07}.
Strengthening the definition of a specification-based slice as a program refinement, they define a \emph{semi-refinement} which must preserve the exact behavior of the original program on all terminating states.
However, it allows slicing all assertions~\cite[Lemma~6.6]{DBLP:journals/toplas/WardZ07}, and they only present a syntactic slicing algorithm.
Our notions of error-witnessing and verification-witnessing slices are less restrictive than the equivalence imposed by a semi-refinement.

\paragraph{Error Reporting.}
\cite{DBLP:journals/fac/LechenetKG18} extends slicing via program dependence graphs to classical programs with $\symAssert$ statements, enabling error detection from failed assertions~\cite[Thm.~6.2]{DBLP:journals/fac/LechenetKG18}.
Localization of the erroring assertion is not done.

The problem of error reporting for an IVL has been tackled in~\cite{DBLP:journals/scp/LeinoMS05,DBLP:journals/jlp/DaillerHMM18} by modifying the verification condition generation.
In~\cite{DBLP:journals/scp/LeinoMS05}, labels (auxiliary Boolean variables) are attached to subformulas and error locations can be extracted from the counterexample produced by an SMT solver.
The approach relies on the SMT solver returning suitably small counterexamples.
As shown in our evaluation, this is not always the case without further minimization, which we implement.
An advantage of our approach is that it performs labeling directly on the level of program statements, enabling further program optimizations.

\cite{DBLP:phd/basesearch/Ruskiewicz12} defines syntactic slicing on an intermediate language.
A failed verification attempt yields a program trace which is used to find the statement causing the failure.
Then, a syntactic slice is constructed which is sufficient to produce the same error.
However, correctness guarantees on the obtained subprogram w.r.t.\ a slicing definition are not given.
Finally, \cite{DBLP:phd/dnb/Liu18} searches for verification-witnessing slices using counterexamples of an SMT solver.
For minimization, minimal unsatisfiable cores are used.
Statements which are deemed irrelevant are not removed, but are replaced by abstractions to keep the program's structure.
Hence, the result is not a subprogram.

\paragraph{Program Refinement.}\label{par:related-work:refinement}
\cite{DBLP:series/mcs/McIverM05} define a program refinement relation for probabilistic programs.
Verification-witnessing and verification-preserving slices can be seen as refinements.
The original program refines a verification-witnessing slice and a verification-preserving slice refines the original program.
Slices are restricted to be subprograms of the original program, while refinement does not impose any syntactic constraints.

\paragraph{Counterexamples in Probabilistic Model Checking.}
Searching for (minimal) counterexamples in probabilistic model checking has been studied extensively~\cite{DBLP:phd/dnb/Jansen15,DBLP:conf/sfm/AbrahamBDJKW14,DBLP:conf/fmcad/JantschHFB20}.
\citeauthor{DBLP:conf/tacas/HanK07} defined \emph{evidences} for property violation; for reachability as a set of paths whose sum of probabilities exceed a given threshold~\cite{DBLP:conf/tacas/HanK07}.
Alternatively, critical subsystems represented by \emph{high-level counterexamples} on the level of a PRISM program are used.
Minimal counterexamples can be obtained by solving a MaxSAT problem~\cite{DBLP:conf/atva/DehnertJWAK14} or an MILP~\cite{DBLP:journals/corr/abs-1305-5055}.

\section{Conclusion and Future Work}

\marknewtext{
	In this paper, we formally defined the notions of \emph{error-witnessing}, \emph{verification-witnessing}, and \emph{verification-preserving} slices for probabilistic program verification.
	These are aimed at localizing verification errors, extracting verification certificates, and tailoring programs to a specification, respectively.
	We applied error-witnessing and verification-witnessing slicing to encodings of proof rules for probabilistic programs in the \HeyVL{} intermediate verification language and gave formal guarantees for user diagnostics for these encodings based on the slicing results.
	Finally, we presented \emph{Brutus}, the first tool for specification-based slicing of probabilistic programs, which is integrated in the Caesar deductive verifier.
	Our implementation supports different algorithms to compute these slices, and we compared their trade-offs using a set of benchmarks.
}

Future work includes the addition of slicing-based error messages and hints to other proof rules, such as latticed $k$-induction~\cite{DBLP:conf/cav/BatzCKKMS20}, and the integration of probabilistic model checking into the slicing process.
We also want to investigate making use of SMT solver results with fewer guarantees (\enquote{unknown} results) to enable better diagnostics in the presence of SMT solver incompleteness.
Further, our implementation is limited by selecting only assert-like statements for error-witnessing slicing.
This could be improved by additional syntactic slicing, or further investigating \emph{local reductiveness}.

\ifAnon{}{
	\section*{Acknowledgments}
	This work was partially supported by the ERC Advanced Research Grant FRAPPANT (grant no.\ 787914), the Research Training Group 2236 \emph{UnRAVeL} (project number 282652900), and by the European Union's Horizon 2020 research and innovation programme under the Marie Sk\l{}odowska-Curie grant agreement MISSION (grant no.\ 101008233). We thank Alexander Bork for suggesting the name \emph{Brutus}.
}

\printbibliography

\newpage
\appendix
\section{Proofs}\label{appendix:proofs}

\paragraph{Notation.}
We denote by $\stmtSkip$ the effectless statement with $\vc{\stmtSkip}(\expa) = \expa$ for all $\expa \in \Expectations$.
Although not part of our formal syntax, we can express it via e.g. $\stmtSkip \equiv \stmtAssert{\infty} \equiv \stmtAssume{\infty}$.

\subsection{Slices from Refinement}

We first formally state and prove how to obtain slices from refinement relations, which will be used later in the proofs below.
The second and third lemmas correspond to our claims in the paragraph \hyperref[par:related-work:refinement]{\emph{Program Refinement}} in \Cref{sec:related-work}:
\Cref{lemma:refinement-verification-witnessing} states that \enquote{the original program refines a verification-witnessing slice} and \Cref{lemma:refinement-verification-preserving} states that \enquote{a verification-preserving slice refines the original program}.

\begin{lemma}[Error-witnessing slice from refinement]\label{lemma:refinement-error-witnessing}
	Let $\slice$ be a subprogram of $\stmt \in \HeyVL$.
	Assume $\stmt$ is refined by $\slice$, i.e. $\vc{\stmt}(\expc) \expleq \vc{\slice}(\expc)$ for all $\expc \in \Expectations$.
	If $\slice$ does not verify w.r.t. $\expa,\expb \in \Expectations$, i.e. $\expa \not\expleq \vc{\slice}(\expb)$, then $\slice$ is an error-witnessing slice of $\stmt$ w.r.t. $\expa,\expb$.
\end{lemma}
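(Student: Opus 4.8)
The plan is simply to check the two requirements of \Cref{def:slice:error_witnessing} for $\slice$ as a subprogram of $\stmt$ with respect to $(\expa,\expb)$. Requirement~\labelcref{def:slice:error_witnessing:counterexample_exists}, namely $\not\models \lowerTriple{\expa}{\slice}{\expb}$, unfolds by definition of the lower triple to $\expa \not\expleq \vc{\slice}(\expb)$, which is exactly the hypothesis that $\slice$ does not verify w.r.t.\ $(\expa,\expb)$. So nothing remains to be shown there.

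For requirement~\labelcref{def:slice:error_witnessing:counterexamples_transfer_to_original}, I would fix an arbitrary state $\State'$ with $\State' \not\models \lowerTriple{\expa}{\slice}{\expb}$, i.e.\ $\expa(\State') \not\leq \vc{\slice}(\expb)(\State')$, and argue $\State' \not\models \lowerTriple{\expa}{\stmt}{\expb}$, i.e.\ $\expa(\State') \not\leq \vc{\stmt}(\expb)(\State')$. The key step is to instantiate the refinement hypothesis at $\expc := \expb$, giving $\vc{\stmt}(\expb) \expleq \vc{\slice}(\expb)$ and hence the pointwise inequality $\vc{\stmt}(\expb)(\State') \leq \vc{\slice}(\expb)(\State')$. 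Since values lie in the totally ordered set $\PosRealsInf$, the relation $\not\leq$ is the strict reversed order: $\expa(\State') \not\leq \vc{\slice}(\expb)(\State')$ means $\expa(\State') > \vc{\slice}(\expb)(\State')$. Chaining the two, $\expa(\State') > \vc{\slice}(\expb)(\State') \geq \vc{\stmt}(\expb)(\State')$, so $\expa(\State') > \vc{\stmt}(\expb)(\State')$, i.e.\ $\expa(\State') \not\leq \vc{\stmt}(\expb)(\State')$, which is the claim. As $\State'$ was arbitrary, requirement~\labelcref{def:slice:error_witnessing:counterexamples_transfer_to_original} holds, and $\slice$ is an error-witnessing slice of $\stmt$ w.r.t.\ $(\expa,\expb)$.

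There is no real obstacle here: the argument is a one-line monotonicity-of-order chase once the refinement inequality is specialized to the postexpectation $\expb$. The only point worth stating explicitly is that $\PosRealsInf$ is totally ordered, so that $\not\leq$ can be read as $>$ and the transfer of counterexamples is immediate; this is also the reason the statement needs refinement only at $\expc=\expb$ rather than at all $\expc$, although having it for all $\expc$ is convenient since later applications will invoke this lemma in contexts where the general refinement inequality is what is available.
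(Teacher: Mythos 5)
Your proof is correct and follows essentially the same route as the paper's: condition (1) of Definition~\ref{def:slice:error_witnessing} is exactly the non-verification hypothesis, and condition (2) follows by instantiating the refinement inequality at $\expc = \expb$ and chaining it pointwise with the counterexample inequality. Your extra remark that $\PosRealsInf$ is totally ordered, so $\not\leq$ reads as $>$, just makes explicit a step the paper leaves implicit.
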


\begin{proof}
	Since $\slice$ does not verify w.r.t. $\expa,\expb$, we satisfy condition (1) of \Cref{def:slice:error_witnessing}.

	Let $\State'$ be a state such that $\State' \not\models \lowerTriple{\expa}{\slice}{\expb}$ holds.
	This means $\expa(\State) \not\leq \vc{\slice}(\expb)(\State)$.
	However, since $\stmt$ is refined by $\slice$, we have $\vc{\stmt}(\expb)(\State) \leq \vc{\slice}(\expb)(\State)$.
	Therefore, $\State' \not\models \lowerTriple{\expa}{\stmt}{\expb}$ holds, satisfying condition (2) of \Cref{def:slice:error_witnessing}.
\end{proof}

\begin{lemma}[Verification-witnessing slice from refinement]\label{lemma:refinement-verification-witnessing}
	Let $\slice$ be a subprogram of $\stmt \in \HeyVL$.
	Assume $\slice$ is refined by $\stmt$, i.e. $\vc{\slice}(\expc) \expleq \vc{\stmt}(\expc)$ for all $\expc \in \Expectations$.
	Then, for all $\expa,\expb \in \Expectations$, $\slice$ is a verification-witnessing slice of $\stmt$ w.r.t. $\expa,\expb$.
\end{lemma}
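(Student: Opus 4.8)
The plan is to unfold the definition of a verification-witnessing slice (\Cref{def:slice:verification_witnessing}) and then chain the refinement inequality. Concretely, fix arbitrary $\expa, \expb \in \Expectations$ and assume the antecedent $\models \lowerTriple{\expa}{\slice}{\expb}$; by the definition of the lower-bound triple this unfolds to $\expa \expleq \vc{\slice}(\expb)$. The goal is to establish the consequent $\models \lowerTriple{\expa}{\stmt}{\expb}$, i.e.\ $\expa \expleq \vc{\stmt}(\expb)$.

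The key step is to instantiate the refinement hypothesis $\vc{\slice}(\expc) \expleq \vc{\stmt}(\expc)$ at the particular post-expectation $\expc \coloneqq \expb$, which gives $\vc{\slice}(\expb) \expleq \vc{\stmt}(\expb)$. Composing this with $\expa \expleq \vc{\slice}(\expb)$ via transitivity of the expectation order $\expleq$ (the partial order of the complete lattice $(\Expectations,\expleq)$) yields $\expa \expleq \vc{\stmt}(\expb)$, which is exactly what is required. Since $\expa$ and $\expb$ were arbitrary, $\slice$ is a verification-witnessing slice of $\stmt$ w.r.t.\ every pair $(\expa,\expb)$.

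There is essentially no obstacle here: the lemma is an immediate consequence of transitivity once the definitions are unfolded, and, in contrast to \Cref{lemma:refinement-error-witnessing}, we do not even need to argue pointwise over states, since both the triple and the refinement relation are already phrased as inequalities of expectations. The only point worth highlighting is the direction of the hypothesis — it is $\stmt$ that refines $\slice$, not the other way around — and this is precisely what makes verification transfer \emph{from} the slice \emph{to} the original program, mirroring the informal statement that ``the original program refines a verification-witnessing slice.''
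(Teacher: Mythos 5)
Your proof is correct and matches the paper's own argument exactly: both unfold the triple to $\expa \expleq \vc{\slice}(\expb)$, instantiate the refinement hypothesis at $\expb$, and conclude $\expa \expleq \vc{\stmt}(\expb)$ by transitivity of $\expleq$. Nothing is missing.
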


\begin{proof}
	Assume $\models \lowerTriple{\expa}{\slice}{\expb}$, i.e. $\expa \expleq \vc{\slice}(\expb)$.
	Since $\slice$ is refined by $\stmt$, we have $\expa \expleq \vc{\slice}(\expb) \expleq \vc{\stmt}(\expb)$.
	Therefore, $\models \lowerTriple{\expa}{\stmt}{\expb}$ holds.
	On the other hand, if $\models \lowerTriple{\expa}{\slice}{\expb}$ does not hold, the definition is satisfied trivially.
	Therefore, \Cref{def:slice:verification_witnessing} is satisfied.
\end{proof}

\begin{lemma}[Verification-preserving slice from refinement]\label{lemma:refinement-verification-preserving}
	Let $\slice$ be a subprogram of $\stmt \in \HeyVL$.
	Assume $\stmt$ is refined by $\slice$, i.e. $\vc{\stmt}(\expc) \expleq \vc{\slice}(\expc)$ for all $\expc \in \Expectations$.
	Then, for all $\expa,\expb \in \Expectations$, $\slice$ is a verification-preserving slice of $\stmt$ w.r.t. $\expa,\expb$.
\end{lemma}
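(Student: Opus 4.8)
The plan is to unfold the definition of a verification-preserving slice (\Cref{def:slice:verification_preserving}) and simply chain the two inequalities at hand. Fix $\expa,\expb \in \Expectations$ and suppose $\models \lowerTriple{\expa}{\stmt}{\expb}$, which by the definition of lower triples (see the paragraph \hyperref[par:heyvl-specifications]{\emph{Specifications}}) unfolds to $\expa \expleq \vc{\stmt}(\expb)$. Now instantiate the refinement hypothesis at $\expc = \expb$ to get $\vc{\stmt}(\expb) \expleq \vc{\slice}(\expb)$, and use transitivity of $\expleq$ in the complete lattice $(\Expectations,\,\expleq)$ to conclude $\expa \expleq \vc{\slice}(\expb)$, i.e. $\models \lowerTriple{\expa}{\slice}{\expb}$.

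Since $\slice$ is a subprogram of $\stmt$ by hypothesis and we have just shown the implication $\models \lowerTriple{\expa}{\stmt}{\expb} \implies \models \lowerTriple{\expa}{\slice}{\expb}$, this is exactly what \Cref{def:slice:verification_preserving} demands, so $\slice$ is a verification-preserving slice of $\stmt$ w.r.t.\ $(\expa,\expb)$. As $\expa$ and $\expb$ were arbitrary, the claim holds for all $\expa,\expb \in \Expectations$.

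There is no real obstacle here: the result is an immediate consequence of the definitions once the refinement inequality is specialized to the post $\expb$; in particular, no monotonicity of $\symVc$ and no structural property of $\slice$ or $\stmt$ is needed, only that the refinement hypothesis quantifies over all $\expc \in \Expectations$. The argument mirrors the proof of \Cref{lemma:refinement-verification-witnessing} with the direction of the refinement reversed (one could alternatively route it through \Cref{lemma:error-witnessing}, but the direct chaining is shortest).
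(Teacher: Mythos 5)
Your proof is correct and takes essentially the same approach as the paper: both unfold the definition of a lower triple, instantiate the refinement inequality at the post $\expb$, and chain the two inequalities by transitivity. The paper additionally spells out the trivial case where the antecedent fails, which your direct proof of the implication already covers.
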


\begin{proof}
	Assume $\models \lowerTriple{\expa}{\stmt}{\expb}$, i.e. $\expa \expleq \vc{\stmt}(\expb)$.
	Since $\stmt$ is refined by $\slice$, we have $\expa \expleq \vc{\stmt}(\expb) \expleq \vc{\slice}(\expb)$.
	Therefore, $\models \lowerTriple{\expa}{\slice}{\expb}$ holds.
	On the other hand, if $\models \lowerTriple{\expa}{\stmt}{\expb}$ does not hold, the definition is satisfied trivially.
	Therefore, \Cref{def:slice:verification_preserving} is satisfied.
\end{proof}

\subsection[Proofs for Section~\ref{sec:slicing-heyvl}]{Proofs for \Cref{sec:slicing-heyvl}}

\noindent%
Recall \Cref{lemma:reductive} from \cpageref{lemma:reductive}:

\lemmaReductive*

\begin{proof}
	By \Cref{def:reductive}, a statement $\stmt$ is reductive if for all $\expa \in \Expectations$, $\vc{\stmt}(\expa) \expleq \expa$ holds.

	\begin{itemize}
		\item Let $\stmt = \stmtAssert{\expb}$. Then, for all $\expa \in \Expectations$,
		      \[
			      \vc{\stmtAssert{\expb}}(\expa) = \expb \sqcap \expa \morespace{\expleq} \expa.
		      \]
		\item Let $\stmt = \symUp\stmtAssume{\expb}$. Then, for all $\expa \in \Expectations$,
		      \[
			      \vc{\symUp\stmtAssume{\expb}}(\expa) = \expb \coimpl \expa = \lam{\State}{\ifThenElse{\expb(\State) \geq \expa(\State)}{0}{\expa(\State)}} \morespace{\expleq} \expa~.
		      \]
		\item Let $\stmt = \stmtValidate$. Then, for all $\expa \in \Expectations$,
		      \[
			      \vc{\stmtValidate}(\expa) = \expvalidate{\expa} = \lam{\State}{\ifThenElse{\expa(\State) = \infty}{\infty}{0}} \morespace{\expleq} \expa~.
		      \]
	\end{itemize}
\end{proof}

\noindent%
Recall \Cref{theorem:reductive-slice} from \cpageref{theorem:reductive-slice}:

\theoremReductiveSlice*

\begin{proof}\label{proof:reductive-slice}
	Let $\slice$ be a subprogram of $\stmt$ obtained by only removing reductive statements.
	We show by induction on the structure of $\stmt$ that the following refinement holds for all $\expa \in \Expectations$:
	\[
		\vc{\stmt}(\expa) \expleq \vc{\slice}(\expa)~.
	\]

	Let $\stmt \in \HeyVL$ and $\slice$ be a subprogram of $\stmt$.
	For each case, when $\stmt = \slice$, the property holds trivially and we will omit them in the following.
	Assume therefore that $\stmt \neq \slice$.

	\emph{Base cases.}
	These are the cases $S \in \{ \stmtDeclInit{x}{\typevar}{\mu},~ \stmtAssert{\expb},~\allowbreak \stmtAssume{\expb},~\allowbreak \stmtHavoc{x},~\allowbreak \stmtValidate,~\allowbreak  \stmtTick{\aexpr},~\allowbreak \coAssert{\expb},~\allowbreak \coAssume{\expb},~\allowbreak\coHavoc{x},~\allowbreak \coValidate \}$.
	Either $\stmt$ is not reductive (in which case $\stmt = \slice$ holds) or it is reductive.
	If removed, we have $\slice = \stmtSkip$.
	By reductiveness, we obtain $\vc{\stmt}(\expa) \expleq \vc{\stmtSkip}(\expa)$.

	\emph{Induction hypothesis.}
	Assume the property holds for all subprograms $\stmtOneP,\stmtTwoP$ of $\stmtOne,\stmtTwo$, respectively.

	\emph{Induction step.}
	\begin{itemize}
		\item Case $\stmt = \stmtSeq{\stmtOne}{\stmtTwo}$: We have $\slice = \stmtSeq{\stmtOneP}{\stmtTwoP}$ for subprograms $\stmtOneP,\stmtTwoP$ of $\stmtOne,\stmtTwo$, respectively. Then,
		      \begin{align*}
			      \vc{\stmtSeq{\stmtOne}{\stmtTwo}}(\expa) & = \vc{\stmtOne}(\vc{\stmtTwo}(\expa))                                                    \\
			                                               & \expleq \vc{\stmtOneP}(\vc{\stmtTwo}(\expa)) \tag{IH}                                    \\
			                                               & \expleq \vc{\stmtOneP}(\vc{\stmtTwoP}(\expa)) \tag{IH, monotonicity of $\vc{\stmtOneP}$} \\
			                                               & = \vc{\slice}(\expa)~.
		      \end{align*}
		\item Case $\stmt = \stmtDemonic{\stmtOne}{\stmtTwo}$: We have $\slice = \stmtDemonic{\stmtOneP}{\stmtTwoP}$ for subprograms $\stmtOneP,\stmtTwoP$ of $\stmtOne,\stmtTwo$, respectively.
		      \begin{align*}
			      \vc{\stmtDemonic{\stmtOne}{\stmtTwo}}(\expa) & = \vc{\stmtOne}(\expa) \sqcap \vc{\stmtTwo}(\expa)                  \\
			                                                   & \expleq \vc{\stmtOneP}(\expa) \sqcap \vc{\stmtTwo}(\expa) \tag{IH}  \\
			                                                   & \expleq \vc{\stmtOneP}(\expa) \sqcap \vc{\stmtTwoP}(\expa) \tag{IH} \\
			                                                   & = \vc{\slice}(\expa)~.
		      \end{align*}
		\item Case $\stmt = \stmtAngelic{\stmtOne}{\stmtTwo}$: We have $\slice = \stmtAngelic{\stmtOneP}{\stmtTwoP}$ for subprograms $\stmtOneP,\stmtTwoP$ of $\stmtOne,\stmtTwo$, respectively.
		      \begin{align*}
			      \vc{\stmtAngelic{\stmtOne}{\stmtTwo}}(\expa) & = \vc{\stmtOne}(\expa) \sqcup \vc{\stmtTwo}(\expa)                  \\
			                                                   & \expleq \vc{\stmtOneP}(\expa) \sqcup \vc{\stmtTwo}(\expa) \tag{IH}  \\
			                                                   & \expleq \vc{\stmtOneP}(\expa) \sqcup \vc{\stmtTwoP}(\expa) \tag{IH} \\
			                                                   & = \vc{\slice}(\expa)~.
		      \end{align*}
	\end{itemize}

	Thus, $\vc{\stmt}(\expa) \expleq \vc{\slice}(\expa)$ holds for all subprograms $\slice$ of $\stmt$ where $\slice$ was obtained by only removing reductive statements.
	By \Cref{lemma:refinement-error-witnessing}, $\slice$ is an error-witnessing slice of $\stmt$ w.r.t. $(\expa,\expb)$.
\end{proof}

\noindent%
Recall \Cref{lemma:slicing-demonic} from \cpageref{lemma:slicing-demonic}:

\lemmaSlicingDemonic*

\begin{proof}
	Let $\expa \in \Expectations$.
	We have
	\[
		\vc{\stmtDemonic{\stmtOne}{\stmtTwo}}(\expa) = \vc{\stmtOne}(\expa) \sqcap \vc{\stmtTwo}(\expa).
	\]
	By definition of $\sqcap$, we obtain both
	\[
		\vc{\stmt}(\expa) \expleq \vc{\stmtOne}(\expa) \quad\text{and}\quad \vc{\stmt}(\expa) \expleq \vc{\stmtTwo}(\expa)~. \qedhere
	\]
\end{proof}

Although not stated explicitly in the main text, a dual statement to \Cref{lemma:slicing-demonic} exists for the angelic choice $\stmt = \stmtAngelic{\stmtOne}{\stmtTwo}$, and can be shown similarly.
For all $\expa \in \Expectations$, we have
\[
	\vc{\stmt}(\expa) \expgeq \vc{\stmtOne}(\expa) \quad\text{and}\quad \vc{\stmt}(\expa) \expgeq \vc{\stmtTwo}(\expa)~.
\]

\noindent%
Recall \Cref{lemma:extensive} from \cpageref{lemma:extensive}:

\lemmaExtensive*

\begin{proof}
	By \Cref{def:extensive}, a statement $\stmt$ is extensive if for all $\expa \in \Expectations$, $\vc{\stmt}(\expa) \expgeq \expa$ holds.

	\begin{itemize}
		\item Let $\stmt = \stmtAssume{\expb}$. Then, for all $\expa \in \Expectations$,
		      \[
			      \vc{\stmtAssume{\expb}}(\expa) = \expb \impl \expa = \lam{\State}{\ifThenElse{\expb(\State) \leq \expa(\State)}{\infty}{\expa(\State)}} \morespace{\expgeq} \expa~.
		      \]
		\item Let $\stmt = \coAssert{\expb}$. Then, for all $\expa \in \Expectations$,
		      \[
			      \vc{\coAssert{\expb}}(\expa) = \expb \sqcup \expa \morespace{\expgeq} \expa~.
		      \]
		\item Let $\stmt = \coValidate$. Then, for all $\expa \in \Expectations$,
		      \[
			      \vc{\coValidate}(\expa) = \expcovalidate{\expa} = \lam{\State}{\ifThenElse{\expa(\State) = 0}{0}{\infty}} \morespace{\expgeq} \expa~.
		      \]
		\item Let $\stmt = \stmtTick{\expb}$. Then, for all $\expa \in \Expectations$,
		      \[
			      \vc{\stmtTick{\expb}}(\expa) = \expb + \expa \morespace{\expgeq} \expa~. \tag{$\expb$ is non-negative}
		      \]
	\end{itemize}
\end{proof}

\noindent%
Recall \Cref{theorem:extensive-slice} from \cpageref{theorem:extensive-slice}:

\theoremExtensiveSlice*

\begin{proof}
	Let $\slice$ be a subprogram of $\stmt$ obtained by only removing extensive statements.
	It can be shown that the following holds for all $\expa \in \Expectations$:
	\[
		\vc{\stmt}(\expa) \expgeq \vc{\slice}(\expa)~.
	\]
	The proof is completely dual to the structural induction done for \Cref{theorem:reductive-slice} (c.f. \cpageref{proof:reductive-slice}): one simply replaces $\expleq$ by $\expgeq$.
	By \Cref{lemma:refinement-verification-witnessing}, we obtain the desired result.
\end{proof}

\noindent%
Recall \Cref{lemma:error-witnessing} from \cpageref{lemma:error-witnessing}:

\lemmaErrorWitnessing*

\begin{proof}
	If $\slice$ is an error-witnessing slice of $\stmt$ w.r.t. $(\expa,\expb)$, then by condition (2) of \Cref{def:slice:error_witnessing}:
	\[
		\forall \sigma'.~ (\sigma' \not\models \lowerTriple{\expa}{\slice}{\expb} \implies \sigma' \not\models \lowerTriple{\expa}{\stmt}{\expb})~.
	\]
	Rewriting the implication:
	\[
		\forall \sigma'.~ (\sigma' \models \lowerTriple{\expa}{\slice}{\expb} \lor \sigma' \not\models \lowerTriple{\expa}{\stmt}{\expb})~.
	\]
	Rewriting back to an implication, we get the desired result, equivalent to \Cref{def:slice:verification_preserving}:
	\[
		\forall \sigma'.~ (\sigma' \models \lowerTriple{\expa}{\stmt}{\expb} \implies \sigma' \models \lowerTriple{\expa}{\slice}{\expb})~.
		\qedhere
	\]
\end{proof}

\subsection[Proofs for Section~\ref{sec:slicing_on_encodings}]{Proofs for \Cref{sec:slicing_on_encodings}}

\noindent%
Recall \Cref{thm:verification-encoding-components} from \cpageref{thm:verification-encoding-components}.
In this context, we consider slices w.r.t. pre $\infty$ and post $\infty$.

\thmVerificationEncodingComponents*

\begin{proof}
	(1) Let $\stmtAssume{\expa}\symSemi \stmtP$ be an error-witnessing slice of $\stmt$ w.r.t. $\infty,\infty$.
	By definition, we get $\infty \not\expleq \vc{\stmtAssume{\expa}\symSemi \stmtP}(\infty)$.
	Applying adjointness of $\symAssume$/$\impl$, we get $\expa \not\expleq \vc{\stmtP}(\infty)$.
	By monotonicity, we have $\vc{\stmtP}(\infty \sqcap \expc) \expleq \vc{\stmtP}(\infty)$ for all $\expc \in \Expectations$.
	By transitivity of $\expleq$, we get $\expa \not\expleq \vc{\stmtP}(\expc)$, i.e. $\not\models \lowerTriple{\expa}{\stmtP}{\expc}$.

	(2) Let $\stmtP\symSemi \stmtAssert{\expb}$ be a verifying verification-witnessing slice of $\stmt$ w.r.t. $\infty,\infty$.
	As program $\stmtP\symSemi \stmtAssert{\expb}$ verifies, we know $\infty \expleq \vc{\stmtP\symSemi \stmtAssert{\expb}}(\infty)$.
	For all $\expc \in \Expectations$, we have $\expc \expleq \infty$.
	Thus, $\expc \expleq \vc{\stmtP\symSemi \stmtAssert{\expb}}(\infty) = \vc{\stmtP}(\expb)$, i.e. $\models \lowerTriple{\expc}{\stmtP}{\expb}$ for $\expc \in \Expectations$.
\end{proof}

\noindent%
Recall \Cref{theorem:spec-statement-messages} from \cpageref{theorem:spec-statement-messages}:

\thmSpecStatementMessages*

\begin{proof}
	\begin{enumerate}
		\item Let $\stmtP = \Assert{\expb} \symSemi \bar{\stmtP}$.
		      Since $\stmtP$ is an error-witnessing slice and minimal, and the only removed statement $\Assert{\expb}$ is reductive, $\bar{\stmtP}$ verifies.
		      Hence,
		      \[
			      \expaP \expleq \vc{\bar{\stmtP}}(\expbP).
		      \]
		      If also $\expaP \expleq \expb$ holds, we get
		      \[
			      \expaP \expleq \expb \sqcap \vc{\bar{\stmtP}}(\expbP) = \vc{\stmtP}(\expbP),
		      \]
		      thus contradicting that $\stmtP$ is an error-witnessing slice with a counterexample state.
		      Hence, we must have $\expaP \not\expleq \expb$.
		\item Notice that the only statements that might be present in $\stmtP$ are reductive.
		      As $\stmtP$ verifies, we thus have
		      \[
			      \expaP \expleq \vc{\stmtP}(\expbP) \expleq \expbP. \qedhere
		      \]
	\end{enumerate}
\end{proof}

\noindent%

The following lemma states how the assertions in the encoding contribute to the final proof obligation.
It lets us infer the error messages of \Cref{theorem:park-error-messages,theorem:park-verification-messages}.

\begin{restatable}[Park Induction Error Components]{lemma}{lemmaParkErrorComponents}\label{lemma:park-error-components}
	Let $\stmt = \park{b}{\varBody}{\varInvariant}$ as in \Cref{fig:heyvl-park-induction} and $\stmtP$ be obtained by removing any subset of the assertions \stmtNo{1} and \stmtNo{2} in $\stmt$.
	Then:
	\begin{align*}
		\vc{\stmtP}(\expb) & {}= \ParkIndCondPre(\stmtP) \sqcap \ParkIndCondInd(\stmtP) \sqcap \ParkIndCondPost(\stmtP),
	\end{align*}

	\begin{minipage}[c]{0.3\textwidth}
		where for $i \in \Set{1,2}$:
		\begin{align*}
			A_{\stmtNo{i}} & {}=
			\begin{cases}
				\varInvariant, & \text{if $\stmtP$ includes \stmtNo{i}}, \\
				\infty,        & \text{else},
			\end{cases}
		\end{align*}
	\end{minipage}%
	\begin{minipage}[c]{0.07\textwidth}
		\centering
		\hspace{0.9em}\emph{and}
	\end{minipage}%
	\begin{minipage}[c]{0.6\textwidth}
		\begin{align*}
			\hspace{0.9em}\text{a)}\hspace{-0.9em} &  & \ParkIndCondPre(\stmtP)  & {}= A_{\stmtNo{1}},                                                                                       \\
			\text{b)}\hspace{-0.9em}               &  & \ParkIndCondInd(\stmtP)  & {}= \iquant{\vec{v}}{\expvalidate{(\varInvariant \sqcap \embed{b}) \impl \vc{\varBody}(A_{\stmtNo{2}})}}, \\
			\text{c)}\hspace{-0.9em}               &  & \ParkIndCondPost(\stmtP) & {}= \iquant{\vec{v}}{\expvalidate{(\varInvariant \sqcap \embed{\neg b}) \impl \expb}}.
		\end{align*}
	\end{minipage}
\end{restatable}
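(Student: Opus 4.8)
The plan is to compute $\vc{\stmtP}(\expb)$ directly by unfolding the $\symVc$ transformer of the encoding in \Cref{fig:heyvl-park-induction} from the innermost statement outward, checking that the three conjuncts $\ParkIndCondPre(\stmtP)$, $\ParkIndCondInd(\stmtP)$, $\ParkIndCondPost(\stmtP)$ emerge exactly as claimed. As a preliminary step I would record that, by conservativity of \HeyVL, the conditional on lines $5$--$9$ has transformer $(\embed{b}\impl(\cdot))\sqcap(\embed{\neg b}\impl(\cdot))$ applied to the transformers of the then- and else-branch, respectively.

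I would then push the post $\expb$ backwards through the then-branch. Statement \stmtNo{II} ($\stmtAssume{\embed{\exprFalse}}$) gives $\embed{\exprFalse}\impl\expb = 0\impl\expb = \infty$, since every expectation dominates the constant $0$; the optional assertion \stmtNo{2} then turns $\infty$ into $\varInvariant\sqcap\infty = \varInvariant$ when present and leaves $\infty$ when absent, which is exactly the case definition of $A_{\stmtNo{2}}$; and the body $\varBody$ contributes $\vc{\varBody}(A_{\stmtNo{2}})$. Hence the then-branch evaluates to $\vc{\varBody}(A_{\stmtNo{2}})$ and the empty else-branch to $\vc{\stmtSkip}(\expb) = \expb$, leaving $(\embed{b}\impl\vc{\varBody}(A_{\stmtNo{2}}))\sqcap(\embed{\neg b}\impl\expb)$ after the conditional. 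Processing \stmtNo{I} ($\stmtAssume{\varInvariant}$) requires two identities, which follow from expectations forming a Heyting algebra (or directly from \Cref{fig:heylo-semantics}): the distribution $X\impl(Y\sqcap Z) = (X\impl Y)\sqcap(X\impl Z)$ and the currying identity $X\impl(Y\impl Z) = (X\sqcap Y)\impl Z$, a pointwise strengthening of the adjointness property used in the paper; together these rewrite the expression to $((\varInvariant\sqcap\embed{b})\impl\vc{\varBody}(A_{\stmtNo{2}}))\sqcap((\varInvariant\sqcap\embed{\neg b})\impl\expb)$. The statement \stmtValidate{} on line $3$ then applies $\expvalidate{\cdot}$, for which $\expvalidate{X\sqcap Y} = \expvalidate{X}\sqcap\expvalidate{Y}$ holds because $\min(x,y) = \infty$ iff $x = y = \infty$.

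The step I expect to be the main obstacle is the $\symHavoc$ on line $2$, which applies $\iquant{\vec{v}}{\cdot}$ to the conjunction obtained so far. In general $\iquant{\vec{v}}{(P\sqcap Q)}$ is \emph{not} equal to $\iquant{\vec{v}}{P}\sqcap\iquant{\vec{v}}{Q}$, so the point to record carefully is that here both $P$ and $Q$ are outputs of $\expvalidate{\cdot}$ and therefore take only the values $0$ and $\infty$. For such two-valued expectations, $\iquant{\vec{v}}{(P\sqcap Q)}(\State) = \infty$ holds iff both $P$ and $Q$ are $\infty$ under every reassignment of $\vec{v}$, which is exactly the condition for $(\iquant{\vec{v}}{P}\sqcap\iquant{\vec{v}}{Q})(\State) = \infty$; this is the pointwise manifestation of \enquote{$\forall$ distributes over $\land$}. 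As both sides are again two-valued and agree on the set where they equal $\infty$, they coincide, producing $\ParkIndCondInd(\stmtP)\sqcap\ParkIndCondPost(\stmtP)$.

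Finally, the optional assertion \stmtNo{1} ($\stmtAssert{\varInvariant}$) prepends $\varInvariant\sqcap(\cdot)$ when present and the identity otherwise; since $\infty\sqcap X = X$, both cases are captured uniformly by prepending $A_{\stmtNo{1}} = \ParkIndCondPre(\stmtP)$, giving $\vc{\stmtP}(\expb) = \ParkIndCondPre(\stmtP)\sqcap\ParkIndCondInd(\stmtP)\sqcap\ParkIndCondPost(\stmtP)$ as claimed. Throughout I would keep $A_{\stmtNo{1}}$ and $A_{\stmtNo{2}}$ as case-split placeholders rather than branching the whole argument, so that \enquote{removing any subset of \stmtNo{1} and \stmtNo{2}} is handled in a single pass; the loop body $\varBody$ is treated as a black box, and no property of $\vc{\varBody}$ beyond being well-defined is needed for this identity.
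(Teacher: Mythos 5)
Your proof is correct and follows essentially the same route as the paper's: unfold $\symVc$ through the encoding with $A_{\stmtNo{i}}$ as case-split placeholders, distribute $\impl$ over $\sqcap$ in the consequent, curry the nested implications via adjointness, and commute $\expvalidate{\cdot}$ and $\Inf$ with $\sqcap$. One aside is wrong though harmless: $\iquant{\vec{v}}{(P \sqcap Q)} = \iquant{\vec{v}}{P} \sqcap \iquant{\vec{v}}{Q}$ in fact holds for \emph{arbitrary} expectations (an infimum over a doubly-indexed family can be computed in either order), so your restriction to two-valued operands is unnecessary, even though the restricted argument you give for this step is itself valid.
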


\begin{proof}
	First recall that for any $\expc \in \Expectations$,
	\[
		\expc = \infty \sqcap \expc = \vc{\stmtAssert{\infty}}(\expc).
	\]
	Hence, omitting assertion \stmtNo{1} or \stmtNo{2} in $\stmtP$ corresponds to replacing the respective statement by $\stmtAssert{\infty}$.
	We obtain the result by simple rewriting:
	\begin{align*}
		\vc{\stmtP}(\expb) & {}= \vc{\stmtAssert{A_{\stmtNo{1}}}}\left(\iquant{\vec{v}}{\triangle\!\left(I \impl \left( \left(\embed{b} \impl \vc{\stmtSeq{\stmtSeq{C}{\stmtAssert{A_{\stmtNo{2}}}}}{\stmtAssume{\embed{\exprFalse}}}}(\expb)\right)\right.\right.}\right.                                                                                                                                                         \\
		                   & \hspace{13.5em} \sqcap \left(\embed{\neg b} \impl \expb\right) \Big)\Big)                                                                                                                                                                                                                                                                                                                             \\
		                   & {}= {A_{\stmtNo{1}}} \sqcap \left(\iquant{\vec{v}}{\expvalidate{I \impl \left( \left(\embed{b} \impl \vc{C}(\vc{\stmtAssert{A_{\stmtNo{2}}}}(\infty))\right) \sqcap \left(\embed{\neg b} \impl \expb\right) \right)}}\right)                                                                                                                                                                          \\
		                   & {}= {A_{\stmtNo{1}}} \sqcap \left(\iquant{\vec{v}}{\expvalidate{I \impl \left( \left(\embed{b} \impl \vc{C}(A_{\stmtNo{2}})\right) \sqcap \left(\embed{\neg b} \impl \expb\right) \right)}}\right)                                                                                                                                                                                                    \\
		                   & {}= {A_{\stmtNo{1}}} \sqcap \left(\iquant{\vec{v}}{\expvalidate{ \left(I \impl \left(\embed{b} \impl \vc{C}(A_{\stmtNo{2}})\right)\right) \sqcap \left(I \impl \left(\embed{\neg b} \impl \expb\right) \right)}}\right)                   \tag{$\impl$ and $\sqcap$ commute in antecedent}                                                                                                            \\
		                   & {}= {A_{\stmtNo{1}}} \sqcap \left(\iquant{\vec{v}}{\expvalidate{ \left(\left(I \sqcap \embed{b}\right) \impl \vc{C}(A_{\stmtNo{2}})\right) \sqcap \left(\left(I \sqcap \embed{\neg b}\right) \impl \expb \right)}}\right)                 \tag{collapse sequence of $\impl$}                                                                                                                          \\
		                   & {}= \underbrace{{A_{\stmtNo{1}}}}_{\ParkIndCondPre(\stmtP)} \sqcap \underbrace{\iquant{\vec{v}}{\expvalidate{ \left(I \sqcap \embed{b}\right) \impl \vc{C}(A_{\stmtNo{2}}) }}}_{\ParkIndCondInd(\stmtP)} \sqcap \underbrace{\iquant{\vec{v}}{\expvalidate{ \left(I \sqcap \embed{\neg b}\right) \impl \expb }}}_{\ParkIndCondPost(\stmtP)}.         \tag{$\Inf$/$\symValidate$ commute with $\sqcap$} \\
	\end{align*}
\end{proof}

\noindent%
We introduce an auxiliary lemma that will be used for the proof of \Cref{theorem:park-error-messages}.

\begin{lemma}
	\label{lemma:heyvl:preserve_infty_without_assert}
	If $\stmt$ is a \HeyVL{} program that contains neither $\symAssert{}$ nor $\symUp{}\symAssume{}$ statements, then $\stmt$ verifies, i.e. $\vc{\stmt}(\infty) = \infty$.
	Dually, if $\stmt$ is a \HeyVL{} program that contains neither $\symUp\symAssert{}$ nor $\symAssume{}$ nor $\symTick$ statements, then $\stmt$ co-verifies, i.e. $\vc{\stmt}(0) = 0$.
\end{lemma}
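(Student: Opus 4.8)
The plan is a straightforward structural induction on $\stmt$. Since $\infty$ is the top element of the complete lattice $(\Expectations, \expleq)$, proving $\vc{\stmt}(\infty) = \infty$ amounts to proving $\infty \expleq \vc{\stmt}(\infty)$, i.e.\ that $\stmt$ verifies; dually, $\vc{\stmt}(0) = 0$ witnesses co-verification. First I would observe that if $\stmt$ contains no $\symAssert$ (resp.\ no $\symUp\symAssume$) statement, then neither does any of its substatements, so the syntactic hypothesis is preserved along the induction.

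For the base cases I would evaluate each atomic statement's semantics from \Cref{fig:heyvl-semantics} at the constant expectation $\infty$ and check that the result is again $\infty$: the substitutions in $\vc{\stmtDeclInit{x}{\typevar}{\mu}}(\infty)$ leave the constant $\infty$ unchanged, and since $\sum_i p_i = 1$ forces some $p_i > 0$ we get $\sum_i p_i \cdot \infty = \infty$; $\vc{\stmtTick{a}}(\infty) = \infty + a = \infty$; $\vc{\stmtAssume{\expb}}(\infty) = \expb \impl \infty = \infty$ because $\expb(\State) \le \infty$ always; $\vc{\coAssert{\expb}}(\infty) = \expb \sqcup \infty = \infty$; $\vc{\stmtHavoc{x}}(\infty)$ and $\vc{\coHavoc{x}}(\infty)$ are the infimum resp.\ supremum of the constant $\infty$, hence $\infty$; and $\vc{\stmtValidate}(\infty) = \vc{\coValidate}(\infty) = \infty$ since $\infty = \infty$ and $\infty \ne 0$. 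The two excluded statements are exactly the ones that can drop below the top: $\vc{\stmtAssert{\expb}}(\infty) = \expb \sqcap \infty = \expb$, and $\vc{\coAssume{\expb}}(\infty) = \expb \coimpl \infty$, which equals $0$ wherever $\expb(\State) = \infty$.

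For the inductive cases $\seq$, $\symDemonic$, $\symAngelic$, the semantics in \Cref{fig:heyvl-semantics} is built from $\vc{\stmtOne}$, $\vc{\stmtTwo}$ and lattice meets and joins, so the induction hypotheses $\vc{\stmtOne}(\infty) = \vc{\stmtTwo}(\infty) = \infty$ yield $\vc{\stmtOne}(\vc{\stmtTwo}(\infty)) = \vc{\stmtOne}(\infty) = \infty$, $\infty \sqcap \infty = \infty$, and $\infty \sqcup \infty = \infty$ respectively. The dual claim follows from the completely symmetric induction evaluated at the bottom element $0$: there $\symTick$ ($0 + a$), $\symAssume$ ($\expb \impl 0 = \infty$ where $\expb(\State) = 0$), and $\symUp\symAssert$ ($\expb \sqcup 0 = \expb$) are the statements that can leave $0$ and hence are excluded, whereas $\vc{\coAssume{\expb}}(0) = \expb \coimpl 0 = 0$, $\vc{\stmtAssert{\expb}}(0) = \expb \sqcap 0 = 0$, and all remaining cases go through as before.

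There is essentially no real obstacle here; the only spot calling for a moment of care is the random assignment, where one must invoke that $\mu$ is a genuine finite-support probability distribution (so $\sum_i p_i = 1$ and some weight is positive) in order to conclude $\sum_i p_i \cdot \infty = \infty$ rather than possibly $0$.
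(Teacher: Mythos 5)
Your proposal is correct and follows essentially the same route as the paper: a structural induction on $\stmt$, evaluating each atomic statement's $\symVc$ semantics at the constant $\infty$ (dually $0$) and propagating through sequencing and the two nondeterministic choices via $\infty \sqcap \infty = \infty$ and $\infty \sqcup \infty = \infty$. Your extra care at the random-assignment case (using $\sum_i p_i = 1$ to rule out $0 \cdot \infty$ issues) and your explicit identification of which statements break the invariant are sound refinements of the same argument.
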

\begin{proof}
	Let $\stmt$ be a \HeyVL{} program without $\symAssert$ nor $\symUp\symAssume$ statements.
	A simple induction on the structure of $\stmt$ shows that $\vc{\stmt}(\infty) = \infty$ holds.

	\emph{Base cases.}
	\begin{itemize}
		\item Case $\stmt = \stmtDeclInit{x}{\typevar}{\mu}$: Let $\mu = \pexp{p_1}{\termvar_1} \pexpand \ldots \pexpand \pexp{p_n}{\termvar_n}$. Since we consider distributions, $\sum_{i=1}^n p_i = 1$. Then we have
		      \[
			      \vc{\stmt}(\infty) = p_1 \cdot \infty\substBy{x}{\termvar_1} + \ldots + p_n \cdot \infty\substBy{x}{\termvar_n} = \infty.
		      \]
		\item Case $\stmt = \stmtAssert{\expb}$: Impossible.
		\item Case $\stmt = \stmtAssume{\expb}$: $\vc{\stmt}(\infty) = \expb \impl \infty = \infty$.
		\item Case $\stmt = \stmtHavoc{x}$ $\vc{\stmt}(\infty) = \iquant{x}{\infty} = \infty$.
		\item Case $\stmt = \stmtValidate$: $\vc{\stmt}(\infty) = \expvalidate{\infty} = \infty$.
		\item Case $\stmt = \stmtTick{\aexpr}$: $\vc{\stmt}(\infty) = \infty + \aexpr = \infty$.
		\item Case $\stmt = \coAssert{\expb}$: $\vc{\stmt}(\infty) = \infty \sqcup \expb = \infty$.
		\item Case $\stmt = \coAssume{\expb}$: Impossible.
		\item Case $\stmt = \coHavoc{x}$: $\vc{\stmt}(\infty) = \squant{x}{\infty} = \infty$.
		\item Case $\stmt = \coValidate$: $\vc{\stmt}(\infty) = \expcovalidate{\infty} = \infty$.
	\end{itemize}

	\emph{Inductive step.}
	Assume the property holds for $\stmtOne,\stmtTwo$.
	\begin{itemize}
		\item Case $\stmt = \stmtSeq{\stmtOne}{\stmtTwo}$: $\vc{\stmt}(\infty) = \vc{\stmtOne}(\vc{\stmtTwo}(\infty)) = \vc{\stmtOne}(\infty) = \infty$.
		\item Case $\stmt = \stmtDemonic{\stmtOne}{\stmtTwo}$: $\vc{\stmt}(\infty) = \vc{\stmtOne}(\infty) \sqcap \vc{\stmtTwo}(\infty) = \infty \sqcap \infty = \infty$.
		\item Case $\stmt = \stmtAngelic{\stmtOne}{\stmtTwo}$: $\vc{\stmt}(\infty) = \vc{\stmtOne}(\infty) \sqcup \vc{\stmtTwo}(\infty) = \infty \sqcup \infty = \infty$.
	\end{itemize}

	Therefore, $\stmt$ verifies.
	The proof for the dual case is analogous.
\end{proof}

\noindent%
Recall \Cref{theorem:park-error-messages} from \cpageref{theorem:park-error-messages}:

\theoremParkErrorMessages*

Note that the dual version of this theorem requires a modification in condition (3) that $\slice$ contains neither $\symUp\symAssert$ nor $\symAssume$ nor $\symTick$ statements, corresponding to the dual case in \cref{lemma:heyvl:preserve_infty_without_assert}.

\begin{proof}
	\begin{enumerate}
		\item Assume that $\stmtP$ includes assertion \stmtNo{1}.
		      By minimality of $\stmtP$, $\stmtP \ominus \stmtNo{1}$ does not have a counterexample, i.e.\ $\expa \expleq \vc{\stmtP \ominus \stmtNo{1}}(\expb)$.
		      Hence, by~\cref{lemma:park-error-components}
		      \[
			      \expa \expleq \vc{\stmtP \ominus \stmtNo{1}}(\expb) = \ParkIndCondPre(\stmtP \ominus \stmtNo{1}) \sqcap \ParkIndCondInd(\stmtP \ominus \stmtNo{1}) \sqcap \ParkIndCondPost(\stmtP \ominus \stmtNo{1}).
		      \]
		      From this we can conclude
		      \[
			      \expa \expleq \ParkIndCondInd(\stmtP \ominus \stmtNo{1}) = \ParkIndCondInd(\stmtP) \quad\text{and}\quad \expa \expleq \ParkIndCondPost(\stmtP \ominus \stmtNo{1}) = \ParkIndCondPost(\stmtP).
		      \]

		      If also $\expa \expleq \ParkIndCondPre(\stmtP)$ holds, we use~\cref{lemma:park-error-components} to get
		      \[
			      \expa \expleq \ParkIndCondPre(\stmtP) \sqcap \ParkIndCondInd(\stmtP) \sqcap \ParkIndCondPost(\stmtP) = \vc{\stmtP}(\expb),
		      \]
		      thus contradicting that $\stmtP$ is an error-witnessing slice with a counterexample state.

		      We must thus have $\expa \not\expleq \ParkIndCondPre(\stmtP) = A_{\stmtNo{1}} = I$, i.e.\ there exists a state $\State'$ with $\expa(\State') \not\leq I(\State')$.

		\item Assume that $\stmtP$ includes assertion \stmtNo{2}.
		      Similar to the previous case, we can use minimality of $\stmtP$ to conclude that
		      \[
			      \expa \not\expleq \ParkIndCondInd(\stmtP) = \iquant{\vec{v}}{\expvalidate{(I \sqcap \embed{b}) \impl \vc{C}(I)}}.
		      \]
		      If for every state $\State'$, ${\expvalidate{(I \sqcap \embed{b}) \impl \vc{C}(I)}}\substBy{\vec{v}}{\vec{\nu}}(\State') = \infty$ for all $\vec{\nu} \in \vec{\tau}$, then $\expa \expleq \infty = \ParkIndCondInd(\stmtP)$, a contradiction.
		      Hence, there must exist a state $\State'$ and valuation $\vec{\nu} \in \vec{\tau}$, such that ${\expvalidate{(I \sqcap \embed{b}) \impl \vc{C}(I)}}\substBy{\vec{v}}{\vec{\nu}}(\State') \neq \infty$.
		      Applying the definition of $\symValidate$ yields
		      \begin{align*}
			      \infty & {}\not\leq \left((I \sqcap \embed{b}) \impl \vc{C}(I)\right)\substBy{\vec{v}}{\vec{\nu}}(\State')                            \\
			             & {}=\left((I \sqcap \embed{b}) \impl \vc{C}(I)\right)(\State'\substBy{\vec{v}}{\vec{\nu}}). \tag{apply substitution to state}
		      \end{align*}
		      Writing $\State'' = \State'\substBy{\vec{v}}{\vec{\nu}}$ and using the adjointness of $\impl$ and $\sqcap$ we immediately get
		      \[
			      \infty \sqcap (I \sqcap \embed{b})(\State'') \not\leq \vc{C}(I)(\State'').
		      \]
		      We have that $\State''$ fulfills $b$, or else $(I \sqcap \embed{b})(\State'') = 0 \leq \vc{C}(I)(\State'')$ leads to a contradiction.
		      Hence, $\State''$ is the state witnessing that the loop invariant is not inductive:
		      \[
			      I(\State'') = I(\State'') \sqcap \infty =  (I \sqcap \embed{b})(\State'') \not\leq \vc{C}(I)(\State'').
		      \]

		\item Assume that $\stmtP$ includes neither assertion \stmtNo{1} nor \stmtNo{2}, and $C$ contains neither $\symAssert$ nor $\symUp\symAssume$ statements.
		      By~\cref{lemma:park-error-components}
		      \begin{align*}
			      \expa & {}\not\expleq \ParkIndCondPre(\stmtP) \sqcap \ParkIndCondInd(\stmtP) \sqcap \ParkIndCondPost(\stmtP)                                                                                                      \\
			            & {}= \infty \sqcap \ParkIndCondInd(\stmtP) \sqcap \ParkIndCondPost(\stmtP)                                             \tag{assertion \stmtNo{1} not in $\stmtP$}                                          \\
			            & {}= \ParkIndCondInd(\stmtP) \sqcap \ParkIndCondPost(\stmtP)                                                                                                                                               \\
			            & {}= (\iquant{\vec{v}}{\expvalidate{(I \sqcap \embed{b}) \impl \vc{C}(\infty)}}) \sqcap \ParkIndCondPost(\stmtP)       \tag{assertion \stmtNo{2} not in $\stmtP$}                                          \\
			            & {}= (\iquant{\vec{v}}{\expvalidate{(I \sqcap \embed{b}) \impl \infty}}) \sqcap \ParkIndCondPost(\stmtP)               \tag{$\vc{C}(\infty) = \infty$ by~\cref{lemma:heyvl:preserve_infty_without_assert}} \\
			            & {}= \infty \sqcap \ParkIndCondPost(\stmtP)                                                                            \tag{simplification}                                                                \\
			            & {}= \ParkIndCondPost(\stmtP) = \iquant{\vec{v}}{\expvalidate{(I \sqcap \embed{\neg b}) \impl \expb}}.                                                                                                     \\
		      \end{align*}
		      Reasoning analogous to the previous case there exists a state $\State'$ and valuation $\vec{\nu} \in \vec{\tau}$ with
		      \begin{align*}
			      \infty & {}\not\leq \left((I \sqcap \embed{\neg b}) \impl \expb\right)\substBy{\vec{v}}{\vec{\nu}}(\State').
		      \end{align*}
		      Again, writing $\State'' = \State'\substBy{\vec{v}}{\vec{\nu}}$ and using the adjointness of $\impl$ and $\sqcap$ we immediately get $(I \sqcap \embed{\neg b})(\State'') \not\leq \expb(\State'')$.
		      $\State''$ does not fulfill $b$, or else $(I \sqcap \embed{\neg b})(\State'') = 0 \leq \expb(\State'')$ leads to a contradiction.
		      Hence, $\State''$ is the state witnessing that the loop invariant does not entail the post:
		      \[
			      I(\State'') = I(\State'') \sqcap \infty =  (I \sqcap \embed{\neg b})(\State'') \not\leq \expb(\State'').\qedhere
		      \]
	\end{enumerate}
\end{proof}

\noindent%
Recall \Cref{theorem:park-verification-messages} from \cpageref{theorem:park-verification-messages}:

\theoremParkVerificationMessages*

\begin{proof}
	First recall that for any $\expc \in \Expectations$,
	\[
		\expc = \infty \impl \expc = \vc{\stmtAssume{\infty}}(\expc) \qmorespace{\text{and}} \expc = \infty \sqcap \expc = \vc{\stmtAssert{\infty}}(\expc) .
	\]
	Hence, omitting assumption \stmtNo{I} or \stmtNo{II} in $\stmtP$ corresponds to replacing the respective statement by $\stmtAssume{\infty}$.
	And $\stmtAssume{\infty}$/$\Assert{\infty}$ statements can be added without changing the program's semantics.
	We write
	\[
		A_{\stmtNo{I}}           =
		\begin{cases}
			\varInvariant, & \text{if $\stmtP$ includes \stmtNo{I}}, \\
			\infty,        & \text{else}.
		\end{cases}
		\qqand
		A_{\stmtNo{II}}           =
		\begin{cases}
			\embed{\exprFalse}, & \text{if $\stmtP$ includes \stmtNo{II}}, \\
			\infty,             & \text{else}.
		\end{cases}
	\]

	\begin{enumerate}
		\item Assume that $\stmtP$ does not include assumption \stmtNo{I}.
		      Then
		      \begin{align*}
			      \expa & {}\expleq \vc{\stmtP}(\expb)                                                                                                                                                                                                                          \\
			            & {}= \vc{\stmtSeq{\stmtAssert{\varInvariant}}{\stmtSeq{\Havoc{variables}}{\stmtSeq{\Validate}{{\stmtIf{b}{\stmtSeq{C}{\stmtSeq{\Assert{\varInvariant}}{\stmtAssume{A_{\stmtNo{II}}}}}}{}}}}}}(\expb)                                                   \\
			            & {}\expleq \vc{{\stmtSeq{\Havoc{variables}}{\stmtSeq{\Validate}{{\stmtIf{b}{\stmtSeq{C}{{\stmtAssume{A_{\stmtNo{II}}}}}}{}}}}}}(\expb)                                                                 \tag{remove reductive statements}               \\
			            & {}= \vc{\stmtSeq{\stmtAssert{\infty}}{\stmtSeq{\Havoc{variables}}{\stmtSeq{\Validate}{{\stmtIf{b}{\stmtSeq{C}{\stmtSeq{\Assert{\infty}}{\stmtAssume{A_{\stmtNo{II}}}}}}{}}}}}}(\expb)                                                                 \\
			            & {}\expleq \vc{\park{b}{C}{\infty}}(\expb).                                                                                                                                                            \tag{add extensive statement $\Assume{\infty}$}
		      \end{align*}
		      Hence, $\lowerTriple{\expa}{\park{b}{C}{\infty}}{\expb}$ is valid.

		\item Assume that $\stmtP$ does not include assumption \stmtNo{II}.
		      We have
		      \begin{align*}
			      \expa & {}\expleq \vc{\stmtP}(\expb)                                                                                                                                                                           \\
			            & {}\expleq \vc{\stmtSeq{\stmtAssert{I}}{\stmtSeq{\stmtAssume{A_{\stmtNo{I}}}}{\stmtIf{b}{\stmtSeq{C}{\stmtAssume{A_{\stmtNo{II}}}}}{}}}}(\expb)    \tag{removing reductive statements}                  \\
			            & {}= \vc{\stmtSeq{\stmtAssert{I}}{\stmtSeq{\stmtAssume{A_{\stmtNo{I}}}}{\stmtIf{b}{C}{}}}}(\expb)                                                  \tag{$A_{\stmtNo{II}} = \infty$}                     \\
			            & {}= I \sqcap \left( A_{\stmtNo{I}} \impl \vc{\stmtIf{b}{C}{}}(\expb) \right)                                                                                                                           \\
			            & {}= I \sqcap A_{\stmtNo{I}} \sqcap \left( A_{\stmtNo{I}} \impl \vc{\stmtIf{b}{C}{}}(\expb) \right)                                                \tag{$I \expleq A_{\stmtNo{I}} \in \Set{I, \infty}$} \\
			            & {}\expleq I \sqcap \vc{\stmtIf{b}{C}{}}(\expb)                                                                                                    \tag{Modus ponens}                                   \\
			            & {}\expleq \vc{\stmtIf{b}{C}{}}(\expb).
		      \end{align*}
		      Hence, $\lowerTriple{\expa}{\stmtIf{b}{C}{}}{\expb}$ is valid. \qedhere
	\end{enumerate}
\end{proof}

\subsection{Proofs for Section~\ref{sec:implementation}}

\begin{table}[t]
	\caption{\HeyVL{} program transformations of statements for slicing.}
	\label{fig:heyvl-transformations-extended}
	\small{}
	\begin{tabular}{l|l}
		\toprule
		$\stmt$                            & $\stmtTransformed$                                                                                                                                 \\
		\midrule
		$\stmtAsgn{x}{a}$                  & $\stmtAsgn{x}{\ite{\varEnabled{\stmt}}{a}{x}}$                                                                                                     \\
		$\stmtAssert{\expb}$               & $\stmtAssert{\ite{\varEnabled{\stmt}}{\expb}{\infty}}$                                                                                             \\
		$\stmtAssume{\expb}$               & $\stmtAssume{\ite{\varEnabled{\stmt}}{\expb}{\infty}}$                                                                                             \\
		$\stmtTick{\expb}$                 & $\stmtTick{\ite{\varEnabled{\stmt}}{\expb}{0}}$                                                                                                    \\
		$\coAssert{\expb}$                 & $\coAssert{\ite{\varEnabled{\stmt}}{\expb}{0}}$                                                                                                    \\
		$\coAssume{\expb}$                 & $\coAssume{\ite{\varEnabled{\stmt}}{\expb}{0}}$                                                                                                    \\
		$\stmtDemonic{\stmtOne}{\stmtTwo}$ & $\stmtDemonic{\stmtSeq{\stmtAssume{\embed{\varEnabled{\stmtOne}}}}{\stmtOne}}{\stmtSeq{\stmtAssume{\embed{\varEnabled{\stmtTwo}}}}{\stmtTwo}}$     \\
		$\stmtAngelic{\stmtOne}{\stmtTwo}$ & $\stmtAngelic{\stmtSeq{\coAssume{\embed{\neg\varEnabled{\stmtOne}}}}{\stmtOne}}{\stmtSeq{\coAssume{\embed{\neg\varEnabled{\stmtTwo}}}}{\stmtTwo}}$ \\
		\bottomrule
	\end{tabular}
\end{table}

For the implementation (\Cref{sec:implementation}), we introduced a program transformation that introduces a new variable $\varEnabled{\stmt}$ for each potentially sliceable statement $\stmt$ in the \HeyVL{}, executing $\stmt$ only if $\varEnabled{\stmt}$ is true (and otherwise skipping it).
This transformation is logically equivalent to a conditional execution of $\stmt$:
\[
	\stmtIf{enabled_{\stmtI}}{\stmtI}{}
\]
In the implementation, we make use of better encodings to avoid potential exponential blow-up of the verification pre-expectation.
As an example, consider a program consisting of two assignments $\stmt = \stmtAsgn{x}{a} \symSemi \stmtAsgn{y}{b}$.
The naive transformation would yield
\[
	\stmtTransformed = \stmtIf{\varEnabled{\stmtOne}}{\stmtAsgn{x}{a}}{} \symSemi \stmtIf{\varEnabled{\stmtTwo}}{\stmtAsgn{y}{b}}{}~,
\]
so that the verification pre-expectation w.r.t. to post $\expa$ would evaluate to the following:
\begin{align*}
	\vc{\stmtTransformed}(\expa) = & \quad \embed{\varEnabled{\stmtOne}} \impl (                                                                                                              \\
	                               & \quad\quad (\embed{\varEnabled{\stmtTwo}} \impl \expa\substBy{y}{b}\substBy{x}{a}) \sqcap (\embed{\neg \varEnabled{\stmtTwo}} \impl \expa\substBy{x}{a}) \\
	                               & \quad ) \sqcap \embed{\neg \varEnabled{\stmtOne}} \impl (                                                                                                \\
	                               & \quad\quad (\embed{\varEnabled{\stmtTwo}} \impl \expa\substBy{y}{b}) \sqcap (\embed{\neg \varEnabled{\stmtTwo}} \impl \expa)                             \\
	                               & \quad )~.
\end{align*}
This expression contains four occurrences of $\expa$, one for each possible combination of the two boolean variables $\varEnabled{\stmtOne}$ and $\varEnabled{\stmtTwo}$.
Instead, we use improved transformations that avoid this blow-up by pushing the conditionals into the statements themselves.
The complete set of transformations is shown in \Cref{fig:heyvl-transformations-extended}.
For the above example, we obtain
\[
	\vc{\stmtTransformed}(\expa) = \expa\substBy{y}{\ite{\varEnabled{\stmtTwo}}{b}{y}}\substBy{x}{\ite{\varEnabled{\stmtOne}}{a}{x}}~.
\]
This expression contains only one occurrence of $\expa$, compared to four in the naive transformation.

The following theorems state that these transformations, as well as the respective duals (added explicitly here), are correct.
The theorems are stated for non-recursive transformations, and soundness of recursive transformations holds by induction on the structure of the program.

We begin with the soundness for the atomic statements.
\begin{theorem}[Soundness of Atomic Program Transformations for Slicing]
	Let $\stmt$ be an atomic \HeyVL{} program, and let $\stmtTransformed$ be the program obtained by applying the transformations from \Cref{fig:heyvl-transformations-extended}.
	For all $\expa \in \Expectations$, the following holds:
	\[
		\vc{\stmtIf{enabled_\stmt}{\stmt}{\stmtSkip}}(\expa) \quad=\quad \vc{\stmtTransformed}(\expa)~.
	\]
\end{theorem}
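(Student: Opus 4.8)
The plan is to prove the identity pointwise: fix an arbitrary state $\State$ and split on the Boolean value $\State(\varEnabled{\stmt})$. The transformed program $\stmtTransformed$ is given case-by-case in \Cref{fig:heyvl-transformations-extended}, so after a common simplification of the left-hand side the proof is a short finite case analysis over the atomic statement forms in that table.

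I would first simplify the left-hand side once, independently of which atomic statement $\stmt$ is. Desugaring the conditional as in the Conservativity paragraph of \Cref{sec:background-heyvl}, we have $\stmtIf{\varEnabled{\stmt}}{\stmt}{\stmtSkip} \equiv \stmtDemonic{\stmtAssume{\embed{\varEnabled{\stmt}}}\symSemi \stmt}{\stmtAssume{\embed{\neg \varEnabled{\stmt}}}\symSemi \stmtSkip}$, so by \Cref{fig:heyvl-semantics} together with $\vc{\stmtSkip}(\expa) = \expa$,
\[
	\vc{\stmtIf{\varEnabled{\stmt}}{\stmt}{\stmtSkip}}(\expa) = \bigl(\embed{\varEnabled{\stmt}} \impl \vc{\stmt}(\expa)\bigr) \sqcap \bigl(\embed{\neg \varEnabled{\stmt}} \impl \expa\bigr)~.
\]
Using the boundary behaviour of the quantitative implication, which is immediate from its semantics in \Cref{fig:heylo-semantics} --- namely $\infty \impl \expb = \expb$ and $0 \impl \expb = \infty$ for all $\expb$ --- together with the fact that $\embed{\cdot}$ takes only the values $0$ and $\infty$, the expression above evaluates at $\State$ to $\vc{\stmt}(\expa)(\State)$ when $\State \models \varEnabled{\stmt}$ and to $\expa(\State)$ otherwise. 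In other words, the left-hand side is the ``guarded no-op'': it behaves like $\stmt$ exactly on the states satisfying $\varEnabled{\stmt}$ and like $\stmtSkip$ elsewhere.

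It then remains to check, for each atomic row of \Cref{fig:heyvl-transformations-extended}, that $\vc{\stmtTransformed}(\expa)$ is this same guarded no-op. When $\State \models \varEnabled{\stmt}$, every $\ite{\varEnabled{\stmt}}{\cdot}{\cdot}$ occurring inside $\stmtTransformed$ selects its ``then'' argument, so $\vc{\stmtTransformed}(\expa)(\State) = \vc{\stmt}(\expa)(\State)$ directly by the defining clause for $\stmt$ in \Cref{fig:heyvl-semantics}. When $\State \not\models \varEnabled{\stmt}$, the ``else'' argument is chosen and collapses the statement to a no-op: for $\stmtAsgn{x}{a}$ the post becomes $\expa\substBy{x}{x} = \expa$; for $\stmtAssert{\expb}$ and $\stmtAssume{\expb}$ the neutral constant $\infty$ gives $\infty \sqcap \expa = \expa$ and $\infty \impl \expa = \expa$; for $\stmtTick{\expb}$ we get $\expa + 0 = \expa$; and for the dual statements $\coAssert{\expb}$ and $\coAssume{\expb}$ the neutral constant $0$ gives $0 \sqcup \expa = \expa$ and $0 \coimpl \expa = \expa$, the last again read off the semantics of $\coimpl$. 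In every case this coincides with the guarded no-op, which finishes the proof. The rows $\stmtDemonic{\cdot}{\cdot}$ and $\stmtAngelic{\cdot}{\cdot}$ of the table are not atomic and lie outside this statement; atomic statements with no row, such as $\symHavoc$ or $\symValidate$, are not selected for slicing, so no case is needed for them.

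I do not expect a genuine obstacle here: the only point requiring care is handling the boundary cases of the quantitative (co)implication correctly ($\infty \impl \expb = \expb$, $0 \impl \expb = \infty$, $0 \coimpl \expb = \expb$), which is why I would isolate these three facts at the start so that each row of the table reduces to a one-line verification.
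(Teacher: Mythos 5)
Your proposal is correct and follows essentially the same route as the paper's proof: a finite case analysis over the atomic rows of the transformation table, reducing each to the neutral-element identities $\expa\substBy{x}{x} = \expa$, $\infty \sqcap \expa = \expa$, $\infty \impl \expa = \expa$, $\expa + 0 = \expa$, $0 \sqcup \expa = \expa$, and $0 \coimpl \expa = \expa$. The only difference is presentational --- you make the semantics of the guarded statement explicit via the demonic-choice desugaring and then argue pointwise, whereas the paper pulls the $\mathrm{ite}$ outward in each case; both amount to the same computation.
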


\begin{proof}
	\emph{Case $\stmt = \stmtAsgn{x}{a}$.}
	\begin{align*}
		\vc{\stmtTransformed}(\expa) & = \vc{\stmtAsgn{x}{\ite{\varEnabled{\stmt}}{a}{x}}}(\expa)                                 \\
		                             & = \expa\substBy{x}{\ite{\varEnabled{\stmt}}{a}{x}}                                         \\
		                             & = \ite{\varEnabled{\stmt}}{\expa\substBy{x}{a}}{\expa\substBy{x}{x}}                       \\
		                             & = \ite{\varEnabled{\stmt}}{\expa\substBy{x}{a}}{\expa} \tag{$\expa\substBy{x}{x} = \expa$} \\
		                             & = \vc{\stmtIf{enabled_\stmt}{\stmt}{\stmtSkip}}(\expa)
	\end{align*}

	\emph{Case $\stmt = \stmtAssert{\expb}$.}
	\begin{align*}
		\vc{\stmtTransformed}(\expa) & = \vc{\stmtAssert{\ite{\varEnabled{\stmt}}{\expb}{\infty}}}(\expa)                        \\
		                             & = \ite{\varEnabled{\stmt}}{\expb}{\infty} \sqcap \expa                                    \\
		                             & = \ite{\varEnabled{\stmt}}{\expb \sqcap \expa}{\infty \sqcap \expa}                       \\
		                             & = \ite{\varEnabled{\stmt}}{\expb \sqcap \expa}{\expa} \tag{$\infty \sqcap \expa = \expa$} \\
		                             & = \vc{\stmtIf{enabled_\stmt}{\stmt}{\stmtSkip}}(\expa)
	\end{align*}

	\emph{Case $\stmt = \stmtAssume{\expb}$.}
	\begin{align*}
		\vc{\stmtTransformed}(\expa) & = \vc{\stmtAssume{\ite{\varEnabled{\stmt}}{\expb}{\infty}}}(\expa)                      \\
		                             & = \ite{\varEnabled{\stmt}}{\expb}{\infty} \impl \expa                                   \\
		                             & = \ite{\varEnabled{\stmt}}{\expb \impl \expa}{\infty \impl \expa}                       \\
		                             & = \ite{\varEnabled{\stmt}}{\expb \impl \expa}{\expa} \tag{$\infty \impl \expa = \expa$} \\
		                             & = \vc{\stmtIf{enabled_\stmt}{\stmt}{\stmtSkip}}(\expa)
	\end{align*}

	\emph{Case $\stmt = \stmtTick{\aexpr}$.}
	\begin{align*}
		\vc{\stmtTransformed}(\expa) & = \vc{\stmtTick{\ite{\varEnabled{\stmt}}{\aexpr}{0}}}(\expa)                \\
		                             & = \expa + \ite{\varEnabled{\stmt}}{\aexpr}{0}                               \\
		                             & = \ite{\varEnabled{\stmt}}{\expa + \aexpr}{\expa + 0}                       \\
		                             & = \ite{\varEnabled{\stmt}}{\expa + \aexpr}{\expa} \tag{$\expa + 0 = \expa$} \\
		                             & = \vc{\stmtIf{enabled_\stmt}{\stmt}{\stmtSkip}}(\expa)
	\end{align*}

	\emph{Case $\stmt = \coAssert{\expb}$.}
	\begin{align*}
		\vc{\stmtTransformed}(\expa) & = \vc{\coAssert{\ite{\varEnabled{\stmt}}{\expb}{0}}}(\expa)                          \\
		                             & = \ite{\varEnabled{\stmt}}{\expb}{0} \sqcup \expa                                    \\
		                             & = \ite{\varEnabled{\stmt}}{\expb \sqcup \expa}{0 \sqcup \expa}                       \\
		                             & = \ite{\varEnabled{\stmt}}{\expb \sqcup \expa}{\expa} \tag{$0 \sqcup \expa = \expa$} \\
		                             & = \vc{\stmtIf{enabled_\stmt}{\stmt}{\stmtSkip}}(\expa)
	\end{align*}

	\emph{Case $\stmt = \coAssume{\expb}$.}
	\begin{align*}
		\vc{\stmtTransformed}(\expa) & = \vc{\coAssume{\ite{\varEnabled{\stmt}}{\expb}{0}}}(\expa)                            \\
		                             & = \ite{\varEnabled{\stmt}}{\expb}{0} \coimpl \expa                                     \\
		                             & = \ite{\varEnabled{\stmt}}{\expb \coimpl \expa}{0 \coimpl \expa}                       \\
		                             & = \ite{\varEnabled{\stmt}}{\expb \coimpl \expa}{\expa} \tag{$0 \coimpl \expa = \expa$} \\
		                             & = \vc{\stmtIf{enabled_\stmt}{\stmt}{\stmtSkip}}(\expa)	\qedhere
	\end{align*}
\end{proof}

For the non-atomic statements, i.e. just the two nondeterminstic branches, the tranformation introduces one variable for each branch.
However, the case where both branches are disabled (i.e. in a state $\State$ where $\State(\varEnabled{\stmtOne}) = \false$ and $\State(\varEnabled{\stmtTwo}) = \false$) does not correspond to a sub-program of the original program.
Therefore, our implementation explicitly keeps track of the additional constraint that at least one branch must be enabled.

\begin{theorem}[Soundness of Nondeterministic Choice Transformations for Slicing]
	Let $\stmtTransformed$ be the tranformed version of a \HeyVL{} program $\stmt$ according to \Cref{fig:heyvl-transformations-extended}.
	Let $\expa \in \Expectations$ and $\State$ be a program state.
	For the demonic choice $\stmt = \stmtDemonic{\stmtOne}{\stmtTwo}$, we have:
	\begin{align*}
		\vc{\stmtTransformed}(\expa)(\State) \quad & =\quad \begin{cases}
			                                                    \vc{\stmtOne}(\expa)(\State) & \text{\small{}if $\State(\varEnabled{\stmtOne}) = \true$ and $\State(\varEnabled{\stmtTwo}) = \false$,}  \\
			                                                    \vc{\stmtTwo}(\expa)(\State) & \text{\small{}if $\State(\varEnabled{\stmtOne}) = \false$ and $\State(\varEnabled{\stmtTwo}) = \true$,}  \\
			                                                    \vc{\stmt}(\expa)(\State)    & \text{\small{}if $\State(\varEnabled{\stmtOne}) = \true$ and $\State(\varEnabled{\stmtTwo}) = \true$,}   \\
			                                                    \infty                       & \text{\small{}if $\State(\varEnabled{\stmtOne}) = \false$ and $\State(\varEnabled{\stmtTwo}) = \false$.}
		                                                    \end{cases}
		\intertext{Dually, for the angelic choice $\stmt = \stmtAngelic{\stmtOne}{\stmtTwo}$, we have:}
		\vc{\stmtTransformed}(\expa)(\State) \quad & =\quad \begin{cases}
			                                                    \vc{\stmtOne}(\expa)(\State) & \text{\small{}if $\State(\varEnabled{\stmtOne}) = \true$ and $\State(\varEnabled{\stmtTwo}) = \false$,}  \\
			                                                    \vc{\stmtTwo}(\expa)(\State) & \text{\small{}if $\State(\varEnabled{\stmtOne}) = \false$ and $\State(\varEnabled{\stmtTwo}) = \true$,}  \\
			                                                    \vc{\stmt}(\expa)(\State)    & \text{\small{}if $\State(\varEnabled{\stmtOne}) = \true$ and $\State(\varEnabled{\stmtTwo}) = \true$,}   \\
			                                                    0                            & \text{\small{}if $\State(\varEnabled{\stmtOne}) = \false$ and $\State(\varEnabled{\stmtTwo}) = \false$.}
		                                                    \end{cases}
	\end{align*}
\end{theorem}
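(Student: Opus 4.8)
The plan is to unfold the definition of $\vc{\cdot}$ on the transformed program $\stmtTransformed$ using the clauses for sequential composition, demonic (resp.\ angelic) choice, and $\symAssume$ (resp.\ $\symUp\symAssume$) from \Cref{fig:heyvl-semantics}, and then to carry out a pointwise case analysis over the four truth assignments to the fresh Boolean variables $\varEnabled{\stmtOne}$ and $\varEnabled{\stmtTwo}$ in a fixed state $\State$. This mirrors the structure of the atomic-statement proof above; no structural induction is needed here, since the transformation of a nondeterministic choice only instruments the choice itself and leaves the subprograms $\stmtOne,\stmtTwo$ untouched.

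For the demonic case $\stmt = \stmtDemonic{\stmtOne}{\stmtTwo}$, the clauses for sequencing and demonic choice give
\[
\vc{\stmtTransformed}(\expa) = \bigl(\embed{\varEnabled{\stmtOne}} \impl \vc{\stmtOne}(\expa)\bigr) \sqcap \bigl(\embed{\varEnabled{\stmtTwo}} \impl \vc{\stmtTwo}(\expa)\bigr)~.
\]
The key observation, read off the definitions of $\embed{\cdot}$ and $\impl$ in \Cref{fig:heylo-semantics}, is that for every $\expc \in \Expectations$ and every one of the two enable variables $e$: if $\State(e) = \true$ then $\embed{e}(\State) = \infty$ and hence $\bigl(\embed{e} \impl \expc\bigr)(\State) = \expc(\State)$ (the implication returns $\infty$ when $\expc(\State) = \infty$ and $\expc(\State)$ otherwise, which is $\expc(\State)$ in either case); if $\State(e) = \false$ then $\embed{e}(\State) = 0 \leq \expc(\State)$ and hence $\bigl(\embed{e} \impl \expc\bigr)(\State) = \infty$. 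Substituting these into the display and simplifying the binary minimum via $v \sqcap \infty = v$ yields, in the four cases, $\vc{\stmtOne}(\expa)(\State)$, $\vc{\stmtTwo}(\expa)(\State)$, $\vc{\stmtOne}(\expa)(\State) \sqcap \vc{\stmtTwo}(\expa)(\State) = \vc{\stmt}(\expa)(\State)$, and $\infty$, which is exactly the claimed equation.

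The angelic case $\stmt = \stmtAngelic{\stmtOne}{\stmtTwo}$ is handled dually. Unfolding gives
\[
\vc{\stmtTransformed}(\expa) = \bigl(\embed{\neg\varEnabled{\stmtOne}} \coimpl \vc{\stmtOne}(\expa)\bigr) \sqcup \bigl(\embed{\neg\varEnabled{\stmtTwo}} \coimpl \vc{\stmtTwo}(\expa)\bigr)~,
\]
and the dual pointwise identity holds: if $\State(e) = \true$ then $\embed{\neg e}(\State) = 0$ and $\bigl(\embed{\neg e} \coimpl \expc\bigr)(\State) = \expc(\State)$, while if $\State(e) = \false$ then $\embed{\neg e}(\State) = \infty$ and $\bigl(\embed{\neg e} \coimpl \expc\bigr)(\State) = 0$. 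Substituting and using $v \sqcup 0 = v$ produces the same four cases with $0$ replacing $\infty$ in the doubly-disabled subcase.

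I do not expect a substantive obstacle; the only delicate point is the boundary behaviour of $\impl$ and $\coimpl$ when the antecedent takes its extreme value and happens to coincide with the consequent — concretely $\infty \impl \infty = \infty$ and $0 \coimpl 0 = 0$ — which is precisely why the ``enabled'' branch collapses cleanly to the value of the selected branch regardless of whether that value is finite. Once this case is noted, the remainder is a mechanical substitution analogous to the atomic cases already treated.
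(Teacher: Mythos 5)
Your proposal is correct and follows essentially the same route as the paper's proof: unfold the transformed program via the clauses for sequencing and demonic/angelic choice to obtain $(\embed{\varEnabled{\stmtOne}} \impl \vc{\stmtOne}(\expa)) \sqcap (\embed{\varEnabled{\stmtTwo}} \impl \vc{\stmtTwo}(\expa))$ (dually with $\coimpl$ and $\sqcup$), and then observe that an implication whose antecedent is a Boolean embedding collapses to a conditional. The paper stops at the resulting $\mathrm{ite}$ expression and leaves the four-way case split implicit, whereas you spell it out pointwise (including the boundary cases $\infty \impl \infty = \infty$ and $0 \coimpl 0 = 0$), which is a harmless elaboration of the same argument.
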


\begin{proof}
	\emph{Case $\stmt = \stmtDemonic{\stmtOne}{\stmtTwo}$.}
	\begin{align*}
		\vc{\stmtTransformed}(\expa) & = \vc{\stmtDemonic{\stmtSeq{\stmtAssume{\embed{\varEnabled{\stmtOne}}}}{\stmtOne}}{\stmtSeq{\stmtAssume{\embed{\varEnabled{\stmtTwo}}}}{\stmtTwo}}}(\expa)     \\
		                             & = \vc{\stmtSeq{\stmtAssume{\embed{\varEnabled{\stmtOne}}}}{\stmtOne}}(\expa) \sqcap \vc{\stmtSeq{\stmtAssume{\embed{\varEnabled{\stmtTwo}}}}{\stmtTwo}}(\expa) \\
		                             & = (\embed{\varEnabled{\stmtOne}} \impl \vc{\stmtOne}(\expa)) \sqcap (\embed{\varEnabled{\stmtTwo}} \impl \vc{\stmtTwo}(\expa))                                 \\
		                             & = \ite{\varEnabled{\stmtOne}}{\vc{\stmtOne}(\expa)}{\infty} \sqcap \ite{\varEnabled{\stmtTwo}}{\vc{\stmtTwo}(\expa)}{\infty}~.
	\end{align*}

	\emph{Case $\stmt = \stmtAngelic{\stmtOne}{\stmtTwo}$.}
	\begin{align*}
		\vc{\stmtTransformed}(\expa) & = \vc{\stmtAngelic{\stmtSeq{\coAssume{\embed{\neg\varEnabled{\stmtOne}}}}{\stmtOne}}{\stmtSeq{\coAssume{\embed{\neg\varEnabled{\stmtTwo}}}}{\stmtTwo}}}(\expa)     \\
		                             & = \vc{\stmtSeq{\coAssume{\embed{\neg\varEnabled{\stmtOne}}}}{\stmtOne}}(\expa) \sqcup \vc{\stmtSeq{\coAssume{\embed{\neg\varEnabled{\stmtTwo}}}}{\stmtTwo}}(\expa) \\
		                             & = (\embed{\neg\varEnabled{\stmtOne}} \coimpl \vc{\stmtOne}(\expa)) \sqcup (\embed{\neg\varEnabled{\stmtTwo}} \coimpl \vc{\stmtTwo}(\expa))                         \\
		                             & = \ite{\varEnabled{\stmtOne}}{\vc{\stmtOne}(\expa)}{0} \sqcup \ite{\varEnabled{\stmtTwo}}{\vc{\stmtTwo}(\expa)}{0}~. \qedhere
	\end{align*}
\end{proof}

\clearpage

\pagebreak
\section{Benchmarks}
\label{sec:detailed-benchmarks}

This section contains the detailed results of the benchmarks presented in \cref{sec:evaluation}, and we also give sources and descriptions of the programs used in the benchmarks.
The benchmarks were executed on a 2021 Apple MacBook with an M1 Pro chip and 16 GB of RAM.

\newcommand{\featB}{\textsf{B}}
\newcommand{\featP}{\textsf{P}}
\newcommand{\featC}{\textsf{C}}
\newcommand{\featS}{\textsf{S}}
\newcommand{\featL}{\textsf{L}}
\newcommand{\featN}{\textsf{N}}
\newcommand{\featO}{\textsf{O}}

\paragraph{Overview and Features.}
Erroring slices are handled in \cref{sec:results-error-slices} and verifying (verification-witnessing and -preserving slices) are grouped in \cref{sec:results-verifying-slices}.
\Cref{tbl:erroring-results,tbl:verifying-results} show the results of different strategies for finding slices.
For each example program, we show a list of features that are present in the example.
Programs are either boolean (\featB) or probabilistic (\featP).
We have programs that reason about continuous sampling (\featC) taken from and using the technique of~\cite{continuous25}.
Programs with recursion use the specification statement (\featS) as described in \Cref{sec:specification-statements}.
Programs with loops are marked by \featL, and here the Park induction proof rule (\cref{sec:park-induction}) is used.
Some examples make use of explicit non-deterministic choices (\featN).
Finally, we include programs that encode conditioning/observations (\featO) using the definition of the conditional expected value $\cwp{\stmt}(\expa) = \nicefrac{\wp{\stmt}(\expa)}{\wlp{\stmt}(1)}$ as described in \cite{DBLP:journals/entcs/0001KKOGM15}.

\paragraph{Procs and Coprocs.}
In the examples below, we use Caesar's syntax to declare two kinds of procedures in HeyVL, $\symProc$ and $\symcoProc$.
Both kinds of declarations consist of a name, a list of input and output variables, and a pre $\expa$, a post $\expb$, and a body $\stmt$.
The only difference is that $\symProc$s are required to \emph{verify} ($\models \lowerTriple{\expa}{\stmt}{\expb}$) and that $\symcoProc$s are required to \emph{co-verify} ($\models \upperTriple{\expa}{\stmt}{\expb}$).
These dual kinds of specifications were introduced in \Cref{par:heyvl-specifications}.

According to our theory, for slicing in $\symProc$s, we select e.g. \emph{reductive} statements for slicing by default when slicing for errors.
Dually, in $\symcoProc$s, we select \emph{extensive} statements by default when slicing for errors.
Similarly, we select \emph{extensive} statements in $\symProc$s by default when searching for verification-witnessing slices, while dually in $\symcoProc$ we select \emph{reductive} statements.

\begin{figure}[t]

    \centering
    \begin{spreadlines}{0ex}
        \begin{align*}
            \gutter{1}  & \stmtAsgn{x}{init\_x} \symSemi                          \\
            \gutter{2}  & \stmtAsgn{cont}{\exprTrue} \symSemi                     \\
            \gutter{3}  & \stmtAnnotate{invariant}{\varInvariant}{}               \\
            \gutter{4}  & \headerWhile{cont}~\blockStart                          \\
            \gutter{5}  & \quad \stmtRasgn{prob\_choice}{\exprFlip{0.5}} \symSemi \\
            \gutter{6}  & \quad \stmtIfStart{prob\_choice}                        \\
            \gutter{7}  & \quad \quad \stmtAsgn{cont}{\exprFalse}                 \\
            \gutter{8}  & \quad \blockEnd~\stmtElseStart                          \\
            \gutter{9}  & \quad \quad \stmtAsgn{x}{A}                             \\
            \gutter{10} & \quad \blockEnd                                         \\
            \gutter{11} & \blockEnd
        \end{align*}
    \end{spreadlines}
    \caption{Loop template $\stmt(A,\varInvariant)$ with loop invariant $I$ and assigning $A$ to $x$ in the loop, which is verified against different upper bound specifications.}
    \label{fig:geometric_loop_template}
\end{figure}

\begin{table}[t]
    \caption{Diagnostic messages on verifying different instantiations of the loop template in~\cref{fig:geometric_loop_template}.
        The first two columns show whether the verification succeeds and the user diagnostics.
        Columns three and four contain the pre and post.
        The final two columns give the invariant, and the assignment to $x$ that is applied in every but the final loop iteration.
        The first case produces no diagnostic messages as the program verifies and all assumptions are used.
    }
    \label{tbl:geometric_loop_template_instances}
    \begin{minipage}{1\textwidth}
        \begin{center}
            \adjustbox{width=\textwidth}{%
                \begin{tabular}{*{2}{ l } | *{4}{ l }}
                    \toprule
                                 & Diagnostic                               & Pre $\expa$                                 & Post $\expb$        & Invariant $\varInvariant$                           & $A$     \\
                    \midrule
                    $\tikzcmark$ & Verified.                                & $\embed{init\_x \neq 0} \sqcup \frac{1}{4}$ & $\iverson{x == 1}$  & $\mathrm{ite}(cont, \frac{1}{2} \iverson{x == 1} +$ & $x+1$   \\
                                 &                                          &                                             &                     & $\frac{1}{4} \iverson{x == 0}, {\iverson{x == 1}})$ &         \\
                    $\tikzxmark$ & Post is part of the error.               & $init\_x + 1$                               & $x + 1$             & $\exprIte{cont}{x + 1}{x}$                          & $x+1$   \\
                    $\tikzxmark$ & Pre might not entail the invariant.      & $0$                                         & $x$                 & $x$                                                 & $1$     \\
                    $\tikzxmark$ & Invariant might not be inductive.        & $init\_x + 1$                               & $x$                 & $x + 1$                                             & $x+1$   \\
                    $\tikzcmark$ & While could be an if statement.          & $init\_x + 1$                               & $x$                 & $\exprIte{cont}{x + 1}{x}$                          & $x + 1$ \\
                    $\tikzcmark$ & Invariant not necessary for inductivity. & $\embed{true}$                              & $\embed{\neg cont}$ & $\embed{cont}$                                      & $x+1$   \\
                    \bottomrule
                \end{tabular}
            }
        \end{center}
    \end{minipage}%
\end{table}

\subsection{Results for Error-Witnessing Slices}\label{sec:results-error-slices}

\begin{table}[htbp]
    \caption{Results of the two strategies for finding error-witnessing slices.
        The column \enquote{LoC} shows the total lines of code in the program and $|S|$ the number of potentially sliceable statements.
        For each strategy, the columns $|S|$ and $t$ show the number of statements in the slice and the runtime in milliseconds, respectively.
        Features: Boolean (\featB), Probabilistic (\featP), Continuous Sampling (\featC), Specification statement/recursion (\featS), Loops (\featL), Non-determinism (\featN), and Observations/conditioning (\featO).}
    \label{tbl:erroring-results}
    \centering
    \begin{tabular}{>{\footnotesize}llrr *{2}{crr}}
        \toprule
        \multirow{2}{*}{Program}     & \multirow{2}{*}{Feat.}       & \multirow{2}{*}{LoC} & \multirow{2}{*}{$|S|$} &  & \multicolumn{2}{c}{\sliceMethod{first}} &       & \multicolumn{2}{c}{\sliceMethod{opt}}                 \\
        \cmidrule(lr){6-7}\cmidrule(lr){9-10}
                                     &                              &                      &                        &  & $|S|$                                   & $t$   &                                       & $|S|$ & $t$   \\
        \texttt{list\_access\_error} & \featB{}                     & 7                    & 3                      &  & 1                                       & 2     &                                       & 1     & 1     \\
        \texttt{intro\_error}        & \featP{}, \featN{}           & 16                   & 5                      &  & 4                                       & 1     &                                       & 4     & 2     \\
        \texttt{leino05\_1}          & \featB{}                     & 13                   & 2                      &  & 1                                       & 1     &                                       & 1     & 1     \\
        \texttt{park\_2}             & \featP{}, \featL{}           & 16                   & 3                      &  & 2                                       & 1     &                                       & 1     & 1     \\
        \texttt{park\_3}             & \featP{}, \featL{}           & 16                   & 3                      &  & 2                                       & 1     &                                       & 1     & 1     \\
        \texttt{park\_4}             & \featP{}, \featL{}           & 16                   & 3                      &  & 1                                       & 1     &                                       & 1     & 1     \\
        \texttt{condand}             & \featP{}, \featL{}           & 19                   & 3                      &  & 1                                       & 1     &                                       & 1     & 2     \\
        \texttt{fcall}               & \featP{}, \featL{}           & 21                   & 3                      &  & 2                                       & 1     &                                       & 1     & 2     \\
        \texttt{hyper}               & \featP{}, \featL{}           & 26                   & 3                      &  & 1                                       & 1     &                                       & 1     & 2     \\
        \texttt{linear01}            & \featP{}, \featL{}           & 18                   & 3                      &  & 1                                       & 1     &                                       & 0     & 1     \\
        \texttt{prdwalk}             & \featP{}, \featL{}           & 57                   & 3                      &  & 1                                       & 2     &                                       & 1     & 3     \\
        \texttt{six\_sided\_die}     & \featP{}, \featO{}           & 10                   & 2                      &  & 2                                       & 1     &                                       & 2     & 1     \\
        \texttt{geo\_recursive}      & \featP{}, \featS{}           & 17                   & 2                      &  & 1                                       & 2     &                                       & 1     & 3     \\
        \texttt{irwin\_hall16}       & \featP{}, \featL{}, \featC{} & 13                   & 3                      &  & 1                                       & 22    &                                       & 1     & 24    \\
        \texttt{monte\_carlo\_pi4}   & \featP{}, \featL{}, \featC{} & 16                   & 3                      &  & 2                                       & 11    &                                       & 1     & 24    \\
        \texttt{tortoise\_hare12}    & \featP{}, \featL{}, \featC{} & 31                   & 4                      &  & 2                                       & 10400 &                                       & 1     & 10400 \\
        \bottomrule
    \end{tabular}
\end{table}

The results for error-witnessing slices are shown in \Cref{tbl:erroring-results}.
\texttt{list\_access\_error} is a simple boolean program to locate a possible out-of-bounds list access.

The probabilistic program \texttt{intro\_error} corresponds to \Cref{fig:intro-error-preserving} and is based on \cite[Example 4.5]{DBLP:journals/scp/NavarroO22}.
The encoding has nondeterminism built-in to enable slicing the branches of a probabilistic choice.
We show the actual HeyVL code used for our tool in \Cref{fig:intro-error-heyvl}.

The program \texttt{leino05\_1} is based on \cite[Page 6]{DBLP:journals/scp/LeinoMS05}.
The examples \texttt{park\_2}, \texttt{park\_3}, and \texttt{park\_4} correspond to the first three rows of \Cref{tbl:geometric_loop_template_instances}, all erroring instances of a geometric loop instantiated from the template~\Cref{fig:geometric_loop_template}.

The examples \texttt{condand}, \texttt{fcall}, \texttt{hyper}, \texttt{linear01}, \texttt{prdwalk}, \texttt{six\_sided\_die} and \texttt{geo\_recursive} are examples of Park induction, conditioning, and recursion from~\cite{DBLP:journals/pacmpl/SchroerBKKM23} and have been modified such that they do not verify anymore.
The example \texttt{six\_sided\_die} is the erroring part of a (failing) two-part verification of a program that simulates a fair six-sided die using conditioning, shown in \Cref{fig:six-sided-die-heyvl}.
We show the example \texttt{geo\_recursive} in \Cref{fig:geo-recursive-heyvl}, whose recursive procedure call is translated using the specification statement encoding (\Cref{sec:specification-statements}).

Finally, the examples \texttt{irwin\_hall16}, \texttt{monte\_carlo\_pi4}, and \texttt{tortoise\_hare12} are HeyVL encodings of probabilistic programs with sampling from a continuous distribution based on~\cite{continuous25}, modified to fail verification.
All use the proof rule of Park induction for a loop, and encode sampling statements from continuous distributions as described in the above source.
For the example \texttt{irwin\_hall16}, we show the HeyVL file used for our tool in \Cref{fig:irwin-hall-heyvl}.

In all of the above examples, our tool only selects reductive statements in $\symProc$s and dually extensive statements in $\symcoProc$s for slicing (column 4, $|S|$).
By \Cref{theorem:reductive-slice}, we ensure that we obtain an error-witnessing slice.
This behavior is the default for programs that fail verification.

\begin{figure}[htbp]
    \lstinputlisting[language=HeyVL,firstline=3]{examples/intro/binary_sample_error.heyvl}
    \caption{HeyVL code for the probabilistic program \texttt{intro\_error}, corresponding to \Cref{fig:intro-error-preserving}.
        It does not verify, as $\nicefrac{1}{3}$ is not an upper bound for the probability of $\texttt{r} \geq 2$ on termination.
        Our tool slices the $\symUp\symAssume$ statement in line 19, and therefore it indicates that only the $\symUp\symAssume$ statements in lines 12, 14, and 21 are relevant for the verification failure.}
    \label{fig:intro-error-heyvl}
\end{figure}

\begin{figure}[htbp]
    \centering
    \begin{subfigure}[t]{0.48\textwidth}
        \lstinputlisting[language=HeyVL,firstline=15]{examples/conditioning/six-sided-die/die_wp.heyvl}
        \caption{HeyVL code to prove that $\wp{\stmt}(r) \expleq 2.625$. This verification succeeds, and nothing is sliced.}
    \end{subfigure}
    \hfill
    \begin{subfigure}[t]{0.48\textwidth}
        \lstinputlisting[language=HeyVL,firstline=14]{examples/conditioning/six-sided-die/die_wlp.heyvl}
        \caption{HeyVL code to prove that $\wlp{\stmt}(r) \expgeq 0.75$. This verification fails, and both the $\symPost$ and the $\symAssert$ statement are reported as part of the error.}
    \end{subfigure}
    \caption{HeyVL code to check that $\cwp{\stmt}(r) \expleq 3.5 = \nicefrac{2.625}{0.75}$ for the probabilistic program \texttt{six\_sided\_die}, based on the decomposition of conditioning by~\cite{DBLP:journals/entcs/0001KKOGM15}. The last $\symAssert$ statement models a conditioning statement, i.e. $\stmtObserve{r \leq 5}$ (changed from the correct $r \leq 6$). The error report points to a failure of the $\symWlp$ part, and notices that both the post and the $\symObserve$ statement are relevant for the failure.}
    \label{fig:six-sided-die-heyvl}
\end{figure}

\begin{figure}[htbp]
    \lstinputlisting[language=HeyVL, firstline=4]{examples/schroer23/recursion/geo_recursive.heyvl}
    \caption{HeyVL code for the probabilistic program \texttt{geo\_recursive}, based on the example from \cite{DBLP:journals/pacmpl/SchroerBKKM23}.
        The procedure call in line 15 is encoded using a specification statement, as described in \Cref{sec:specification-statements}.
        Our tool slices all but the $\symUp\symAssume$ statement in the specification statement encoding and therefore prints the error message \enquote{pre might not hold}, located in line 15.
        In this case, slicing reveals that the precondition of the recursive call is not satisfied, allowing the user to identify a wrong conditional (line 8).}
    \label{fig:geo-recursive-heyvl}
\end{figure}

\begin{figure}[htbp]
    \lstinputlisting[language=HeyVL,firstline=2]{examples/continuous25/irwin_hall16.heyvl}
    \caption{HeyVL code for the probabilistic program \texttt{irwin\_hall16}, based on the encoding of continuous sampling by~\cite{continuous25}. Our tool reports that the \enquote{pre does not entail the invariant}, based on slicing the encoding of Park induction (generated by the \texttt{@invariant} annotation) and application of \Cref{theorem:park-error-messages}.
        This hints that the invariant is correct, but that the pre is wrong (should be $0.532 \cdot M$ instead of $0.531 \cdot M$).}
    \label{fig:irwin-hall-heyvl}
\end{figure}

\subsection{Results for Verifying Slices}\label{sec:results-verifying-slices}

\begin{table}[hbtp]
    \caption{Results of the different strategies for finding verifying slices.
        The column \enquote{LoC} shows the total lines of code in the program and $|S|$ the number of potentially sliceable statements.
        For each strategy, the columns $|S|$ and $t$ show the number of statements in the slice and the runtime in milliseconds, respectively.
        Features: Boolean (\featB), Probabilistic (\featP), Continuous Sampling (\featC), Specification statement/recursion (\featS), Loops (\featL), Non-determinism (\featN).}
    \label{tbl:verifying-results}
    \centering

    \begin{tabular}{c>{\footnotesize}lllr *{4}{crr}}
        \toprule
                                                            & \multirow{2}{*}{Program}    & \multirow{2}{*}{Feat.} & \multirow{2}{*}{LoC} & \multirow{2}{*}{$|S|$} &  & \multicolumn{2}{c}{\sliceMethod{core}} &     & \multicolumn{2}{c}{\sliceMethod{mus}} &       & \multicolumn{2}{c}{\sliceMethod{sus}} &  & \multicolumn{2}{c}{\sliceMethod{e-f}}                         \\
        \cmidrule(lr){7-8}\cmidrule(lr){10-11}\cmidrule{13-14} \cmidrule{16-17}
                                                            &                             &                        &                      &                        &  & $|S|$                                  & $t$ &                                       & $|S|$ & $t$                                   &  & $|S|$                                 & $t$  &  & $|S|$ & $t$ \\
        \midrule
        \\
                                                            & \texttt{barros\_program1}   & \featB{}               & 11                   & 5                      &  & 5                                      & 0   &                                       & 3     & 10                                    &  & 3                                     & 11   &  & 3     & 7   \\
                                                            & \texttt{barros\_program2}   & \featB{}               & 11                   & 4                      &  & 4                                      & 0   &                                       & 2     & 9                                     &  & 2                                     & 10   &  & 3     & 6   \\
                                                            & \texttt{barros\_program3}   & \featB{}               & 11                   & 5                      &  & 5                                      & 0   &                                       & 3     & 9                                     &  & 3                                     & 11   &  & 5     & 7   \\
                                                            & \texttt{barros\_program4}   & \featB{}               & 17                   & 8                      &  & 5                                      & 0   &                                       & 2     & 10                                    &  & 2                                     & 14   &  & 4     & 6   \\
        \multirow{-5}{*}{\rotatebox{90}{\emph{preserving}}} & \texttt{barros\_program5}   & \featB{}               & 16                   & 7                      &  & 4                                      & 0   &                                       & 2     & 10                                    &  & 2                                     & 13   &  & 3     & 7   \\
                                                            & \texttt{barros\_program6}   & \featB{}               & 10                   & 4                      &  & 3                                      & 0   &                                       & 1     & 9                                     &  & 1                                     & 10   &  & 3     & 6   \\
                                                            & \texttt{barros\_program7}   & \featB{}               & 9                    & 3                      &  & 2                                      & 0   &                                       & 2     & 9                                     &  & 2                                     & 9    &  & 2     & 6   \\
                                                            & \texttt{barros\_program8}   & \featB{}               & 14                   & 4                      &  & 3                                      & 0   &                                       & 3     & 9                                     &  & 3                                     & 10   &  & 4     & 7   \\
                                                            & \texttt{navarro20\_page31}  & \featB{}               & 18                   & 7                      &  & 4                                      & 0   &                                       & 2     & 10                                    &  & 2                                     & 12   &  & 3     & 6   \\
                                                            & \texttt{gehr18\_1}          & \featP{}               & 14                   & 4                      &  & 4                                      & 0   &                                       & 2     & 9                                     &  & 2                                     & 9    &  & 2     & 5   \\
                                                            & \texttt{gehr18\_2}          & \featP{}               & 14                   & 4                      &  & 4                                      & 0   &                                       & 2     & 9                                     &  & 2                                     & 12   &  & 2     & 6   \\
                                                            & \texttt{gehr18\_3}          & \featP{}               & 14                   & 3                      &  & 2                                      & 0   &                                       & 1     & 9                                     &  & 1                                     & 8    &  & 2     & 5   \\
                                                            & \texttt{navarro20\_ex4\_3}  & \featP{}               & 12                   & 3                      &  & 3                                      & 0   &                                       & 1     & 8                                     &  & 1                                     & 9    &  & 2     & 5   \\
                                                            & \texttt{navarro20\_ex4\_5}  & \featP{}               & 20                   & 10                     &  & 10                                     & 1   &                                       & 9     & 203                                   &  & 9                                     & 552  &  & 9     & 25  \\
                                                            & \texttt{navarro20\_ex5\_8}  & \featP{}, \featL{}     & 32                   & 5                      &  & 5                                      & 20  &                                       & 5     & 3700                                  &  & 5                                     & 3690 &  & -     & -   \\
                                                            & \texttt{navarro20\_figure7} & \featP{}               & 73                   & 37                     &  & 21                                     & 0   &                                       & 12    & 1370                                  &  & -                                     & -    &  & 13    & 21  \\
                                                            & \texttt{navarro20\_figure8} & \featP{}               & 73                   & 37                     &  & 13                                     & 0   &                                       & 7     & 53                                    &  & -                                     & -    &  & 7     & 8   \\
                                                            & \texttt{2drwalk}            & \featP{}, \featL{}     & 219                  & 75                     &  & 18                                     & 7   &                                       & 7     & 116                                   &  & -                                     & -    &  & 10    & 25  \\
                                                            & \texttt{bayesian\_network}  & \featP{}, \featL{}     & 102                  & 26                     &  & 4                                      & 0   &                                       & 1     & 19                                    &  & 1                                     & 10   &  & 3     & 7   \\
                                                            & \texttt{prspeed}            & \featP{}, \featL{}     & 40                   & 9                      &  & 6                                      & 1   &                                       & 6     & 25                                    &  & 6                                     & 67   &  & -     & -   \\
                                                            & \texttt{rdspeed}            & \featP{}, \featL{}     & 43                   & 11                     &  & 10                                     & 2   &                                       & 8     & 61                                    &  & 8                                     & 226  &  & 8     & 15  \\
                                                            & \texttt{rdwalk}             & \featP{}, \featL{}     & 19                   & 5                      &  & 4                                      & 0   &                                       & 3     & 11                                    &  & 3                                     & 12   &  & 3     & 11  \\
                                                            & \texttt{sprdwalk}           & \featP{}, \featL{}     & 23                   & 6                      &  & 5                                      & 0   &                                       & 4     & 118                                   &  & 4                                     & 15   &  & 4     & 8   \\
                                                            & \texttt{unif\_gen4\_wlp}    & \featP{}, \featL{}     & 123                  & 7                      &  & 5                                      & 58  &                                       & 3     & 1770                                  &  & 3                                     & 2560 &  & 3     & 191 \\
                                                            & \texttt{burglar\_alarm\_wp} & \featP{}               & 35                   & 4                      &  & 4                                      & 0   &                                       & 1     & 10                                    &  & 1                                     & 11   &  & 2     & 6   \\
        \\
        \midrule
        \\
                                                            & \texttt{park\_1}            & \featP{}, \featL{}     & 16                   & 3                      &  & 3                                      & 0   &                                       & 3     & 11                                    &  & 3                                     & 10   &  & 3     & 9   \\
                                                            & \texttt{park\_5}            & \featP{}, \featL{}     & 16                   & 3                      &  & 3                                      & 0   &                                       & 2     & 9                                     &  & 2                                     & 9    &  & 3     & 7   \\
                                                            & \texttt{park\_6}            & \featP{}, \featL{}     & 16                   & 3                      &  & 1                                      & 0   &                                       & 1     & 8                                     &  & 1                                     & 8    &  & 1     & 5   \\
                                                            & \texttt{algorithm\_r}       & \featP{}, \featL{}     & 16                   & 4                      &  & 4                                      & 105 &                                       & 3     & 125                                   &  & 3                                     & 132  &  & 3     & 18  \\
        \multirow{-5}{*}{\rotatebox{90}{\emph{witnessing}}} & \texttt{bn\_passified}      & \featP{}, \featN{}     & 47                   & 25                     &  & 17                                     & 370 &                                       & 10    & 3390                                  &  & 10                                    & 3810 &  & -     & -   \\
        \\
        \bottomrule
    \end{tabular}
\end{table}

\Cref{tbl:erroring-results} shows the results for verifying slices, separated by whether we are searching for verification-\emph{preserving} or -\emph{witnessing} slices.

\paragraph{Verification-preserving slices.}\label{sec:benchmarks-verification-preserving}
The first eight examples are from \cite{DBLP:journals/fac/BarrosCHP12}.
In particular, \texttt{barros\_program6} refutes the claim of theirs that forwards and backwards reasoning is necessary to find minimal slices (their Program 6), as we explain in \Cref{sec:related-work}.
The HeyVL program used for the tool is shown in \Cref{fig:barros6-heyvl}.
With the same goal, program \texttt{navarro20\_page31} is based on the example from \cite[Page 31]{DBLP:journals/scp/NavarroO22} and the HeyVL version is shown in \Cref{fig:navarro20-heyvl}.

The examples \texttt{gehr18\_1}, \texttt{gehr18\_2}, and \texttt{gehr18\_3} are based on Bayesian networks encoded as probabilistic programs, from \cite{DBLP:conf/pldi/GehrMTVWV18}.

The following examples \texttt{2drwalk}, \texttt{bayesian\_network}, \texttt{prspeed}, \texttt{rdspeed}, \texttt{rdwalk}, \texttt{sprdwalk}, and \texttt{unif\_gen4\_wlp} are probabilistic programs from \cite{DBLP:journals/pacmpl/SchroerBKKM23}.

All of the above examples have not been modified from their source, except to add the necessary \texttt{@slice\_verify} annotations to let our tool select the assignments for slicing.

\paragraph{Verification-witnessing slices.}
The examples \texttt{park\_1}, \texttt{park\_5}, and \texttt{park\_6} correspond to the instantiations of~\Cref{fig:geometric_loop_template} by the respective lines in \Cref{tbl:geometric_loop_template_instances}.
\texttt{algorithm\_r} corresponds to our \Cref{fig:intro-verification-preserving} and its HeyVL version is shown in \Cref{fig:algorithm-r-heyvl}.
Finally, \texttt{bn\_passified} is a simplified version of \cite[Figure 7]{DBLP:journals/scp/NavarroO22} with the \emph{passification} transformation applied, turning assignments into assumptions.

For all of these examples, our tool automatically selects extensive statements in $\symProc$s and dually reductive statements in $\symcoProc$s when slicing verifying programs.
By \Cref{theorem:extensive-slice}, this ensures we obtain verification-witnessing slices.
Their number is reported in column 4, $|S|$.

\begin{figure}[htbp]
    \lstinputlisting[language=HeyVL,firstline=4]{examples/forwards_backwards_claims/barros12_program6.heyvl}
    \caption{HeyVL code for the program \texttt{barros\_program6}, corresponding to \Cref{fig:barros6-heyvl} (\cite[Program 6]{DBLP:journals/fac/BarrosCHP12}).
        The \texttt{@slice\_verify} annotation lets our tool select the assignments for slicing.
        Our method can slice the last assignment, while the forwards- or backwards-reasoning methods by \citeauthor{DBLP:journals/fac/BarrosCHP12} cannot.}
    \label{fig:barros6-heyvl}
    \vspace{-1\baselineskip} %
\end{figure}

\begin{figure}[htbp]
    \lstinputlisting[language=HeyVL,firstline=2]{examples/forwards_backwards_claims/navarro20_page31.heyvl}
    \caption{HeyVL code for the program \texttt{navarro20\_page31} corresponding to the example from \cite[Page 31]{DBLP:journals/scp/NavarroO22}.
        The \texttt{@slice\_verify} annotation has Brutus select the assignments for slicing.
        Our method can slice all marked statements, while \citeauthor{DBLP:journals/scp/NavarroO22} claim it needs both forwards and backwards reasoning.}
    \label{fig:navarro20-heyvl}
\end{figure}

\begin{figure}[htbp]
    \lstinputlisting[language=HeyVL]{examples/algorithm_r.heyvl}
    \caption{HeyVL code for the program \texttt{algorithm\_r}, corresponding to \Cref{fig:intro-verification-preserving}. After desugaring the annotated while loop according to the schema of \Cref{fig:heyvl-park-induction}, our tool determines the assumption (II) can be sliced.
        By \Cref{theorem:park-verification-messages}, the tool will print the message \enquote{While loop could be an if statement} to the user.}
    \label{fig:algorithm-r-heyvl}
\end{figure}

\begin{figure}[htbp]
    \centering
    \newcommand{\legendcols}{2}
    \newcommand{\legendstyle}{at={(0.45,1.05)},anchor=south}
    \begin{subfigure}[t]{.45\textwidth}
        \numberruntimeplot{sections/benchmarks/result-verifying.csv}%
        {core,
            mus,
            sus,
            exists_forall}%
        {{\sliceMethod{core}},
            {\sliceMethod{mus}},
            {\sliceMethod{sus}},
            {\sliceMethod{exists-forall}}}%
        {slice size}%
        {time in milliseconds}%
        {states}%
        {27}%
        \caption{\marknewtext{Absolute size of slices vs.\ computation time.}}
        \label{fig:evaluation:scatter_plots:verifying_slices:absolute}
    \end{subfigure}
    \hfill
    \begin{subfigure}[t]{.45\textwidth}
        \relativenumberruntimeplot{sections/benchmarks/result-verifying.csv}%
        {core,
            mus,
            sus,
            exists_forall}%
        {{\sliceMethod{core}},
            {\sliceMethod{mus}},
            {\sliceMethod{sus}},
            {\sliceMethod{exists-forall}}}%
        {slice size}%
        {time in milliseconds}%
        {states}%
        {27}%
        \caption{\marknewtext{Relative size of slices, compared to total sliceable statements, vs.\ computation time.}}
        \label{fig:evaluation:scatter_plots:verifying_slices:relative}
    \end{subfigure}
    \caption{\marknewtext{Scatter plots showing the size of verifying slices computed by different methods with the time needed (log scale). OOR represents computations that did not finish within the timeout of 30 seconds.}}
    \label{fig:evaluation:scatter_plots:verifying_slices}
\end{figure}
\begin{figure}[t]
    \centering
    \newcommand{\legendcols}{2}
    \newcommand{\legendstyle}{at={(0.45,1.05)},anchor=south}

    \relativeslicesizeruntimeplotscatter{sections/benchmarks/result-verifying.csv}%
    {core,
        mus,
        sus,
        exists_forall}%
    {{\sliceMethod{core}},
        {\sliceMethod{mus}},
        {\sliceMethod{sus}},
        {\sliceMethod{exists-forall}}}%
    {relative removal}%
    {time in milliseconds}%
    {states}%
    {27}%
    \caption{\marknewtext{Scatter plot showing the ratio of statements removed from verifying slices computed by different methods compared to the optimal slice computed by \sliceMethod{sus}, with the time needed (log scale). OOR represents computations that did not finish within the timeout of 30 seconds.}}
    \label{fig:evaluation:scatter_plots:verifying_slices:relative_removal:scatter}
\end{figure}

\end{document}